\documentclass[journal]{IEEEtran}
\usepackage{amsmath,amssymb,amsfonts}
\usepackage{array,amsthm}
\usepackage[caption=false,font=normalsize,labelfont=sf,textfont=sf]{subfig}
\usepackage{textcomp}
\usepackage{stfloats}
\usepackage{url}
\usepackage{verbatim}
\usepackage{graphicx}
\usepackage{cite}
\usepackage{algorithm,algpseudocode,algorithmicx}
\usepackage{xcolor}
\usepackage{booktabs}
\usepackage{comment}
\usepackage[flushleft]{threeparttable}
\usepackage[font=small,labelfont=bf]{caption}
\usepackage{tikz}
\usetikzlibrary{arrows,shapes,chains}
\usepackage{hyperref}

\newtheorem{assumption}{Assumption}

\newtheorem{lemma}{Lemma}
\newtheorem{problem}{Problem}

\newtheorem{remark}{Remark}
\newtheorem{theorem}{Theorem}
\allowdisplaybreaks[4]

\begin{document}
\title{Leader-follower Attitude Synchronization of Rigid-body Systems on $\mathrm{SO}(3)$}
\author{Yiliang Li, Jun-e Feng, Abdelhamid Tayebi, \IEEEmembership{Fellow, IEEE}
\thanks{This work was supported in part  by the National Sciences and Engineering Research Council of Canada (NSERC), under the grant NSERC-DG RGPIN 2020-06270, in part by the National Natural Science Foundation of China, under the grant 12601872, 62673293, 62350037, in part by the Research Fund for the Taishan Scholar Project of Shandong Province of China under the grant tstp20221103, in part by the Science Center Program of National Natural Science Foundation of China under Grant 62188101, in part by the China Postdoctoral Science Foundation under grant 2025M783121, and in part by Shandong Postdoctoral Science Foundation under grant SDZZ-ZR-202501341.
}
\thanks{A preliminary version of
this work has been submitted to IFAC World Congress 2026 \cite{Liyiliang2026}.}
\thanks{Yiliang Li, Jun-e Feng and Abdelhamid Tayebi are with School of Mathematics, Shandong University, Jinan, Shandong 250100, P.R. China (e-mail: liyiliang1994@126.com, fengjune@sdu.edu.cn). }
\thanks{Abdelhamid Tayebi is also with Department of Electrical Engineering, Lakehead University, Thunder Bay, Ontario, Canada (e-mail: atayebi@lakeheadu.ca).}}

\maketitle

\begin{abstract}
This paper addresses the leader-follower attitude synchronization problem on $\mathrm{SO}(3)$ for a group of heterogeneous rigid body systems.  The reference attitude, represented by a virtual leader, is accessible only to a subset of agents in the network. The follower communication graph is assumed to be undirected and acyclic, and every agent is connected to the virtual leader through a path in the corresponding augmented graph (including the virtual leader). An observer-based distributed control strategy, endowed with almost global asymptotic stability guarantees, is proposed to synchronize all rigid-body attitudes with a desired time-varying reference attitude. An observer-based distributed control, with reduced complexity, as well an observerless distributed control strategy are also developed for the constant-reference case, with almost global asymptotic stability guarantees.
Numerical simulations are presented to demonstrate the effectiveness and performance of the proposed distributed control strategies.
\end{abstract}

\begin{IEEEkeywords}
Leader-follower attitude synchronization, distributed control, multi-agent systems, rigid body systems.
\end{IEEEkeywords}

\section{Introduction}

The attitude synchronization of multiple rigid-body systems is an important problem from both theoretical and practical perspectives and has generated significant research interest over the past two decades \cite{abdessameud2013motion}. It plays a fundamental role in a wide range of applications involving networks of autonomous vehicles, where coordinated orientation is essential for achieving cooperative behavior, reliable sensing, and mission-level objectives.
The attitude synchronization problem for multiple rigid-body systems is commonly classified into two main categories: the leaderless and leader-follower formulations. In the leaderless case, the agents seek to reach a consensus on a common attitude through mutual coordination, whereas in the leader-follower case, the group is required to synchronize to a prescribed reference attitude (represented by a virtual leader) that is available to one agent or a subset of agents. Both formulations require the design of \textit{distributed control laws} that rely solely on local information exchange between neighboring agents over a communication graph.

Most of the existing attitude synchronization strategies for a group of rigid body systems, including both leaderless and leader-follower scenarios, were developed using the Euler angles \cite{zhao2021data,guo2022distributed}, the modified Rodrigues parameters \cite{tang2022event,ren2009distributed,jin2025event,wei2018finite,jin2021event,dimarogonas2009leader,meng2010distributed,zou2016distributed} and the unit quaternion \cite{abdessameud2009attitude,abdessameud2012attitude, huang2021global,he2025leader,cai2016leader}.
The three aforementioned attitude representations have been extensively used in the literature because they enable classical synchronization techniques developed for multi-agent systems in Euclidean spaces to be adapted to attitude synchronization problems.
However, all these representations fail to provide a unique and global representation of the attitude and most of the developed attitude synchronization strategies, relying on these representations, are local and/or with limited stability guarantees. The rotation matrix on the special orthogonal group $\mathrm{SO}(3)$ is the only attitude representation that is both global and unique.
Unfortunately, synchronization algorithms developed in Euclidean spaces cannot be directly extended to the compact matrix Lie group $\mathrm{SO}(3)$, since such approaches typically rely on linear operations that are not suitable on curved manifolds.
Consequently, attitude synchronization on $\mathrm{SO}(3)$ necessitates dedicated control techniques that respect the geometric properties of the manifold.
In addition, the topological properties of $\mathrm{SO}(3)$ preclude the existence of any smooth feedback control law that stabilizes globally the attitude \cite{koditschek1989application,Bhat_SCL2000}.
These inherent geometric and topological constraints introduce fundamental complexity to the attitude synchronization on $\mathrm{SO}(3)$, thereby motivating the design of specialized control laws that explicitly account for the intrinsic structure of $\mathrm{SO}(3)$.
Attitude synchronization schemes  on $\mathrm{SO}(3)$, with local convergence results, have been proposed in \cite{sarlette2009autonomous,thunberg2016consensus,deng2021attitude,peng2018specified,zou2018velocity}. For the leaderless synchronization problem on $\mathrm{SO}(3)$, almost global synchronization strategies were reported in \cite{Tron_TAC2012,markdahl2021synchronization}, while hybrid feedback techniques enabling global asymptotic synchronization on $\mathrm{SO}(3)$ were developed in \cite{Bou_Ber_Tayebi_TAC2026}.
In the preliminary conference version of this work \cite{Liyiliang2026}, we proposed a distributed control law endowed with almost global asymptotic stability guarantees, allowing to synchronize the attitude of all rigid body systems to a constant reference attitude accessible solely to a single agent.
Nevertheless, the existing leader-follower attitude synchronization strategies on $\mathrm{SO}(3)$ for a time-varying reference attitude merely yield local convergence, and fail to provide rigorous stability guarantees.

In the present paper, we address the leader-follower attitude synchronization problem for a network of rigid-body systems on $\mathrm{SO}(3)$, where the reference attitude, described by a virtual leader, is either constant or time-varying. The reference attitude is only available to a subset of agents in the network, and the local information exchange is carried according to an undirected and acyclic graph (without the leader), wherein each vertex is connected to the leader via a path in the augmented graph including the virtual leader vertex.
The main contributions of this work are summarized as follows:

\begin{enumerate}
    \item We propose an observer-based distributed control strategy on $\mathrm{SO}(3)$, that synchronizes the attitude of all rigid body systems to constant and time-varying reference attitudes with almost global asymptotic stability guarantees\footnote{An equilibrium point is almost globally asymptotically stable if it is stable and attractive from all initial conditions except from a set of zero Lebesgue measure.}. To the best of our knowledge, there is no work in the available literature that achieves such strong stability results for the leader-follower problem on $\mathrm{SO}(3)$ with time-varying reference trajectory.
    \item We propose an observerless distributed control law endowed with almost global asymptotic stability guarantees, driving the attitude of all rigid body systems to align with a constant reference attitude.
\end{enumerate}

The remainder of this paper is organized as follows.
Section \ref{sec2} presents notations and fundamental graph theory concepts adopted throughout this paper.
The problem formulation is provided in Section \ref{sec3}, and two useful lemmas related to the interconnection graph are given in Section \ref{lemmas}.
The attitude synchronization problems subject to time-varying and constant reference attitudes are addressed in Section \ref{timevarying} and Section \ref{constant}, respectively.
Section \ref{simulation} demonstrates the performance of the proposed distributed control law via some simulations.
Finally, concluding remarks are drawn in Section \ref{conclusion}.

\section{Preliminaries}\label{sec2}

\subsection{Notation}
The sets of real numbers, $n$-dimensional vectors and $n$-by-$m$ matrices are denoted by $\mathbb{R},\mathbb{R}^{n}$ and $\mathbb{R}^{n\times m}$, respectively.
The set of unit vectors in $\mathbb{R}^n$ is defined as $\mathbb{S}^{n-1}=\{x\in\mathbb{R}^n|\|x\|=1\}$.
The $i$-th row of a given matrix $M\in\mathbb{R}^{n\times m}$ is denoted by $\mathrm{Row}_i(M)$.
We define the Euclidean norm of $x\in\mathbb{R}^{n}$ as $\|x\|=\sqrt{x^{\top}x}$.
The Euclidean inner product of $A,B\in\mathbb{R}^{n\times m}$ is defined as $\langle\langle A,B\rangle\rangle=\mathbf{tr}(A^{\top}B)$, and the Frobenius norm of $A\in\mathbb{R}^{n\times m}$ is defined as $\|A\|_F=\sqrt{\mathbf{tr}(A^{\top}A)}$.
The $n$-dimensional null vector and the $n$-by-$m$ null matrix are denoted as $\mathbf{0}_n$ and $\mathbf{0}_{n\times m}$, respectively.
An $n$-dimensional column vector with each element being $1$ is denoted as $\mathbf{1}_n$.
An $n$-dimensional identity matrix is denoted as $I_n$.
The Kronecker product is denoted by $\otimes$.

The special orthogonal group is defined as $\mathrm{SO}(3)=\{R\in\mathbb{R}^{3\times 3}|R^{\top}R=RR^{\top}=I_3,\mathrm{det}(R)=1\}$, where $\mathrm{det}(R)$ stands for the determinant of $R$.
The \textit{Lie\ algebra} of $\mathrm{SO}(3)$ is $\mathfrak{so}(3)=\{\Omega\in\mathbb{R}^{3\times 3}|\Omega=-\Omega^{\top}\}$. We define the map $(\cdot)^{\times}:\mathbb{R}^3\rightarrow\mathfrak{so}(3)$ such that $x^{\times}y=x\times y$ for any $x,y\in\mathbb{R}^3$, where $\times$ denotes the vector cross-product on $\mathbb{R}^3$. The matrix $x^{\times}$ is a $3$-by-$3$ skew-symmetric matrix, generated by $x=[x_1\ x_2\ x_3]^{\top}$, as follows:
\[
x^{\times}=\left[
\begin{array}{ccc}
0&-x_3&x_2\\
x_3&0&-x_1\\
-x_2&x_1&0\\
\end{array}
\right]_.
\]
Let $\mathbf{vex}:\mathfrak{so}(3)\rightarrow\mathbb{R}^3$ be a map satisfying $(\mathbf{vex}(\Omega))^{\times}=\Omega, \forall\Omega\in\mathfrak{so}(3)$ and $\mathbf{vex}(x^{\times})=x,\forall x\in\mathbb{R}^3$.
The projection map $\mathbb{P}_a:\mathbb{R}^{3\times 3}\rightarrow\mathfrak{so}(3)$ on the Lie algebra $\mathfrak{so}(3)$ is defined by $\mathbb{P}_a(A):=\frac{1}{2}(A-A^{\top})$.
The composition map $\psi:=\mathbf{vex}\circ\mathbb{P}_a$ is defined as $\psi(A)=\mathbf{vex}(\mathbb{P}_a(A))=\frac{1}{2}(A-A^{\top})=\frac{1}{2}[a_{32}-a_{23}\ a_{13}-a_{31}\ a_{21}-a_{12}]^{\top}$ for $A=[a_{i,j}]\in\mathbb{R}^{3\times 3}$.
The normalized attitude norm on $\mathrm{SO}(3)$ is defined as $|R|_I=\frac{1}{2}\sqrt{\mathbf{tr}(I_3-R)}$.
For $A\in\mathbb{R}^{3\times 3}$, we define the matrix $\mathbf{E}(A):=\frac{1}{2}(\mathbf{tr}(A)I_3-A^{\top})$.
We denote by \(\mathcal{E}_v(A)\) the set containing exactly one representative from each antipodal pair of unit eigenvectors of a symmetric matrix $A$. Throughout this paper, for notational convenience and simplicity, the time argument is omitted whenever no ambiguity arises.
\subsection{Graph Theory}
We formally define a graph as $\mathcal{G}=(\mathcal{V},\mathcal{E})$, where $\mathcal{V}=\{1,\ldots,N\}$ is a set of vertices, $\mathcal{E}\subset\mathcal{V}\times\mathcal{V}$ is a set of edges of $\mathcal{G}$ with an edge between vertices $i,j$ to be denoted as $(i,j)$.
The neighborhood $\mathcal{N}_{\mathcal{G}}(i)$ of the vertex $i$ in the graph $\mathcal{G}$ is defined as $\mathcal{N}_{\mathcal{G}}(i)=\{j\in\mathcal{V}|(j,i)\in\mathcal{E}\}$.
A sequence of distinct vertices $(i_1,i_2,\ldots,i_m)$ is said to be a path of length $m$ in $\mathcal{G}$ if $(i_k,i_{k+1})\in\mathcal{E}$ for $k\in\{1,\ldots,m-1\}$.
A graph $\mathcal{G}$ is undirected if for any two vertices $i,j\in\mathcal{V}$, $(i,j)\in\mathcal{E}$ implies $(j,i)\in\mathcal{E}$.
If a path $(i_1,\ldots,i_m)$ in the graph $\mathcal{G}$ satisfies $i_1=i_m$, then $(i_1,\ldots,i_m)$ is called a cycle.
A graph is said to be acyclic if it contains no cycles.

\section{Problem formulation}\label{sec3}

Consider a group of $N$ heterogeneous\footnote{The rigid body systems do not necessarily have the same inertia matrix.} rigid body systems, the rotational dynamics of which is as follows:
\begin{equation}\label{followeri}
\left\{
\begin{aligned}
\dot{R}_i=&R_i\omega_i^{\times},\\
J_i\dot{\omega}_i=&-\omega_i\times J_i\omega_i+\tau_i,\\
\end{aligned}
\right.
\end{equation}
for $i=1,\ldots,N$,
where $R_i\in \mathrm{SO}(3)$ is the attitude of the $i$-th rigid body, $\omega_i\in\mathbb{R}^3$ is the  angular velocity of the $i$-th rigid body with respect to the inertial frame expressed in the body-attached frame, $J_i\in\mathbb{R}^{3\times 3}$ is the symmetric positive definite inertia matrix of the $i$-th rigid body, and $\tau_i\in\mathbb{R}^3$ is the control torque of the $i$-th rigid body.
Let the interconnection between the rigid body systems be described by an undirected graph $\mathcal{G}=(\mathcal{V},\mathcal{E})$, where $\mathcal{V}=\{1,\ldots,N\}$ stands for the set of rigid body systems, $\mathcal{E}\subset\mathcal{V}\times\mathcal{V}$ represents the edge set with $(i,j)\in\mathcal{E}$ indicating that the $j$-th rigid body receives the information from the $i$-th rigid body.

Note that for any two vertices $i,j$ in an undirected graph $\mathcal{G}$, $(i,j)\in\mathcal{E}$ implies $(j,i)\in\mathcal{E}$, and vice versa. Since the convergence of $R_iR_j^{\top}$ (respectively $R^{\top}_iR_j$) implies the convergence of $R_jR_i^{\top}$ (respectively  $R_j^{\top}R_i$), the stability analysis could be simplified by specifying an arbitrary virtual direction for any two interconnected agents in the graph $\mathcal{G}$ (see, for instance, \cite{Arcak2008,Bough_Tay_2025}).
Let $\overline{\mathcal{E}}$ denote the edge set obtained by arbitrarily removing one of the edges between any two interconnected agents, and let  $\overline{\mathcal{G}}=(\mathcal{V},\overline{\mathcal{E}})$ denote the graph with a virtual orientation specified in $\mathcal{G}$.

Our objective is to design a control torque $\tau_i$ such that the attitude of the $i$-th rigid body is synchronized to the desired attitude $R_0(t)$.
The desired attitude $R_0$ is represented by a virtual leader $0$, and the rigid body systems $i=1,2,\ldots,N$ are regarded as followers.
We assume that the desired attitude $R_0$ is available to the rigid body systems $i=1,\ldots,n$, where $n<N$. Let $\mathcal{V}_0=\mathcal{V}\cup\{0\}$, $\mathcal{E}_0=\mathcal{E}\cup\{(0,i)\subset\mathcal{V}_0\times\mathcal{V}_0|i=1,\ldots,n\}$ and $\overline{\mathcal{E}}_0=\overline{\mathcal{E}}\cup\{(0,i)\subset\mathcal{V}_0\times\mathcal{V}_0|i=1,\ldots,n\}$. Let
 $\mathcal{G}_0=(\mathcal{V}_0,\mathcal{E}_0)$ denote the augmented graph including the leader $0$ and let $\overline{\mathcal{G}}_0=(\mathcal{V}_0,\overline{\mathcal{E}}_0)$ be the augmented oriented graph.

 In this paper, the interconnection graph $\mathcal{G}$ and the augmented graph $\mathcal{G}_0$ must satisfy the following assumption.
\begin{assumption}\label{assumpgraph_original}
The interconnection graph $\mathcal{G}$ is  undirected and acyclic, and each vertex of the graph $\mathcal{G}$ is reachable from the leader $0$ through a path in the augmented graph $\mathcal{G}_0$.
\end{assumption}

Assumption \ref{assumpgraph_original} ensures that the augmented oriented graph $\overline{\mathcal{G}}_0$ possesses exactly $N$ edges, \emph{i.e.}, $|\overline{\mathcal{E}}_0|=N$.


In this work we aim to solve the following problems:
\begin{problem}\label{pro1}
Consider a network of $N$ rigid body systems governed by the rotational dynamics in \eqref{followeri}.
The desired attitude and the desired angular velocity are generated by
\begin{equation}\label{leader}
\setlength{\arraycolsep}{1pt}
\begin{array}{rcl}
\dot{R}_0&=&R_0\omega_0^{\times},\\
\dot{\omega}_0&=&\sigma_0,\\
\dot{\sigma}_0&=&P\sigma_0,\\
\end{array}
\end{equation}
where the matrix $P\in\mathbb{R}^{3\times 3}$ has simple non-zero imaginary eigenvalues, and all other eigenvalues are with negative real parts.
These constraints on $P$ guarantee that the angular acceleration $\sigma_0$ and the angular velocity $\omega_0$ are bounded. Under Assumption \ref{assumpgraph_original}, we aim to design a distributed control torque $\tau_i$, $i\in\mathcal{V}$, ensuring almost global asymptotic synchronization of all attitudes of the rigid body systems to the time-varying reference attitude $R_0(t)$.
\end{problem}

\begin{problem}\label{pro1}
Consider a network of $N$ rigid body systems governed by the rotational dynamics in \eqref{followeri}.
Under Assumption \ref{assumpgraph_original}, we aim to design a distributed control torque $\tau_i$, $i\in\mathcal{V}$, such that all the attitudes of the rigid body systems synchronize to a constant reference attitude $R_0$, with almost global asymptotic stability.
\end{problem}

\section{Useful Lemmas}\label{lemmas}
In this section, we establish several useful lemmas related to the interconnection graph for subsequent stability analysis.

The first lemma is related to the Laplacian matrix of the interconnection graph $\mathcal{G}$.
Let  $\mathcal{M}_{\overline{\mathcal{G}}_0}=\{1,\ldots,N\}$ denote the set of edges in the graph $\overline{\mathcal{G}}_0$, and let $\mathcal{M}_{\overline{\mathcal{G}}}=\{n+1,n+2,\ldots,N\}$ denote the set of edges in the graph $\overline{\mathcal{G}}$.
Let $(\mathcal{M}_{\overline{\mathcal{G}}})_i^+\subset\mathcal{M}_{\overline{\mathcal{G}}}$ be the set of edges in the graph $\overline{\mathcal{G}}$ with $i$ being the head of the edges and $(\mathcal{M}_{\overline{\mathcal{G}}})_i^-\subset\mathcal{M}_{\overline{\mathcal{G}}}$ be the set of edges in the graph $\overline{\mathcal{G}}$ with $i$ being the tail of the edges.
We construct the matrix $L\in\mathbb{R}^{N\times (N-n)}$, according to the oriented graph $\overline{\mathcal{G}}$, as follows:
\begin{equation}\label{LG}
L_{ik}=\left\{
\begin{array}{ll}
1, & n+k\in(\mathcal{M}_{\overline{\mathcal{G}}})_i^-, \\
-1,     & n+k\in(\mathcal{M}_{\overline{\mathcal{G}}})_i^+,\\
0,&\mathrm{otherwise}.
\end{array}
\right.
\end{equation}
It is clear that $LL^{\top}$ is the Laplacian matrix of the graph $\mathcal{G}$.

\begin{lemma}\label{le1}
Let Assumption \ref{assumpgraph_original} hold and let $\Xi=\mathrm{diag}(\Xi_1,\ldots,\Xi_N)$ with
\begin{equation}\label{alpha}
\Xi_i=\left\{
\begin{array}{ll}
  1,   & i\in\{1,\ldots,n\}, \\
  0,   & i\in\{n+1,\ldots,N\}.
\end{array}
\right.
\end{equation}
Then, the matrix $\Xi+LL^{\top}$ is symmetric positive definite.
\end{lemma}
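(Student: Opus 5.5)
Symmetry is immediate: $\Xi$ is diagonal and $LL^{\top}$ is symmetric. For definiteness we use the quadratic form. For any $x\in\mathbb{R}^N$,
\[
x^{\top}(\Xi+LL^{\top})x=\sum_{i=1}^{n}x_i^2+\|L^{\top}x\|^2 \ge 0 ,
\]
so the matrix is positive semidefinite. It remains to show that $x^{\top}(\Xi+LL^{\top})x=0$ forces $x=\mathbf{0}_N$.

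Suppose the form vanishes at $x$. Both nonnegative terms must be zero, so $x_i=0$ for $i\in\{1,\ldots,n\}$ and $L^{\top}x=\mathbf{0}_{N-n}$. By the definition (\ref{LG}), the $k$-th column of $L$ has exactly two nonzero entries: $+1$ at the tail and $-1$ at the head of edge $n+k$ in $\overline{\mathcal{G}}$. Hence $(L^{\top}x)_k=x_a-x_b$, where edge $n+k$ joins $a$ and $b$. The condition $L^{\top}x=\mathbf{0}$ therefore says $x_a=x_b$ for every edge of $\overline{\mathcal{G}}$, and so for every edge of $\mathcal{G}$, since $\overline{\mathcal{E}}$ keeps one orientation of each undirected edge.

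Now use the reachability part of Assumption \ref{assumpgraph_original}. Take any vertex $j\in\mathcal{V}$ and a path $(0,i_1,\ldots,i_m=j)$ from the leader in $\mathcal{G}_0$. The only edges incident to $0$ are $(0,i)$ with $i\le n$, and the path visits $0$ only once, so $i_1\in\{1,\ldots,n\}$ and the remaining edges lie in $\mathcal{G}$. Then $x_{i_1}=0$, and $x$ is constant along edges of $\mathcal{G}$, so $x_j=x_{i_1}=0$. Hence $x=\mathbf{0}_N$ and $\Xi+LL^{\top}$ is positive definite.

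The main point needing care is the edge-wise description of $L^{\top}x$: every edge in $\overline{\mathcal{E}}$ must correspond to exactly one column of $L$ with a single tail and a single head. The indexing $\mathcal{M}_{\overline{\mathcal{G}}}=\{n+1,\ldots,N\}$ presupposes this, and it is consistent with $|\overline{\mathcal{E}}_0|=N$. Acyclicity is not needed for this step; it only guarantees that $L$ has $N-n$ columns. An alternative to the path argument is to note that $\Xi+LL^{\top}$ is the grounded Laplacian of $\mathcal{G}_0$ with the leader row and column removed, but the quadratic-form argument above is self-contained.
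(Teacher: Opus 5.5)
Your proof is correct, and it takes a different route from the paper. The paper never writes down the quadratic form. It observes that $\Xi+LL^{\top}$ is weakly diagonally dominant, strictly so in rows $1,\ldots,n$. It then uses a leader-rooted path to show that every other row is linked to a strictly dominant row by a chain of nonzero off-diagonal entries. The Shivakumar--Chew theorem gives nonsingularity, Gershgorin's theorem (with real eigenvalues by symmetry) gives a nonnegative spectrum, and together these give positive eigenvalues.

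Your decomposition $x^{\top}(\Xi+LL^{\top})x=\sum_{i\le n}x_i^2+\|L^{\top}x\|^2$ replaces both cited results by a direct kernel computation. The same path-from-the-leader idea now propagates $x_{i_1}=0$ along edges. This is shorter and self-contained, and it gives semidefiniteness for free. It also carries over verbatim to positive edge weights and makes clear that acyclicity plays no role in this lemma.

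In exchange, the paper's route does not depend on a sum-of-squares structure. Diagonal dominance, the chain condition and Gershgorin's discs would equally show that all eigenvalues have positive real part for a directed graph with a spanning tree rooted at the leader. Your argument has no direct analogue there.

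One minor correction to your closing remark: acyclicity alone does not force $L$ to have $N-n$ columns. A forest with $c$ components has $N-c$ edges, and Assumption~\ref{assumpgraph_original} only gives $c\le n$. For example, $N=3$, $n=2$ with edges $\{1,3\}$ and $\{2,3\}$ satisfies the assumption but has two edges. Your argument only needs each column of $L$ to encode a single edge, so it is unaffected.
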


\begin{proof}
See Appendix \ref{appenA}.
\end{proof}

We consider the following left-invariant and right-invariant attitude errors,
\[
\check{R}_{i,j}^l=R_iR_j^{\top},\quad\check{R}^r_{i,j}=R_j^{\top}R_i.
\]
Depending on which attitude error is used, we define the matrices $\check{\mathcal{L}}^l,\check{\mathcal{L}}^r\in\mathbb{R}^{3(N+1)\times 3N}$ with their respective $3$-by-$3$ blocks $(\check{\mathcal{L}}^l)_{ik},(\check{\mathcal{L}}^r)_{ik}$ defined as follows:
\begin{equation}\label{checkL}
(\check{\mathcal{L}}^l)_{ik}=\left\{
\begin{array}{ll}
 -I_3,    & k\in(\mathcal{M}_{\overline{\mathcal{G}}_0})_{i-1}^+,  \\
  \check{R}_k^l,   & k\in(\mathcal{M}_{\overline{\mathcal{G}}_0})_{i-1}^-,\\
  \mathbf{0}_{3\times 3},&\mathrm{otherwise}.
\end{array}
\right.
\end{equation}
\begin{equation}\label{checkLe}
(\check{\mathcal{L}}^r)_{ik}=\left\{
\begin{array}{ll}
-\check{R}^r_k, & k\in(\mathcal{M}_{\overline{\mathcal{G}}_0})^{+}_{i-1},  \\
I_3, & k\in(\mathcal{M}_{\overline{G}_0})^{-}_{i-1},\\
\mathbf{0}_{3\times 3}, &\mathrm{otherwise}.\\
\end{array}
\right.
\end{equation}
where $(\mathcal{M}_{\overline{\mathcal{G}}_0})_i^+\subset\mathcal{M}_{\overline{\mathcal{G}}_0}$ and $(\mathcal{M}_{\overline{\mathcal{G}}_0})_i^-\subset\mathcal{M}_{\overline{\mathcal{G}}_0}$ are the sets of edges in the graph $\overline{\mathcal{G}}_0$ with $i$ being the head and the tail of the edges respectively.

Now, we show that the matrices, obtained from $\check{\mathcal{L}}^l$ and $\check{\mathcal{L}}^r$ by removing their respective first three rows, are non-singular.

\begin{lemma}\label{le2}
Consider the matrices $\check{\mathcal{L}}^l,\check{\mathcal{L}}^r\in\mathbb{R}^{3(N+1)\times 3N}$, associated to the augmented graph $\mathcal{G}_0$, with their $3$-by-$3$ blocks $(\check{\mathcal{L}}^l)_{ik},(\check{\mathcal{L}}^r)_{ik}$ defined in \eqref{checkL} and \eqref{checkLe} respectively.
Under Assumption \ref{assumpgraph_original}, $\check{\mathcal{L}}_2^l$ and $\check{\mathcal{L}}^r_2$ are non-singular, where
\[
\check{\mathcal{L}}_2^l=\left[
\begin{array}{cccc}
(\check{\mathcal{L}}^l)_{21}&(\check{\mathcal{L}}^l)_{22}&\cdots&(\check{\mathcal{L}}^l)_{2N}\\
(\check{\mathcal{L}}^l)_{31}&(\check{\mathcal{L}}^l)_{32}&\cdots&(\check{\mathcal{L}}^l)_{3N}\\
\vdots&\vdots&\ddots&\vdots\\
(\check{\mathcal{L}}^l)_{N+1,1}&(\check{\mathcal{L}}^l)_{N+1,2}&\cdots&(\check{\mathcal{L}}^l)_{N+1,N}
\end{array}
\right]_,
\]
\[
\check{\mathcal{L}}_2^r=\left[
\begin{array}{cccc}
(\check{\mathcal{L}}^r)_{21}&(\check{\mathcal{L}}^r)_{22}&\cdots&(\check{\mathcal{L}}^r)_{2N}\\
(\check{\mathcal{L}}^r)_{31}&(\check{\mathcal{L}}^r)_{32}&\cdots&(\check{\mathcal{L}}^r)_{3N}\\
\vdots&\vdots&\ddots&\vdots\\
(\check{\mathcal{L}}^r)_{N+1,1}&(\check{\mathcal{L}}^r)_{N+1,2}&\cdots&(\check{\mathcal{L}}^r)_{N+1,N}
\end{array}
\right]_.
\]
\end{lemma}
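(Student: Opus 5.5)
The plan is to read $\check{\mathcal{L}}_2^l$ and $\check{\mathcal{L}}_2^r$ as \emph{weighted reduced incidence matrices} of the oriented tree $\overline{\mathcal{G}}_0$. Removing the first block row amounts to deleting the row of the leader vertex $0$. Then I show non-singularity by peeling leaves, which produces a block-triangular structure.

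\textbf{Step 1: $\overline{\mathcal{G}}_0$ is a tree.} Ignoring orientation, $\overline{\mathcal{G}}_0$ has $N+1$ vertices. By Assumption \ref{assumpgraph_original}, it is connected, since every vertex is reachable from $0$. It has exactly $|\overline{\mathcal{E}}_0|=N$ edges, as noted after that assumption. A connected graph with $N+1$ vertices and $N$ edges is a spanning tree.

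In the block matrices, block column $k$ corresponds to edge $k$. Block row $i$ corresponds to vertex $i-1$. Column $k$ has exactly two nonzero blocks:
\begin{itemize}
\item in $\check{\mathcal{L}}^l$, the block $-I_3$ at the head of edge $k$ and $\check R^l_k$ at its tail;
\item in $\check{\mathcal{L}}^r$, the block $-\check R^r_k$ at the head and $I_3$ at the tail.
\end{itemize}
Each of these blocks is either $\pm I_3$ or $\pm$ a rotation matrix, so every nonzero block is invertible. Deleting the row of vertex $0$ leaves a square $3N\times 3N$ matrix whose rows are indexed by the non-leader vertices and whose columns are indexed by the edges.

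\textbf{Step 2: induction on $N$.} I prove the following more general claim. For any tree $T$ on vertex set $\{0\}\cup V$ with $|V|=N$ and an arbitrary orientation, let each edge carry an invertible $3\times 3$ block at both of its endpoints. Then the matrix obtained by deleting the row of vertex $0$ is non-singular.

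\emph{Base case $N=1$.} The tree has the single edge $(0,v)$. The reduced matrix is one invertible block, so it is non-singular.

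\emph{Inductive step.} A tree with at least two edges has at least two leaves. Hence there is a leaf $v\neq 0$, and I use it as follows.
\begin{itemize}
\item Let $e$ be the unique edge incident to $v$. The block row of $v$ has exactly one nonzero block, namely the invertible block $B$ in column $e$.
\item Permute block rows and block columns so that row $v$ and column $e$ come first. The matrix then has the form
\[
\begin{bmatrix} B & \mathbf{0} \\ \ast & M'\end{bmatrix},
\]
where $M'$ is obtained by deleting row $v$ and column $e$.
\item $M'$ is exactly the reduced weighted incidence matrix of the tree $T\setminus\{v\}$, which has $N-1$ non-leader vertices and the same kind of invertible weights.
\item By the induction hypothesis $M'$ is invertible, so $\det=\pm\det(B)\det(M')\neq 0$.
\end{itemize}

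\textbf{Step 3: conclusion.} Applying the claim with the weights $(-I_3,\check R^l_k)$ gives the non-singularity of $\check{\mathcal{L}}_2^l$. Applying it with the weights $(-\check R^r_k, I_3)$ gives the non-singularity of $\check{\mathcal{L}}_2^r$.

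The only delicate point is Step 1, that is, certifying that $\overline{\mathcal{G}}_0$ is a tree. This requires that the leader edges together with the acyclic $\mathcal{G}$ do not create a cycle. That follows from the edge count $|\overline{\mathcal{E}}_0|=N$ combined with connectivity, which is why I rely on the remark following Assumption \ref{assumpgraph_original}. The remaining steps are routine bookkeeping of block positions.
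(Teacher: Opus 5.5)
Your argument is correct, but it runs the elimination in the opposite direction from the paper.

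\textbf{The paper's route.} It works with the transpose. Assuming $(\check{\mathcal{L}}_2^l)^{\top}x=\mathbf{0}_{3N}$, it notes that each leader-edge column $(0,i)$ keeps a single invertible block once the leader row is deleted. Hence $x_i=\mathbf{0}_3$ for $i\in\{1,\ldots,n\}$. It then propagates these zeros outward along a path $(0,j_1,\ldots,j_m,i)$, each edge column expressing $x_{j_{d+1}}$ as an invertible image of $x_{j_d}$. This is shorter, and it uses only reachability from the leader to get independence of the block rows. The count $|\overline{\mathcal{E}}_0|=N$ enters only at the end, to make the matrix square.

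\textbf{Your route.} You peel non-root leaves inward to reach a block-triangular form. This uses the tree structure at every step, but it gives more. After permutation, the diagonal blocks are, for each non-leader vertex $v$, the block $B_v$ sitting at $v$ in the column of its edge toward $0$. So $\det\check{\mathcal{L}}_2=\pm\prod_v\det B_v$. Since every block is $\pm I_3$ or $\pm$ a rotation, $|\det\check{\mathcal{L}}_2^l|=|\det\check{\mathcal{L}}_2^r|=1$ for every configuration.

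That uniform invertibility is what the later limit steps actually require, e.g.\ $\hat{R}^{\top}\tilde{\mathcal{L}}_2^l\Psi^l\to\mathbf{0}_{3N}\Rightarrow\Psi^l\to\mathbf{0}_{3N}$ with a time-varying matrix. The paper leaves this implicit; it follows there from compactness of $\mathrm{SO}(3)^N$. Your weighted general claim also handles the $l$ and $r$ cases at once, where the paper only says the same process applies.

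\textbf{Your ``delicate point''.} You are right to cite the remark after Assumption \ref{assumpgraph_original} rather than derive it, because it does not follow from that assumption alone. For instance, $N=3$, $\mathcal{G}$ the path $1$--$2$--$3$ and $n=2$ give $|\overline{\mathcal{E}}_0|=4$. The count is an implicit extra hypothesis: each component of $\mathcal{G}$ contains exactly one leader-informed agent, i.e.\ $\mathcal{G}_0$ is a tree. It is already built into the lemma through $\check{\mathcal{L}}^l\in\mathbb{R}^{3(N+1)\times 3N}$ and $\mathcal{M}_{\overline{\mathcal{G}}}=\{n+1,\ldots,N\}$. The paper's proof needs it just as yours does.
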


\begin{proof}
See Appendix \ref{appenB}.
\end{proof}

\section{Observer-based Attitude Synchronization to A Time-varying Reference Trajectory}\label{timevarying}

This section is devoted to the design of a distributed control scheme that allows to synchronize the attitudes of the rigid body systems, governed by \eqref{followeri}, to a time-varying reference attitude, \textit{i.e.}, to solve \textit{Problem 1}.
It is worth mentioning that each agent exchanges information with its neighboring agents over the communication graph, and some agents have no direct access to the time-varying reference attitude.
This requires each agent to estimate the time-varying reference attitude and subsequently track its local estimate, so as to achieve the attitude synchronization to a time-varying reference attitude.
Hence, we propose the following observer-based distributed control algorithm:
\begin{enumerate}
    \item A distributed observer is assigned to each rigid body to estimate the desired attitude, the desired angular velocity and the desired angular acceleration.
    \item A distributed control law is formulated to enable each rigid body to track its local estimates.
\end{enumerate}

Figure \ref{observerbasedtimevarying} depicts the block diagram of the proposed observer-based control law \eqref{control}, incorporating the distributed observer \eqref{observer}, to achieve the attitude synchronization to a time-varying reference trajectory.

\begin{figure}[!htbp]
\footnotesize
\centering
\tikzstyle{test}=[diamond,aspect=2,draw,thin]
\tikzstyle{line} = [draw, ->]
\tikzstyle{point}=[coordinate,on grid,]
\begin{tikzpicture}
\node[rectangle, minimum width=8.5cm, minimum height=2cm, draw = black,fill=orange!5] (outer1) {\shortstack[1]{\ \ \ \ \ \ \ \ \ \ \ \ \ \ \ \ \ \ \ \ \ \ \ \ \ \ \ \ \ \ \ \ \ \ \ \ \ \ \ \ \ \ \ \ \ \ \ \ \ \ \ Rigid body $i$\\ \ \\ \ \\ \ \\ \ \\ \ \\ \ \\ \ \\ \ \\ \ \\ \ \\ \ \\ \ \\ \ \\ \ \\ \ \\ \ \\ \ \\ \ \\ \ \\ \ \\ \ \\ \ \\ \ \\ \ \\ \ \\ \ \\ \ \\ \ \\ \ \\ \ \\ \ \\ \ \\ \ \\ \ \\ \ \\ \ \\ \ \\ \ \\ \ \\ \ \\ \ \\ \ \\ \ \\ \ \\ \ }};
\node[rectangle, minimum width=5cm, minimum height=2cm,xshift=-1.6cm, yshift=0.2cm, draw = black,fill=pink!10] (outer3) {\shortstack[1]{\\ \ \\ \ \\ \ \\ \ \\ \ \\ \ \\ \ \\ \ \\ \ \\ \ \\ \ \\ \ \\ \ \\ \ \\ \ \\ \ \\ \ \\ \ \\ \ \\ \ \\ \ \\ \ \\ \ \\ \ \\ \ \\ \ \\ \ \\ \ \\ \ \\ \ \\ \ \\ \ \\ \ \\ \ \\ \ \\ \ \\ \ \\ \ \\ \ \\ \ \\ \ \\ \ \\ \ \\ \ \\ \ \\ \ \\ \ \\ \ \\ \ \\ \ \\ \ \\ \ \\ \ \\ \ \\ \ \\ \ \\ \ \\ \ \\ \ \\ \ \\ \ \\ \ \\ \ \\ \ \\ \ \\ \ \\ \ \\ \ \\ \ \\ \ \\ \ }};
\node[rectangle, minimum width=4.6cm, minimum height=1cm,   xshift=-1.6cm, yshift=0.2cm, draw = green!10,fill=green!10] (mid1) {\shortstack[1]{Estimation\\ \ \\ \ \\ \ \\ \ \\ \ \\ \ \\ \ \\ \ \\ \ \\ \ \\ \ \\ \ \\ \ \\ \ \\ \ \\ \ \\ \ \\ \ \\ \ \\ \ \\ \ \\ \ \\ \ \\ \ \\ \ \\ \ \\ \ \\ \ \\ \ \\ \ \\ \ \\ \ \\ \ \\ \ \\ \ \\ \ \\ \ \\ \ }};
\node[rectangle,minimum width=1.2cm,minimum height=0.5cm,xshift=-2.9cm,yshift=1.4cm,draw=yellow!20,fill=yellow!20] (inner11) {\shortstack[1]{Angular\\ acceleration\\ observer \eqref{7a}}};
\node[rectangle,minimum width=1.2cm,minimum height=0.5cm,xshift=-1.9cm,yshift=0cm,draw=yellow!20,fill=yellow!20](inner12) {\shortstack[1]{Angular\\ velocity\\ observer \eqref{7b}}};
\node[point,right of=inner11,node distance=1.1cm](pr2inner12){};
\node[point,below of=pr2inner12,node distance=0.81cm](prb2inner12){};
\draw[line] (pr2inner12)--(prb2inner12);
\node[rectangle,minimum width=1.2cm,minimum height=0.5cm,xshift=-0.9cm,yshift=-1.2cm,draw=yellow!20,fill=yellow!20](inner13) {\shortstack[1]{Attitude\\ observer \eqref{7c}}};
\node[point,right of=inner13,node distance=1.5cm](prinner13){};
\draw[-] (inner13)--node[above]{$\hat{R}_i$}(prinner13);
\node[point,right of =inner12,node distance=1.1cm](pr2inner12){};
\node[point,below of=pr2inner12,node distance=0.81cm](prb2inner12){};
\draw[line] (pr2inner12)--(prb2inner12);
\node[point,right of =inner13,node distance=1.1cm](pr2inner13){};
\node[point,below of=pr2inner13,node distance=0.6cm](prb2inner13){};
\draw[-] (pr2inner13)--(prb2inner13);
\node[point,left of=prb2inner13,node distance=2.2cm](prbl2inner13){};
\draw[-] (prb2inner13)--(prbl2inner13);
\node[point,above of=prbl2inner13,node distance=1.2cm](prbla2inner13){};
\draw[line] (prbl2inner13)--(prbla2inner13);
\node[rectangle, minimum width=2.9cm, minimum height=1cm,   xshift=2.65cm, yshift=-0.2cm, draw = blue!10,fill=blue!10] (mid2) {\shortstack[1]{Distributed control\\ \ \\ \ \\ \ \\ \ \\ \ \\ \ \\ \ \\ \ \\ \ \\ \ \\ \ \\ \ \\ \ \\ \ \\ \ \\ \ \\ \ \\ \ \\ \ \\ \ \\ \ \\ \ \\ \ \\ \ \\ \ \\ \ \\ \ \\ \ \\ \ \\ \ \\ \ \\ \ \\ \ \\ \ \\ \ \\ \ \\ \ }};
\node[rectangle,minimum width=1.2cm,minimum height=0.5cm,xshift=2.5cm,yshift=0cm,draw=brown!50,fill=brown!50] (inner2){\shortstack[1]{Distributed \\control \eqref{control}}};
\node[point,right of=inner11,node distance=5cm](prinner11){};
\draw[-] (inner11)--node[above]{$\hat{\sigma}_i$}(prinner11);
\node[point,below of=prinner11,node distance=1.02cm](prrbinner11){};
\draw[line] (prinner11)--(prrbinner11);
\node[point,right of=inner12,node distance=3.6cm](prinner12){};
\draw[line] (inner12)--node[above]{$\hat{\omega}_i$}(prinner12);
\node[point,right of=prinner13,node distance=1.5cm](prinner132){};
\draw[-] (prinner13)--(prinner132);
\node[point,above of=prinner132,node distance=0.81cm](prainner132){};
\draw[line] (prinner132)--(prainner132);
\node[point,right of=prainner132,node distance=0.9cm](prinner12r){};
\node[rectangle,minimum width=0.3cm,minimum height=0.3cm,xshift=3cm,yshift=-1.8cm,draw=blue!10,fill=blue!10] (R1){$R_i$};
\draw[line] (R1)--(prinner12r);
\node[rectangle,minimum width=0.3cm,minimum height=0.3cm,xshift=3cm,yshift=1.1cm,draw=blue!10,fill=blue!10] (omega1){$\omega_i$};
\node[point,above of=prinner12r,node distance=0.77cm](prinner12ra){};
\draw[line] (omega1)--(prinner12ra);
\node[rectangle,minimum width=0.3cm,minimum height=0.3cm,xshift=3.8cm,yshift=0cm,draw=blue!10,fill=blue!10] (tau1){$\tau_i$};
\draw[line] (inner2)--(tau1);
\node[rectangle, minimum width=4.6cm, minimum height=0.8cm, xshift=-1.6cm, yshift=3.5cm,  draw = green!10,fill=green!10] (outer2) {The interconnected graph $\mathcal{G}$};
\node[point,left of=outer2,node distance=0.5cm](prouter21){};
\node[point,below of=prouter21,node distance=1.1cm](prbouter21){};
\node[point,below of=prouter21,node distance=0.4cm](prbouter22){};
\draw[line] (prbouter22)--node[right]{\shortstack[1]{$\hat{R}_j,\hat{\omega}_j,\hat{\sigma}_j,j\in\mathcal{N}_{\mathcal{G}(i)}$}}(prbouter21);
\node[point,left of=outer2,node distance=0.8cm](prouter21){};
\node[point,below of=prouter21,node distance=1.1cm](prbouter21){};
\node[point,below of=prouter21,node distance=0.4cm](prbouter22){};
\draw[line] (prbouter21)--node[left]{\shortstack[1]{$\hat{R}_i,\hat{\omega}_i,\hat{\sigma}_i$}}(prbouter22);
\node[rectangle, minimum width=4.6cm, minimum height=0.8cm, xshift=-1.6cm, yshift=-3.1cm,  draw = green!10,fill=green!10] (leader) {\shortstack[1]{Time-varying \\reference trajectory \eqref{leader}}};
\draw[line] (leader)--node[left]{$R_0,\omega_0,\sigma_0$}node[right]{if $(0,i)\in\mathcal{E}_0$}(mid1);
\end{tikzpicture}
\caption{The implementation block diagram of the proposed observer-based control law \eqref{control} incorporating the distributed observer \eqref{observer}}\label{observerbasedtimevarying}
\end{figure}

\subsection{Distributed observer design}\label{sec5a}

In this subsection, we propose a distributed observer assigned to each rigid body, and derive the dynamics of the estimation errors.

Let $\hat{R}_i\in \mathrm{SO}(3)$,  $\hat{\omega}_i\in\mathbb{R}^3$, and $\hat{\sigma}_i\in\mathbb{R}^3$ denote the $i$-th rigid body's estimates of the desired attitude $R_0$, the desired angular velocity $\omega_0$, and the desired angular acceleration $\sigma_0$, respectively, where $i\in\mathcal{V}$.
We propose the following distributed observer:
\begin{subequations}\label{observer}
\begin{align}
\dot{\hat{R}}_i=&\hat{R}_i\hat{\omega}_i^{\times},\label{7a}\\
\dot{\hat{\omega}}_i=&\hat{\sigma}_i-k_{R^l}R_0^{\top}\psi(A_{i,0}^l\hat{R}_iR_0^{\top})-k_{\omega^l}\Xi_i(\hat{\omega}_i-\omega_0)\label{7b}\\
 & -k_{R^l}\sum\limits_{j\in\mathcal{N}_{\mathcal{G}}(i)}\hat{R}_j^{\top}\psi(A_{i,j}^l\hat{R}_{i}\hat{R}_j^{\top})-k_{\omega^l}\sum\limits_{j\in\mathcal{N}_{\mathcal{G}}(i)}(\hat{\omega}_i-\hat{\omega}_j),\nonumber\\
\dot{\hat{\sigma}}_i=&
P\hat{\sigma}_i-k_{\tilde{\sigma}}\sum\limits_{j\in\mathcal{N}_{\mathcal{G}}(i)}(\hat{\sigma}_i-\hat{\sigma}_j)-k_{\tilde{\sigma}}\Xi_i(\hat{\sigma}_i-\sigma_0),\label{7c}
\end{align}
\end{subequations}
for $i\in\mathcal{V}$,
where $k_{R^l},k_{\omega^l},k_{\tilde{\sigma}}>0$, $A_{i,j}^l=(A_{i,j}^l)^{\top}>0$ with three distinct eigenvalues and $A_{i,j}^l=A_{j,i}^l$, $A_{i,0}^l=(A_{i,0}^l)^{\top}>0$ with three distinct eigenvalues, $A_{i,0}^l=\mathbf{0}_{3\times 3},\forall i\in\mathcal{V}\setminus\{1,\ldots,n\}$.

The proposed distributed observer \eqref{7a} for the desired attitude reveals that the desired attitude estimate is corrected by updating  the desired angular velocity estimate.
The proposed distributed observer \eqref{7b} for the desired angular velocity is composed of four innovation terms.
The first two terms are activated solely for the rigid body systems with access to the leader's information, driving the attitude and angular velocity estimates toward their respective desired values.
The remaining two terms leverage relative attitude and relative angular velocity information, ensuring that the estimates of all rigid body systems converge to a common attitude.
The proposed distributed observer \eqref{7c} for the desired angular acceleration consists of two innovation terms. One is a relative angular acceleration term that aligns the angular acceleration estimate to a common value, whereas the other is an absolute angular acceleration term activated only for the rigid body systems with access to the leader's information, steering the angular acceleration toward the desired value.

The dynamics of estimation errors are derived as follows.
First, we characterize the dynamics of the absolute estimation errors.
For $i\in\mathcal{V}$, let us define the absolute attitude error, absolute angular velocity error, and absolute angular acceleration error between the $i$-th observer and the leader $0$ as $\tilde{R}_{i,0}^l=\hat{R}_iR_0^{\top}$,  $\tilde{\omega}_{i,0}^l=\hat{\omega}_i-\omega_0$, and $\tilde{\sigma}_{i,0}=\hat{\sigma}_i-\sigma_0$, respectively.
Combining the leader dynamics \eqref{leader} with the proposed distributed observer \eqref{observer}, we obtain the absolute estimation error system
\begin{equation}\label{absoluteerror}
\setlength{\arraycolsep}{1pt}
\begin{array}{rcl}
\dot{\tilde{R}}_{i,0}^l&=&\tilde{R}_{i,0}^l(R_0\tilde{\omega}_{i,0}^l)^{\times},\\
\dot{\tilde{\omega}}_{i,0}^l&=&\tilde{\sigma}_{i,0}-k_{R^l}\sum\limits_{j\in\mathcal{N}_{\mathcal{G}}(i)}\hat{R}_j^{\top}\psi(A_{i,j}^l\hat{R}_{i}\hat{R}_j^{\top})-k_{\omega^l}\Xi_i\tilde{\omega}_{i,0}^l\\
\ &\ &-k_{R^l}R_0^{\top}\psi(A_{i,0}^l\tilde{R}_{i,0}^l)-k_{\omega^l}\sum\limits_{j\in\mathcal{N}_{\mathcal{G}}(i)}(\hat{\omega}_i-\hat{\omega}_j),\\
\dot{\tilde{\sigma}}_{i,0}&=&P\tilde{\sigma}_{i,0}
-k_{\tilde{\sigma}}\sum\limits_{j\in\mathcal{N}_{\mathcal{G}}(i)}(\hat{\sigma}_i-\hat{\sigma}_j)-k_{\tilde{\sigma}}\Xi_i\tilde{\sigma}_{i,0}.
\end{array}
\end{equation}



Next we derive the dynamics of relative attitude estimation errors.
The interconnection topology among the observers of rigid body systems coincides with that of the rigid body systems, which is depicted by the undirected graph $\mathcal{G}$.
Accordingly, the augmented graph $\mathcal{G}_0$ characterizes the interconnection topology of the observer network including the leader.
For all $i,j\in\mathcal{V}$, if $(j,i)\in\overline{\mathcal{E}}$, then
the relative attitude error between the observers $i,j$ is defined as $\tilde{R}_{i,j}^l=\hat{R}_i\hat{R}_j^{\top}$,
yielding $\dot{\tilde{R}}_{i,j}^l=\tilde{R}_{i,j}^l(\hat{R}_j\tilde{\omega}_{i,j}^l)^{\times}$, where $\tilde{\omega}_{i,j}^l=\hat{\omega}_i-\hat{\omega}_j$ stands for the relative angular velocity error.
In the resulting oriented graph $\overline{\mathcal{G}}_0$, if the edge $(j,i)\in\overline{\mathcal{E}}_0$ is labeled as the $k$-th edge, then we define  $\tilde{R}_k^l=\tilde{R}_{i,j}^l$, $\tilde{\omega}_k^l=\tilde{\omega}_{i,j}^l$.
Consequently, the dynamics of attitude estimation errors can be expressed in the following compact form:
\begin{equation}\label{attitudeerror}
\dot{\tilde{R}}^l=\tilde{R}^l\tilde{W}^l,
\end{equation}
where $\tilde{R}^l=\mathrm{diag}(\tilde{R}_1^l,\ldots,\tilde{R}_N^l)$, $\tilde{W}^l\in\mathbb{R}^{3N\times 3N}$ with each $3$-by-$3$ block $(\tilde{W}^l)_{kl}$ being defined as
\[
(\tilde{W}^l)_{kl}=\left\{
\begin{array}{ll}
(\hat{R}_j\tilde{\omega}_k^l)^{\times},&k=l\ \mathrm{and}\ k\in(\mathcal{M}_{\overline{\mathcal{G}}_0})_j^+,  \\
\mathbf{0}_{3\times 3},     &k\neq l,
\end{array}
\right.
\]
where $\hat{R}_0=R_0$. With the matrix $L$ defined in \eqref{LG}, one obtains
\begin{equation}\label{compactomegasigma}
\begin{array}{rcl}
\sum\limits_{j\in\mathcal{N}_{\mathcal{G}}(i)}(\hat{\omega}_i-\hat{\omega}_j)
&=&(\mathrm{Row}_i(LL^{\top})\otimes I_3)\tilde{\omega}_0^l,\\
\sum\limits_{j\in\mathcal{N}_{\mathcal{G}}(i)}(\hat{\sigma}_i-\hat{\sigma}_j)&=&(\mathrm{Row}_i(LL^{\top})\otimes I_3)\tilde{\sigma}_0,\\
\end{array}
\end{equation}
where $\tilde{\omega}_0^l=[(\tilde{\omega}_{1,0}^l)^{\top}\ \cdots\ (\tilde{\omega}_{N,0}^l)^{\top}]^{\top}$, $\tilde{\sigma}_0=[\tilde{\sigma}_{1,0}^{\top}\ \cdots\ \tilde{\sigma}_{N,0}^{\top}]^{\top}$.
As proven in Lemma \ref{le3} in Appendix \ref{appenC}, one has
\begin{align}\label{attitudecompact}
&\sum\limits_{j\in\mathcal{N}_{\mathcal{G}}(i)}\hat{R}_j^{\top}\psi(A_{i,j}^l\tilde{R}_{i,j}^l)+R_0^{\top}\psi(A_{i,0}^l\tilde{R}_{i,0}^l)\notag\\
&=\hat{R}_i^{\top}(\sum\limits_{k=1}^{N}(\tilde{\mathcal{L}}^l_2)_{i,k}\psi(A_k^l\tilde{R}_{k}^l)),
\end{align}
where $A_k^l=A_{i,j}^l$ for $i,j\in\mathcal{V}_0$ with the edge between $i$ and $j$ being described by edge $k$, $\tilde{\mathcal{L}}_2^l\in\mathbb{R}^{3N\times 3N}$ is obtained from $\tilde{\mathcal{L}}^l$ by removing the first three rows,  $\tilde{\mathcal{L}}^l\in\mathbb{R}^{3(N+1)\times 3N}$ is constructed, according to the oriented graph $\overline{\mathcal{G}}_0$, with each $3$-by-$3$ block $(\tilde{\mathcal{L}}^l)_{ik}$ being defined as
\begin{equation}\label{LG0}
(\tilde{\mathcal{L}}^l)_{ik}=\left\{
\begin{array}{ll}
 -I_3,    & k\in(\mathcal{M}_{\overline{\mathcal{G}}_0})_{i-1}^+,  \\
  \tilde{R}_k^l,   & k\in(\mathcal{M}_{\overline{\mathcal{G}}_0})_{i-1}^-,\\
  \mathbf{0}_{3\times 3},&\mathrm{otherwise}.
\end{array}
\right.
\end{equation}
Combining \eqref{compactomegasigma}-\eqref{attitudecompact}, the compact form of the dynamics of the estimation errors is derived as follows:
\begin{equation}\label{errorcompact}
\setlength{\arraycolsep}{1pt}
\begin{array}{rcl}
\dot{\tilde{R}}^l&=&\tilde{R}^l\tilde{W}^l,\\
\dot{\tilde{\omega}}^l_0&=&\tilde{\sigma}_0^l-k_{R^l}\hat{R}^{\top}\tilde{\mathcal{L}}_2^l\Psi^l-k_{\omega^l}((\Xi+LL^{\top})\otimes I_3)\tilde{\omega}_0^l,\\
\dot{\tilde{\sigma}}_0&=&(I_N\otimes P)\tilde{\sigma}_0
-k_{\tilde{\sigma}}((\Xi+LL^{\top})\otimes I_3)\tilde{\sigma}_0,
\end{array}
\end{equation}
where $\Psi^l=[\psi(A_1^l\tilde{R}_1^l)^{\top}\ \cdots\ \psi(A_N^l\tilde{R}_N^l)^{\top}]^{\top}$, $\hat{R}=\mathrm{diag}(\hat{R}_1,$
$\ldots,\hat{R}_N)$.

\subsection{Distributed control design}

In this subsection, we propose a distributed control scheme for each rigid body to track its local estimate, and further derive the dynamics of tracking errors.

Using the estimated attitude $\hat{R}_i$ and the estimated angular velocity $\hat{\omega}_i$, the control torque of the $i$-th rigid body system is designed as follows:
\begin{equation}\label{control}
    \tau_i=-\Sigma_i\bar{\omega}_i-\Gamma_i-k_{\bar{R}}\psi(\bar{A}_i\bar{R}_i^r)-k_{\bar{\omega}}\bar{\omega}_i,
\end{equation}
for $i\in\mathcal{V}$, where $k_{\bar{R}},k_{\bar{\omega}}>0$, $\bar{A}_i=\bar{A}_i^{\top}>0$ with three distinct eigenvalues, $\bar{R}_i^r=\hat{R}_i^{\top}R_i$ is the attitude tracking error between the $i$-th rigid body and its corresponding observer, $\bar{\omega}_i=\omega_i-(\bar{R}_i^r)^{\top}\hat{\omega}_i$ is the angular velocity tracking error,
\begin{equation}\label{sigmai}
\begin{array}{rcl}
\Sigma_i&=&(J_i\bar{\omega}_i)^{\times}-(J_i(\bar{R}_i^r)^{\top}\hat{\omega}_i)^{\times}\\
\ &\ &-((\bar{R}_i^r)^{\top}\hat{\omega}_i)^{\times}J_i-J_i((\bar{R}_i^r)^{\top}\hat{\omega}_i)^{\times},
\end{array}
\end{equation}
\begin{equation}\label{gammai}
\Gamma_i=(\bar{R}_i^r)^{\top}\hat{\omega}_i\times J_i(\bar{R}_i^r)^{\top}\hat{\omega}_i-J_i(\bar{R}_i^r)^{\top}\dot{\hat{\omega}}_i,
\end{equation}
with $\dot{\hat{\omega}}_i$ being defined in \eqref{7b}.
In the proposed control law \eqref{control}, the first two terms are devised to cancel the nonlinear effects arising in the dynamics of the tracking errors, while the remaining two terms drive the actual attitude and angular velocity to converge to their respective estimates.

The tracking error dynamics are given as follows:
\begin{equation}\label{controlerror1}
\setlength{\arraycolsep}{1pt}
\begin{array}{rcl}
\dot{\bar{R}}_i^r&=&\bar{R}_i^r\bar{\omega}_i^{\times},\\
J_i\dot{\bar{\omega}}_i&=&\Sigma_i\bar{\omega}_i+\Gamma_i+\tau_i.\\
\end{array}
\end{equation}
Substituting the designed control law \eqref{control} into \eqref{controlerror1} yields
\begin{equation}\label{controlerror2}
\setlength{\arraycolsep}{1pt}
\begin{array}{rcl}
\dot{\bar{R}}_i^r&=&\bar{R}_i^r\bar{\omega}_i^{\times},\\
J_i\dot{\bar{\omega}}_i&=&-k_{\bar{R}}\psi(\bar{A}_i\bar{R}_i^r)-k_{\bar{\omega}}\bar{\omega}_i, \\
\end{array}
\end{equation}
which can be expressed in the following compact form:
\begin{equation}\label{controlerrorcompact}
\setlength{\arraycolsep}{1pt}
\begin{array}{rcl}
\dot{\bar{R}}^r&=&\bar{R}^r\bar{W},\\
J\dot{\bar{\omega}}&=&-k_{\bar{R}}\bar{\Psi}-k_{\bar{\omega}}\bar{\omega},\\
\end{array}
\end{equation}
where $\bar{\Psi}=[\psi(\bar{A}_1\bar{R}_1^r)^{\top}\ \cdots\ \psi(\bar{A}_N\bar{R}_N^r)^{\top}]^{\top}$, $\bar{R}^r=\mathrm{diag}(\bar{R}_1^r,$
$\ldots,\bar{R}_N^r)$, $\bar{W}=\mathrm{diag}(\bar{\omega}_1^{\times},\ldots,\bar{\omega}_N^{\times})$, $J=\mathrm{diag}(J_1,\ldots,J_N)$, and
$\bar{\omega}=[\bar{\omega}_1^{\top}\ \cdots\ \bar{\omega}_N^{\top}]^{\top}$.

\begin{remark}\label{re10}
It is worth pointing out that the left-invariant attitude error $\tilde{R}_{i,j}^l=\hat{R}_i\hat{R}_j^{\top}$ is used for the observer design and the right-invariant attitude error $\bar{R}_i^r=\hat{R}_i^{\top}R_i$ is used for the control design. The choice of the right-invariant attitude error for the control design allows to obtain an autonomous closed-loop system \eqref {controlerrorcompact}, which facilitates the almost global asymptotic stability proof. The adoption of the left-invariant attitude estimation error facilitates considerably the observer design. While a uniform use of either the left-invariant or right-invariant attitude error for both observer and controller design might be feasible, it may result in a more complicated design and stability analysis.
\end{remark}

\subsection{Stability analysis}
To address \textit{Problem 1}, we aim to demonstrate that the proposed distributed control law \eqref{control} with the distributed observer \eqref{observer} enables the attitudes of the rigid body systems to achieve almost global asymptotic synchronization to a time-varying reference attitude.
To this end, we state the stability result of the error systems \eqref{errorcompact} and \eqref{controlerrorcompact} in this subsection.  Before stating our theorem, we define the state $\chi^l=(\tilde{R}^l,\tilde{\omega}_0^l,\tilde{\sigma}_0,\bar{R}^r,\bar{\omega})\in \mathcal{S}^l$, with $\mathcal{S}^l:=\mathrm{SO}(3)^N\times\mathbb{R}^{3N}\times\mathbb{R}^{3N}\times\mathrm{SO}(3)^N\times\mathbb{R}^{3N}$ and $\mathrm{SO}(3)^N=\{R=\mathrm{diag}(R_1,\ldots,R_N)|R_i\in\mathrm{SO}(3),\forall i=1,\ldots,N\}$, and the following sets:
\[
\begin{array}{l}
\mathcal{C}^{l,d}=\left\{\chi^l\in \mathcal{S}^l~|~\chi^l=(I_{3N},\mathbf{0}_{3N},\mathbf{0}_{3N},I_{3N},\mathbf{0}_{3N})\right\},
\end{array}
\]
\[\mathcal{C}^{l,u}=\left\{\chi^l\in \mathcal{S}^l~|~ \chi^l=(\tilde{R}^{l,\ast},\mathbf{0}_{3N},\mathbf{0}_{3N},\bar{R}^{r,\ast},\mathbf{0}_{3N})\right\},
\]
where
$\tilde{R}^{l,\ast}=\mathrm{diag}(\tilde{R}_1^{l,\ast},\ldots,\tilde{R}_N^{l,\ast})$, $\bar{R}^{r,\ast}=\mathrm{diag}(\bar{R}_1^{r,\ast},\ldots,$
$\bar{R}_N^{r,\ast})$, with $\tilde{R}_{p}^{l,\ast}=I_3,\forall p\in\mathcal{M}_{\overline{\mathcal{G}}_0}^{I}$, $\tilde{R}_{q}^{l,\ast}=\mathcal{R}(\pi,\tilde{u}_{q}^l),\forall q\in\mathcal{M}_{\overline{\mathcal{G}}_0}^{\pi}$, $\bar{R}_{p}^{r,\ast}=I_3,\forall p\in\mathcal{V}^{I}$, $\bar{R}_{q}^{r,\ast}=\mathcal{R}(\pi,\bar{u}_{q}),\forall q\in\mathcal{V}^{\pi}$. The sets  $\mathcal{M}_{\overline{\mathcal{G}}_0}^{I}$ and $\mathcal{M}_{\overline{\mathcal{G}}_0}^{\pi}$ satisfy $\mathcal{M}_{\overline{\mathcal{G}}_0}^{I}\cup\mathcal{M}_{\overline{\mathcal{G}}_0}^{\pi}=\mathcal{M}_{\overline{\mathcal{G}}_0}$, $\mathcal{M}_{\overline{\mathcal{G}}_0}^{\pi}\neq\emptyset$, and the sets $\mathcal{V}^I$ and $\mathcal{V}^{\pi}$ satisfy $\mathcal{V}^I\cup\mathcal{V}^{\pi}=\mathcal{V}$, $\mathcal{V}^{\pi}\neq\emptyset$.
The vectors $\tilde{u}_{q}^l$, and $\bar{u}_{q}$ are the unit eigenvectors associated to the eigenvalues of $A_{q}^l$ and $\bar{A}_{q}$, respectively.

\begin{theorem}\label{estimationstability}
Consider a network of $N$ rigid body systems governed by the rotational dynamics \eqref{followeri}.
The desired attitude and the desired angular velocity trajectories are generated by \eqref{leader}.
Let the distributed observer and control torque be designed as \eqref{observer} and \eqref{control}, respectively.
Under Assumption \ref{assumpgraph_original}, the following statements hold.
\begin{enumerate}
    \item The trajectories of the error systems \eqref{errorcompact} and $\eqref{controlerrorcompact}$ converge to the set of equilibria $\mathcal{C}^l=\mathcal{C}^{l,d}\cup\mathcal{C}^{l,u}$.
    \item The undesired equilibria in $\mathcal{C}^{l,u}$ are unstable, and the desired equilibrium $\mathcal{C}^{l,d}$ is almost globally asymptotically stable.
    \item The attitudes of the rigid body systems synchronize (almost\footnote{From all initial conditions except from a set of zero Lebesgue measure} globally)
    to the desired attitude $R_0(t)$, \emph{i.e.}, $\lim_{t\rightarrow\infty}R_i(t)R_0(t)^{\top}=I_3$ and $\lim_{t\rightarrow\infty}(\omega_i(t)-\omega_0(t))=\mathbf{0}_3$, for all $i\in\mathcal{V}$, for almost all initial conditions.
\end{enumerate}
\end{theorem}

\begin{proof}
See Appendix \ref{appenD}.
\end{proof}

\begin{remark}\label{re1}
The control torque $\tau_i$ formulated in \eqref{control} compensates the nonlinear terms $\Sigma_i\bar{\omega}_i$ and $\Gamma_i$
arising in the angular velocity tracking error dynamics.
Note that the term $\Sigma_i\bar{\omega}_i$ is a passive term in the sense that $\bar{\omega}_i^\top\Sigma_i\bar{\omega}_i=0$ since $\Sigma_i$ is a skew symmetric matrix. Hence, this term does not affect the Lyapunov time-derivative that is used in the proof of Theorem \ref{estimationstability}. The cancellation of the term $\Sigma_i\bar{\omega}_i$ is performed only for technical reasons that will lead to an autonomous tracking error subsystem that allows to prove almost global asymptotic stability.
Without the compensation of this term, the control torque given by
\begin{equation}\label{controlconstantestimate}
\tau_i=-\Gamma_i-k_{\bar{R}}\psi(\bar{A}_i\bar{R}_i^r)-k_{\bar{\omega}}\bar{\omega}_i,
\end{equation}
is still valid and we can prove that the desired equilibrium $\mathcal{C}^{l,d}$ is locally exponentially stable.
\end{remark}

\section{Attitude Synchronization to A Constant Reference Attitude}\label{constant}

To address \textit{Problem 2}, this section develops two distributed control schemes that enable the attitudes of the rigid body systems to synchronize to a constant reference attitude.
The first scheme is an observer-based distributed control, which is derived from the observer-based distributed control law proposed in the previous section for the attitude synchronization to a time-varying reference attitude.
Note that a constant reference attitude leads to a zero desired angular velocity, allowing each rigid body to implement a control law without estimating the reference attitude.
Hence, we propose an observerless distributed control scheme as the second strategy dedicated to the attitude synchronization to a constant reference attitude.

Figures \ref{observerbasedconstant}-\ref{observerlessconstant} depict the block diagrams for implementing the proposed distributed control schemes that achieve the attitude synchronization to a constant reference attitude.
Figure \ref{observerbasedconstant} displays the structure of the proposed observer-based distributed control law \eqref{observerconstant} incorporating the distributed observer \eqref{controlconstantestimate}, while Figure \ref{observerlessconstant} shows the implementation of the proposed observerless distributed control law \eqref{districontrol}.

\begin{figure}[!htbp]
\footnotesize
\centering
\tikzstyle{test}=[diamond,aspect=2,draw,thin]
\tikzstyle{line} = [draw, ->]
\tikzstyle{point}=[coordinate,on grid,]
\begin{tikzpicture}
\node[rectangle, minimum width=8cm, minimum height=2cm, draw = black,fill=orange!5] (outer1) {\shortstack[1]{\ \ \ \ \ \ \ \ \ \ \ \ \ \ \ \ \ \ \ \ \ \ \ \ \ \ \ \ \ \ \ \ \ \ \ \ \ \ \ \ \ \ \ \ \ \ \ \ \ \ \ Rigid body $i$\\ \ \\ \ \\ \ \\ \ \\ \ \\ \ \\ \ \\ \ \\ \ \\ \ \\ \ \\ \ \\ \ \\ \ \\ \ \\ \ \\ \ \\ \ \\ \ \\ \ \\ \ \\ \ \\ \ \\ \ \\ \ \\ \ \\ \ \\ \ \\ \ \\ \ \\ \ \\ \ \\ \ }};
\node[rectangle, minimum width=4.5cm, minimum height=2cm,xshift=-1.6cm, yshift=-0.1cm, draw = black,fill=pink!10] (outer3) {\shortstack[1]{\\ \ \\ \ \\ \ \\ \ \\ \ \\ \ \\ \ \\ \ \\ \ \\ \ \\ \ \\ \ \\ \ \\ \ \\ \ \\ \ \\ \ \\ \ \\ \ \\ \ \\ \ \\ \ \\ \ \\ \ \\ \ \\ \ \\ \ \\ \ \\ \ \\ \ \\ \ \\ \ \\ \ \\ \ \\ \ \\ \ \\ \ \\ \ \\ \ \\ \ \\ \ \\ \ \\ \ \\ \ \\ \ \\ \ \\ \ \\ \ \\ \ \\ \ \\ \ \\ \ \\ \ \\ \ \\ \ \\ \ \\ \ }};
\node[rectangle, minimum width=4cm, minimum height=1cm,   xshift=-1.6cm, yshift=-0.2cm, draw = green!10,fill=green!10] (mid1) {\shortstack[1]{Estimation\\ \ \\ \ \\ \ \\ \ \\ \ \\ \ \\ \ \\ \ \\ \ \\ \ \\ \ \\ \ \\ \ \\ \ \\ \ \\ \ \\ \ \\ \ \\ \ \\ \ \\ \ \\ \ \\ \ \\ \ \\ \ \\ \ }};
\node[rectangle,minimum width=1.2cm,minimum height=0.5cm,xshift=-2.4cm,yshift=0.3cm,draw=yellow!20,fill=yellow!20](inner12) {\shortstack[1]{Angular\\ velocity\\ observer \eqref{16a}}};
\node[rectangle,minimum width=1.2cm,minimum height=0.5cm,xshift=-1.3cm,yshift=-0.9cm,draw=yellow!20,fill=yellow!20](inner13) {\shortstack[1]{Attitude\\ observer \eqref{16b}}};
\node[point,right of=inner13,node distance=1.5cm](prinner13){};
\draw[-] (inner13)--node[above]{$\hat{R}_i$}(prinner13);
\node[point,right of =inner12,node distance=1.3cm](pr2inner12){};
\node[point,below of=pr2inner12,node distance=0.81cm](prb2inner12){};
\draw[line] (pr2inner12)--(prb2inner12);
\node[point,right of =inner13,node distance=1.3cm](pr2inner13){};
\node[point,below of=pr2inner13,node distance=0.6cm](prb2inner13){};
\draw[-] (pr2inner13)--(prb2inner13);
\node[point,left of=prb2inner13,node distance=2.5cm](prbl2inner13){};
\draw[-] (prb2inner13)--(prbl2inner13);
\node[point,above of=prbl2inner13,node distance=1.2cm](prbla2inner13){};
\draw[line] (prbl2inner13)--(prbla2inner13);
\node[rectangle, minimum width=2.5cm, minimum height=1cm,   xshift=2.4cm, yshift=-0.2cm, draw = blue!10,fill=blue!10] (mid2) {\shortstack[1]{Distributed control\\ \ \\ \ \\ \ \\ \ \\ \ \\ \ \\ \ \\ \ \\ \ \\ \ \\ \ \\ \ \\ \ \\ \ \\ \ \\ \ \\ \ \\ \ \\ \ \\ \ \\ \ \\ \ \\ \ \\ \ \\ \ \\ \ }};
\node[rectangle,minimum width=1.2cm,minimum height=0.5cm,xshift=2.1cm,yshift=-0.3cm,draw=brown!50,fill=brown!50] (inner2){\shortstack[1]{Distributed \\control \eqref{controlconstantestimate}}};
\node[point,right of=inner12,node distance=4.2cm](prinner12){};
\draw[-] (inner12)--node[above]{$\hat{\omega}_i$}(prinner12);
\node[point,below of=prinner12,node distance=0.2cm](prbinner12){};
\draw[line] (prinner12)--(prbinner12);
\node[point,right of=prinner13,node distance=1.6cm](prinner132){};
\draw[-] (prinner13)--(prinner132);
\node[point,above of=prinner132,node distance=0.2cm](prainner132){};
\draw[line] (prinner132)--(prainner132);
\node[point,right of=prainner132,node distance=0.7cm](prinner12r){};
\node[rectangle,minimum width=0.3cm,minimum height=0.3cm,xshift=2.5cm,yshift=-1.3cm,draw=blue!10,fill=blue!10] (R1){$R_i$};
\draw[line] (R1)--(prinner12r);
\node[rectangle,minimum width=0.3cm,minimum height=0.3cm,xshift=2.5cm,yshift=0.6cm,draw=blue!10,fill=blue!10] (omega1){$\omega_i$};
\node[point,above of=prinner12r,node distance=0.77cm](prinner12ra){};
\draw[line] (omega1)--(prinner12ra);
\node[rectangle,minimum width=0.3cm,minimum height=0.3cm,xshift=3.4cm,yshift=-0.3cm,draw=blue!10,fill=blue!10] (tau1){$\tau_i$};
\draw[line] (inner2)--(tau1);
\node[rectangle, minimum width=4cm, minimum height=0.8cm, xshift=-1.6cm, yshift=2.5cm,  draw = green!10,fill=green!10] (outer2) {The interconnected graph $\mathcal{G}$};
\node[point,left of=outer2,node distance=0.7cm](prouter21){};
\node[point,below of=prouter21,node distance=1.1cm](prbouter21){};
\node[point,below of=prouter21,node distance=0.4cm](prbouter22){};
\draw[line] (prbouter22)--node[right]{\shortstack[1]{$\hat{R}_j,\hat{\omega}_j,j\in\mathcal{N}_{\mathcal{G}(i)}$}}(prbouter21);
\node[point,left of=outer2,node distance=0.9cm](prouter21){};
\node[point,below of=prouter21,node distance=1.1cm](prbouter21){};
\node[point,below of=prouter21,node distance=0.4cm](prbouter22){};
\draw[line] (prbouter21)--node[left]{\shortstack[1]{$\hat{R}_i,\hat{\omega}_i$}}(prbouter22);
\node[rectangle, minimum width=4cm, minimum height=0.8cm, xshift=-1.6cm, yshift=-2.7cm,  draw = green!10,fill=green!10] (leader) {\shortstack[1]{Constant\\ reference attitude $R_0$}};
\draw[line] (leader)--node[right]{if $(0,i)\in\mathcal{E}_0$}(mid1);
\end{tikzpicture}
\caption{The implementation block diagram of the observer-based control law \eqref{controlconstantestimate} incorporating the distributed observer \eqref{observerconstant}}\label{observerbasedconstant}
\end{figure}

\begin{figure}[!htbp]
\footnotesize
\centering
\tikzstyle{test}=[diamond,aspect=2,draw,thin]
\tikzstyle{line} = [draw, ->]
\tikzstyle{point}=[coordinate,on grid,]
\begin{tikzpicture}
\node[rectangle, minimum width=6cm, minimum height=1.5cm, draw =green!10,fill=green!10] (outer1) {\shortstack[1]{Rigid body $i$\\ \ \\ \ \\ \ \\ \ \\ \ \\ \ \\ \ \\ \ \\ \ \\ \ \\ \ \\ \ }};
\node[rectangle,minimum width=2cm,minimum height=1cm,xshift=0cm,yshift=-0.2cm,draw=yellow!20,fill=yellow!20] (inner2){\shortstack[1]{Distributed \\control \eqref{districontrol}}};
\node[rectangle,minimum width=0.3cm,minimum height=0.3cm,xshift=-2cm,yshift=0cm,draw=green!10,fill=green!10] (omega1){$\omega_i$};
\node[point,right of=omega1,node distance=1cm](promega1){};
\draw[line] (omega1)--(promega1);
\node[rectangle,minimum width=0.3cm,minimum height=0.3cm,xshift=-2cm,yshift=-0.4cm,draw=green!10,fill=green!10] (R1){$R_i$};
\node[point,right of=R1,node distance=1cm](prR1){};
\draw[line] (R1)--(prR1);
\node[rectangle,minimum width=0.3cm,minimum height=0.3cm,xshift=2cm,yshift=-0.2cm,draw=green!10,fill=green!10] (tau1){$\tau_i$};
\node[point,left of=tau1,node distance=1cm](pltau1){};
\draw[line] (pltau1)--(tau1);
\node[rectangle, minimum width=6cm, minimum height=0.8cm, xshift=0cm, yshift=1.85cm,  draw = green!10,fill=green!10] (outer2) {The interconnected graph $\mathcal{G}$};
\node[point,right of=outer2,node distance=0.1cm](prouter21){};
\node[point,below of=prouter21,node distance=1cm](prbouter21){};
\node[point,below of=prouter21,node distance=0.4cm](prbouter22){};
\draw[line] (prbouter22)--node[right]{\shortstack[1]{$R_j,j\in\mathcal{N}_{\mathcal{G}(i)}$}}(prbouter21);
\node[point,left of=outer2,node distance=0.1cm](prouter21){};
\node[point,below of=prouter21,node distance=1cm](prbouter21){};
\node[point,below of=prouter21,node distance=0.4cm](prbouter22){};
\draw[line] (prbouter21)--node[left]{\shortstack[1]{$R_i$}}(prbouter22);
\node[rectangle, minimum width=6cm, minimum height=0.8cm, xshift=0cm, yshift=-1.85cm,  draw = green!10,fill=green!10] (leader) {\shortstack[1]{Constant reference attitude $R_0$}};
\node[point,above of=leader,node distance=1cm](paleader){};
\draw[line] (leader)--node[right]{if $(0,i)\in\mathcal{E}_0$}node[left]{$R_0$}(paleader);
\end{tikzpicture}
\caption{The implementation block diagram of the observerless control law \eqref{districontrol}}\label{observerlessconstant}
\end{figure}

\subsection{Distributed observer-based attitude synchronization to a constant reference attitude}

This subsection deals with the design of an observer-based distributed control scheme for the attitude synchronization to a constant reference attitude.  Since the constant desired attitude leads to a zero desired angular velocity, as well as a zero desired angular acceleration, innovation terms are designed directly such that the angular velocity estimate converges to zero.
This eliminates the necessity of estimating the desired angular acceleration.
Hence, we propose the following distributed observer:
\begin{subequations}\label{observerconstant}
\begin{align}
&\dot{\hat{R}}_i=\hat{R}_i\hat{\omega}_i^{\times},\label{16a}\\
&\dot{\hat{\omega}}_i=-k_{R^r}\sum\limits_{j\in\mathcal{N}_{\mathcal{G}}(i)}\psi(A^r_{i,j}\hat{R}_j^{\top}\hat{R}_i)-k_{\omega^r}\sum\limits_{j\in\mathcal{N}_{\mathcal{G}(i)}}(\hat{\omega}_i-\hat{\omega}_j)\nonumber\\
&\ \ \ \ \ \ \ \ -k_{R^r}\psi(A^r_{i,0}R_0^{\top}\hat{R}_i)-k_{\omega^r}\Xi_i\hat{\omega}_i,\label{16b}
\end{align}
\end{subequations}
for $i\in\mathcal{V}$,
where $k_{R^r},k_{\omega^r}>0$, $A^r_{i,j}=(A^r_{i,j})^{\top}>0$ with three distinct eigenvalues and $A^r_{i,j}=A^r_{j,i}$, $A^r_{i,0}=(A^r_{i,0})^{\top}>0$ with three distinct eigenvalues, $A^r_{i,0}=\mathbf{0}_{3\times 3},\forall i\in\mathcal{V}\setminus\{1,\ldots,n\}$.
In view of the proposed distributed observer \eqref{16a} for the desired attitude, the desired attitude estimate is adjusted by correcting the desired angular velocity estimate.
The proposed distributed observer \eqref{16b} for the desired angular velocity includes four innovation terms.
The first two terms leverage relative attitude and relative angular velocity information, enabling the estimates of all rigid body systems to converge to a common attitude.
The remaining two terms are activated solely for the rigid body systems with access to the leader's information, allowing to drive the attitude and the angular velocity estimates toward their respective desired values.

Now, we derive the dynamics of the estimation errors.
The absolute attitude error between the $i$-th observer and the leader $0$ is defined as $\tilde{R}^r_{i,0}=R_0^{\top}\hat{R}_i$, whose dynamics satisfy  $\dot{\tilde{R}}^r_{i,0}=\tilde{R}^r_{i,0}(\tilde{\omega}^r_{i,0})^{\times}$ for $i\in\mathcal{V}$, where $\tilde{\omega}^r_{i,0}=\hat{\omega}_i-(\tilde{R}^r_{i,0})^{\top}\omega_0$ is the absolute angular velocity error between the $i$-th observer and the leader $0$.
Since the desired attitude $R_0$ is constant, one has $\omega_0=\mathbf{0}_3$, as well as $\tilde{\omega}^r_{i,0}=\hat{\omega}_i$ for $i\in\mathcal{V}$.
Hence, the resulting dynamics for the absolute angular velocity error are as follows:
\[
\setlength{\arraycolsep}{1pt}
\begin{array}{rcl}
\dot{\tilde{\omega}}^r_{i,0}&=&-k_{R^r}\sum\limits_{j\in\mathcal{N}_{\mathcal{G}}(i)}\psi(A^r_{i,j}\hat{R}_j^{\top}\hat{R}_i)-k_{\omega^r}(\tilde{\omega}^r_{i,0}-\tilde{\omega}^r_{j,0})\\
\ &\ &-k_{R^r}\psi(A^r_{i,0}R_0^{\top}\hat{R}_i)-k_{\omega^r}\Xi_i\tilde{\omega}^r_{i,0}.
\end{array}
\]
The relative attitude error between the observers $i,j\in\mathcal{V}$ satisfying $(j,i)\in\overline{\mathcal{E}}$ is defined as $\tilde{R}^r_{i,j}=\hat{R}_j^{\top}\hat{R}_i$.
Then its dynamics are  $\dot{\tilde{R}}^r_{i,j}=\tilde{R}^r_{i,j}(\tilde{\omega}^r_{i,j})^{\times}$, where the relative angular velocity error is given by $\tilde{\omega}^r_{i,j}=\hat{\omega}_i-(\tilde{R}^r_{i,j})^{\top}\hat{\omega}_j=\tilde{\omega}^r_{i,0}-(\tilde{R}^r_{i,j})^{\top}\tilde{\omega}^r_{j,0}$.
In the obtained oriented graph $\overline{\mathcal{G}}_0$, we index edge $(j,i)\in\overline{\mathcal{E}}_0$ as the $k$-th edge, and
express $\tilde{R}^r_{i,j}$ as $\tilde{R}^r_k$, \textit{i.e.}, $\tilde{R}^r_k=\tilde{R}^r_{i,j}$, and $\tilde{\omega}^r_{i,j}$ as $\tilde{\omega}^r_k$, \textit{i.e.}, $\tilde{\omega}^r_k=\tilde{\omega}^r_{i,j}$.
Let $\tilde{R}^r=\mathrm{diag}(\tilde{R}^r_1,\ldots,\tilde{R}^r_N)$ and $\tilde{W}^r=\mathrm{diag}(\tilde{\omega}^r_1)^{\times},\ldots,(\tilde{\omega}^r_N)^{\times})$.
The compact form of the dynamics of attitude errors is given by
\begin{equation}\label{constantattitudeerror}
\dot{\tilde{R}}^r=\tilde{R}^r\tilde{W}^r.
\end{equation}
As per Lemma \ref{le3} in Appendix \ref{appenC}, one has
\[
\sum\limits_{j\in\mathcal{N}_{\mathcal{G}}(i)}\psi(A^r_{i,j}\tilde{R}_{i,j}^{r})+\psi(A^r_{i,0}\tilde{R}_{i,0}^{r})=\sum\limits_{k=1}^N(\tilde{\mathcal{L}}^r_2)_{i,k}\psi(A^r_{k}\tilde{R}^r_k),
\]
where $A^r_k=A^r_{i,j}$ for $i,j\in\mathcal{V}_0$ with the edge between $i$ and $j$ being described by edge $k$, $\tilde{\mathcal{L}}_2^r$ is obtained from $\tilde{\mathcal{L}}^r$ by removing the first three rows, $\tilde{\mathcal{L}}^r\in\mathbb{R}^{3(N+1)\times 3N}$ is constructed according to the obtained oriented graph $\overline{\mathcal{G}}_0$, with each $3$-by-$3$ block $(\tilde{\mathcal{L}}^r)_{ik}$ being defined as
\begin{equation}\label{tildeL}
(\tilde{\mathcal{L}}^r)_{ik}=\left\{
\begin{array}{ll}
-\tilde{R}^r_k, & k\in(\mathcal{M}_{\overline{\mathcal{G}}_0})^{+}_{i-1},  \\
I_3, & k\in(\mathcal{M}_{\overline{G}_0})^{-}_{i-1},\\
\mathbf{0}_{3\times 3}, &\mathrm{otherwise}.\\
\end{array}
\right.
\end{equation}
It is clear that $\sum\limits_{j\in\mathcal{N}_{\mathcal{G}}(i)}(\tilde{\omega}^r_{i,0}-\tilde{\omega}^r_{j,0})=(\mathrm{Row}_i(LL^{\top})\otimes I_3)\tilde{\omega}^r_0$, where $\tilde{\omega}^r_0=[(\tilde{\omega}^r_{1,0})^{\top}\ \cdots\ (\tilde{\omega}^r_{N,0})^{\top}]^{\top}$.
Hence, the compact form of the estimation error system is as follows:
\begin{equation}\label{estimatedcompactconstant}
\begin{array}{rcl}
\dot{\tilde{R}}^r&=&\tilde{R}^r\tilde{W}^r,\\
\dot{\tilde{\omega}}^r_0&=&-k_{R^r}\tilde{\mathcal{L}}^r_2\Psi^r-k_{\omega^r}((\Xi+LL^{\top})\otimes I_3)\tilde{\omega}^r_0,\\
\end{array}
\end{equation}
where $\Psi^r=[\psi(A^r_1\tilde{R}^r_1)^{\top}\ \cdots\ \psi(A^r_N\tilde{R}^r_N)^{\top}]$.

To drive each rigid body to track its own estimates, we design the control torque $\tau_i$ for the $i$-th rigid body system as in \eqref{controlconstantestimate},
where $\Gamma_i$ is defined in \eqref{gammai} with $\dot{\hat{\omega}}_i$ being defined in \eqref{16b}.
The first term in the proposed distributed control law \eqref{controlconstantestimate} compensates for nonlinear effects arising in the tracking error dynamics. The last two terms act to drive the actual attitude towards its estimate.

Substituting the designed control \eqref{controlconstantestimate}  into \eqref{controlerror1}, one can derive the following closed-loop system
\begin{equation}\label{controlerror21}
\begin{array}{rcl}
\dot{\bar{R}}_i^r&=&\bar{R}_i^r\bar{\omega}_i^{\times},\\
J_i\dot{\bar{\omega}}_i&=&\Sigma_i\bar{\omega}_i-k_{\bar{R}}\psi(\bar{A}_i\bar{R}_i^r)-k_{\bar{\omega}}\bar{\omega}_i, \\
\end{array}
\end{equation}
which can be expressed in the following compact form:
\begin{equation}\label{controlerrorcompact1}
\setlength{\arraycolsep}{1pt}
\begin{array}{rcl}
\dot{\bar{R}}^r&=&\bar{R}^r\bar{W},\\
J\dot{\bar{\omega}}&=&\Sigma\bar{\omega}-k_{\bar{R}}\bar{\Psi}-k_{\bar{\omega}}\bar{\omega},\\
\end{array}
\end{equation}
where $\Sigma=\mathrm{diag}(\Sigma_1,\ldots,\Sigma_N)$ with $\Sigma_i$ being defined in \eqref{sigmai}.

Now we state the stability results of the error systems \eqref{estimatedcompactconstant} and \eqref{controlerrorcompact1}.
The obtained stability results allow us to conclude that the proposed distributed control law \eqref{controlconstantestimate}, integrated with the distributed observer \eqref{observerconstant}, steers the attitudes of the rigid body systems to synchronize to the desired constant attitude, \textit{i.e.}, \textit{Problem 2} is solved.
Before stating our theorem, we define the state $\chi^r=(\tilde{R}^r,\tilde{\omega}_0^r,\bar{R}^r,\bar{\omega})\in \mathcal{S}^r$, with $\mathcal{S}^r:=\mathrm{SO}(3)^N\times\mathbb{R}^{3N}\times\mathrm{SO}(3)^N\times\mathbb{R}^{3N}$, and the following sets:
\[
\begin{array}{l}
\mathcal{C}^{r,d}=\left\{\chi^r\in \mathcal{S}^r~|~\chi^r=(I_{3N},\mathbf{0}_{3N},I_{3N},\mathbf{0}_{3N})\right\},
\end{array}
\]
\[\mathcal{C}^{r,u}=\left\{\chi^r\in \mathcal{S}^r~|~ \chi^r=(\tilde{R}^{r,\ast},\mathbf{0}_{3N},\bar{R}^{r,\ast},\mathbf{0}_{3N})\right\},
\]
where
$\tilde{R}^{r,\ast}=\mathrm{diag}(\tilde{R}_1^{r,\ast},\ldots,\tilde{R}_N^{r,\ast})$, $\bar{R}^{r,\ast}=\mathrm{diag}(\bar{R}_1^{r,\ast},\ldots,$
$\bar{R}_N^{r,\ast})$, with $\tilde{R}_{p}^{r,\ast}=I_3,\forall p\in\mathcal{M}_{\overline{\mathcal{G}}_0}^{I}$, $\tilde{R}_{q}^{r,\ast}=\mathcal{R}(\pi,\tilde{u}_{q}^r),\forall q\in\mathcal{M}_{\overline{\mathcal{G}}_0}^{\pi}$, $\bar{R}_{p}^{r,\ast}=I_3,\forall p\in\mathcal{V}^{I}$, $\bar{R}_{q}^{r,\ast}=\mathcal{R}(\pi,\bar{u}_{q}),\forall q\in\mathcal{V}^{\pi}$. The sets  $\mathcal{M}_{\overline{\mathcal{G}}_0}^{I}$ and $\mathcal{M}_{\overline{\mathcal{G}}_0}^{\pi}$ satisfy $\mathcal{M}_{\overline{\mathcal{G}}_0}^{I}\cup\mathcal{M}_{\overline{\mathcal{G}}_0}^{\pi}=\mathcal{M}_{\overline{\mathcal{G}}_0}$, $\mathcal{M}_{\overline{\mathcal{G}}_0}^{\pi}\neq\emptyset$, and the sets $\mathcal{V}^I$ and $\mathcal{V}^{\pi}$ satisfy $\mathcal{V}^I\cup\mathcal{V}^{\pi}=\mathcal{V}$, $\mathcal{V}^{\pi}\neq\emptyset$.
The vectors $\tilde{u}_{q}^r$, and $\bar{u}_{q}$ are the unit eigenvectors associated to the eigenvalues of $A_{q}^r$ and $\bar{A}_{q}$, respectively.

\begin{theorem}\label{stabilityforestimateconstant}
Consider a network of $N$ rigid body systems governed by the rotational dynamics \eqref{followeri}.
The desired attitude $R_0$ is constant.
Let the distributed observer and control torque be designed as \eqref{observerconstant} and \eqref{controlconstantestimate}, respectively.
Under Assumption \ref{assumpgraph_original}, the following statements hold.
\begin{enumerate}
    \item The trajectories of the error systems \eqref{estimatedcompactconstant} and \eqref{controlerrorcompact1} converge to the set of equilibria $\mathcal{C}^r=\mathcal{C}^{r,d}\cup\mathcal{C}^{r,u}$.
    \item The undesired equilibria in $\mathcal{C}^{r,u}$ are unstable, and the desired equilibrium $\mathcal{C}^{r,d}$ is almost globally asymptotically stable.
    \item  All the rigid body attitudes synchronize (almost globally) to the desired attitude $R_0$, \textit{i.e.}, $\lim_{t\rightarrow\infty}R_i(t)=R_0$ and $\lim_{t\rightarrow\infty}\omega_i(t)=\mathbf{0}_3$ for all $i\in\mathcal{V}$, for almost all initial conditions.
\end{enumerate}
\end{theorem}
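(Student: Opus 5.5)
The proof follows the cascade structure of the closed loop. The estimation error system \eqref{estimatedcompactconstant} is autonomous and independent of the tracking errors. The tracking error system \eqref{controlerrorcompact1} is driven by the observer only through $\Sigma$, which enters as a skew-symmetric (passive) term. We therefore analyse the two subsystems separately with Lyapunov arguments and then combine them through a cascade and almost-global argument.

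\textbf{Step 1: estimation errors.} Take
\[
V_1=k_{R^r}\sum_{k=1}^N\mathbf{tr}\big(A_k^r(I_3-\tilde{R}_k^r)\big)+\tfrac12\|\tilde{\omega}_0^r\|^2 .
\]
Using $\frac{d}{dt}\mathbf{tr}(A(I-R))=2\psi(AR)^\top\omega$ (up to the standard sign convention) for $\dot R=R\omega^\times$, together with the edge relation $\tilde{\omega}^r_k=\tilde{\omega}^r_{i,0}-(\tilde{R}^r_{k})^\top\tilde{\omega}^r_{j,0}$, one has $[\tilde\omega_1^r;\dots;\tilde\omega_N^r]=(\tilde{\mathcal{L}}_2^r)^\top\tilde\omega_0^r$. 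This identity is the one underlying Lemma~\ref{le3}. With gains normalized so the cross terms cancel, this gives
\[
\dot V_1=-k_{\omega^r}(\tilde\omega_0^r)^\top((\Xi+LL^\top)\otimes I_3)\tilde\omega_0^r\le0 ,
\]
which is negative semidefinite by Lemma~\ref{le1}. By LaSalle, trajectories approach the set where $\tilde\omega_0^r\equiv0$. On that set $\tilde{\mathcal{L}}_2^r\Psi^r=0$, and Lemma~\ref{le2} (nonsingularity of $\tilde{\mathcal{L}}_2^r$) forces $\Psi^r=0$, that is, $\psi(A_k^r\tilde R_k^r)=0$ for every $k$. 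Since each $A_k^r$ has distinct eigenvalues, this yields $\tilde R_k^r\in\{I_3\}\cup\{\mathcal{R}(\pi,u):u\in\mathcal{E}_v(A_k^r)\}$. These equilibria are isolated.

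\textbf{Step 2: tracking errors.} Take
\[
V_2=k_{\bar R}\sum_i\mathbf{tr}\big(\bar A_i(I_3-\bar R_i^r)\big)+\tfrac12\bar\omega^\top J\bar\omega .
\]
Because $\bar\omega^\top\Sigma\bar\omega=0$ (as noted in Remark~\ref{re1}), $\dot V_2=-k_{\bar\omega}\|\bar\omega\|^2$. The system is nonautonomous only through $\Sigma(t)$, which depends on $\hat\omega_i$ and is bounded because $V_1$ is nonincreasing. We therefore apply Barbalat's lemma instead of LaSalle. All signals are bounded, $\dot{\bar\omega}$ is bounded, and $\ddot{\bar\omega}$ is bounded as well, since $\dot{\hat\omega}_i$ and its derivative are bounded. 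Hence $\dot{\bar\omega}\to0$. Because $\Sigma\bar\omega\to0$, this gives $\psi(\bar A_i\bar R_i^r)\to0$, so $\bar R_i^r$ converges to $I_3$ or to $\mathcal{R}(\pi,\bar u)$, by the same argument as in Step 1. Together with Step 1, this establishes item 1.

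\textbf{Step 3: instability and almost global stability; the main obstacle.} At any point of $\mathcal{C}^{r,u}$ we linearize the cascade. Because $\hat\omega\to0$ there, $\Sigma$ vanishes at the equilibrium and the linearization is block triangular. Following the standard argument (as in Theorem~\ref{estimationstability}), at $\mathcal{R}(\pi,u)$ we perturb along a small rotation about an eigenvector orthogonal to $u$. This strictly decreases the value of the Lyapunov function below its equilibrium value. Since $V_1+V_2$ is nonincreasing along solutions, the equilibrium is unstable. Moreover, the Jacobian has an eigenvalue with positive real part, and each such equilibrium is hyperbolic. The stable manifold theorem then shows that the union of stable manifolds of the finitely many undesired equilibria has zero Lebesgue measure. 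Local asymptotic stability of $\mathcal{C}^{r,d}$ follows from positive definiteness of $V_1$ and $V_2$ near the identity, which yields item 2.

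The delicate point is the cascade. The tracking subsystem is time-varying before convergence, so we must make sure that the set of initial conditions converging to $\mathcal{C}^{r,u}$ has measure zero for the full system. We handle this by treating the full closed loop as autonomous in the state $\chi^r$, which is legitimate because $\Sigma$ is a function of $\chi^r$ through $\hat\omega_i$. The hyperbolicity computation is then carried out on the full Jacobian; its block-triangular structure makes the spectrum the union of the subsystem spectra.

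\textbf{Item 3.} On $\mathcal{C}^{r,d}$ we have $R_0^\top\hat R_i=I_3$ and $\hat R_i^\top R_i=I_3$, which gives $R_i\to R_0$. Also $\omega_i=\bar\omega_i+(\bar R_i^r)^\top\hat\omega_i\to0$.
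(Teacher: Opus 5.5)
Your items 1 and 3 are essentially the paper's argument. As in the paper, the trace terms need weights $\tfrac{k_{R^r}}{2}$ and $\tfrac{k_{\bar R}}{2}$ for the cross terms to cancel, and using LaSalle on the autonomous observer subsystem instead of Barbalat is harmless. The gap is in item 2.

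Your energy argument finds points near $\chi^{r,\ast}$ with smaller $V$ and then uses monotonicity of $V$ and convergence to isolated equilibria. That proves at most Lyapunov instability. Instability alone does not make the set of initial conditions attracted to $\mathcal{C}^{r,u}$ a null set. For that you need an eigenvalue with positive real part of the Jacobian at each undesired equilibrium, and you only assert this. Establishing it is exactly what the paper's proof of item 2 does.

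Moreover, the descent direction you name fails at some of these equilibria. Let $u,v,w$ be orthonormal eigenvectors of $A$ with eigenvalues $\lambda_u,\lambda_v,\lambda_w$. A direct computation gives
\[
\mathbf{tr}\big(A(I_3-\mathcal{R}(\pi,u)\exp(\epsilon v^{\times}))\big)-\mathbf{tr}\big(A(I_3-\mathcal{R}(\pi,u))\big)=(1-\cos\epsilon)(\lambda_u-\lambda_w).
\]
This is positive when $\lambda_u$ is the largest eigenvalue. At that saddle, rotating about either eigenvector orthogonal to $u$ increases $V$. The direction that always decreases $V$ is rotation about $u$ itself, moving the angle away from $\pi$. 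It changes the trace term by $-(1-\cos\epsilon)(\lambda_v+\lambda_w)$; equivalently, $u^{\top}\mathbf{E}(A\mathcal{R}(\pi,u))u=-\tfrac{1}{2}(\lambda_v+\lambda_w)<0$.

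To close the gap, follow the paper. You are right that $\Sigma$ vanishes at the equilibrium, so $\Sigma\bar\omega$ is second order and the Jacobian is $\mathrm{diag}(\mathbf{M}_1,\mathbf{M}_2)$. Each block is a damped second-order system.

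For $\mathbf{M}_2$, $\lambda$ is an eigenvalue iff $\det(\lambda^2J+\lambda k_{\bar\omega}I_{3N}+k_{\bar R}\mathbf{E}_{\bar A\bar R^{r,\ast}})=0$. The block vector with $\bar u_q$ in each slot $q\in\mathcal{V}^{\pi}$ and zeros elsewhere is a negative direction of $\mathbf{E}_{\bar A\bar R^{r,\ast}}$. Hence the smallest eigenvalue of the symmetric matrix $\lambda^2J+\lambda k_{\bar\omega}I_{3N}+k_{\bar R}\mathbf{E}_{\bar A\bar R^{r,\ast}}$ is negative at $\lambda=0$ and positive for large $\lambda$. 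By continuity it vanishes at some $\lambda^\ast>0$, which is a real positive eigenvalue.

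When the $\pi$-rotation sits on an observer edge, set $\mathcal{L}:=\tilde{\mathcal{L}}_2^{r,\ast}$, which is nonsingular by Lemma~\ref{le2}, and substitute $\tilde{\mathbf{r}}=\mathcal{L}^{\top}y$. The linearized observer becomes $\ddot y+k_{\omega^r}\mathbf{F}_1\dot y+k_{R^r}\mathcal{L}\mathbf{E}_{A^r\tilde R^{r,\ast}}\mathcal{L}^{\top}y=0$. Here $\mathbf{F}_1>0$ by Lemma~\ref{le1}, and the stiffness matrix is congruent to $\mathbf{E}_{A^r\tilde R^{r,\ast}}$, so it has a negative eigenvalue. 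The same intermediate-value argument then gives $\mathbf{M}_1$ a real positive eigenvalue.

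One such eigenvalue is enough. Initial conditions converging to the equilibrium lie in backward images of a local center-stable manifold of positive codimension, so they form a null set. Hyperbolicity is not needed, although it holds because the stiffness matrices are nonsingular and the damping matrices positive definite.
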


\begin{proof}
See Appendix \ref{appenE}.
\end{proof}

\begin{remark}\label{re2}
Unlike the observer-based attitude synchronization strategy for the time-varying reference attitude, the control torque $\tau_i$ formulated in \eqref{controlconstantestimate} for the constant-reference case does not compensate the nonlinear term $\Sigma_i\bar{\omega}_i$, and still ensures that the desired equilibrium $\mathcal{C}^{r,d}$ is almost globally asymptotically stable.
In addition, the constant reference attitude yields a zero desired angular velocity and a zero desired angular acceleration.
This allows innovation terms to be designed to converge to zero, rendering the estimate of the desired angular acceleration unnecessary.
These two distinctions collectively reduce the structural complexity of the observer-based attitude synchronization strategy for the constant reference attitude, as compared with its counterpart for the time-varying reference case.
\end{remark}

\subsection{Observerless distributed attitude synchronization to a constant reference attitude}

A constant reference attitude yields a zero desired angular velocity, which allows to design an observerless distributed control law to achieve the attitude synchronization.
Hence, we propose the following distributed control torque $\tau_i$:
\begin{equation}\label{districontrol}
\tau_i=-k_{\check{R}}\psi(\check{A}_{i,0}R_0^{\top}R_i)-k_{\check{R}}\sum\limits_{j\in\mathcal{N}_{\mathcal{G}}(i)}\psi(\check{A}_{i,j}R_j^{\top}R_i)-k_{\check{\omega}}\omega_i
\end{equation}
for $i\in\mathcal{V}$, where $k_{\check{R}},k_{\check{\omega}}>0$,  $\check{A}_{i,j}=\check{A}_{i,j}^{\top}>0$ with three distinct eigenvalues and $\check{A}_{i,j}=\check{A}_{j,i}$, $\check{A}_{i,0}=\check{A}_{i,0}^{\top}>0$ with three distinct eigenvalues, $\check{A}_{i,0}=\mathbf{0}_{3\times 3},\forall i\in\mathcal{V}\setminus\{1,\ldots,n\}$.
The first term of the proposed observerless distributed control law \eqref{districontrol}, is activated only for the rigid body with access to the leader's information, allowing its attitude to converge to the desired value.
The second term represents the relative attitude, allowing the attitude to align to a common value.
The last term is dedicated to driving the angular velocity to zero.

Next we derive the dynamics of the synchronized errors.
The absolute attitude error between the $i$-th rigid body and the leader $0$ is defined as $\check{R}^r_{i,0}=R_0^{\top}R_i$. Its dynamics satisfy  $\dot{\check{R}}^r_{i,0}=\check{R}^r_{i,0}\check{\omega}_{i,0}^{\times}$, where $\check{\omega}_{i,0}=\omega_i-(\check{R}^r_{i,0})^{\top}\omega_0$ is the absolute angular velocity error between the $i$-th rigid body and the leader $0$.
Since the desired reference attitude $R_0$ is constant, it follows that $\omega_0=\mathbf{0}_{3}$, which further simplifies $\check{\omega}_{i,0}=\omega_i$.
Together with \eqref{districontrol}, the dynamics of absolute angular velocity errors are as follows:
\begin{equation}\label{angularerror}
\begin{array}{rcl}
J_i\dot{\check{\omega}}_{i,0}&=&-\check{\omega}_{i,0}^{\times}J_i\check{\omega}_{i,0}-k_{\check{R}}\psi(\check{A}_{i,0}R_0^{\top}R_i)\\
\ &\ &-k_{\check{R}}\sum\limits_{j\in\mathcal{N}_{\mathcal{G}}(i)}\psi(\check{A}_{i,j}R_j^{\top}R_i)-k_{\check{\omega}}\check{\omega}_{i,0}.
\end{array}\end{equation}
The relative attitude error between the rigid body systems $i,j\in\mathcal{V}$ with $(j,i)\in\overline{\mathcal{E}}$ is defined as $\check{R}^r_{i,j}=R_j^{\top}R_i$.
Its dynamics satisfy $\dot{\check{R}}^r_{i,j}=\check{R}^r_{i,j}\check{\omega}_{i,j}^{\times}$, where $\check{\omega}_{i,j}=\omega_{i}-(\check{R}^r_{i,j})^{\top}\omega_j=\check{\omega}_{i,0}-(\check{R}^r_{i,j})^{\top}\check{\omega}_{j,0}$ is the relative angular velocity error.

In terms of the obtained oriented graph $\overline{\mathcal{G}}_0$, if the edge $(j,i)\in\overline{\mathcal{E}}_0$ is denoted as the $k$-edge, then we define $\check{R}^r_k=\check{R}^r_{i,j}$ and $\check{\omega}_k=\check{\omega}_{i,j}$.
It follows that $\dot{\check{R}}^r=\check{R}^r\check{W}$, where $\check{R}^r=\mathrm{diag}(\check{R}^r_1,\ldots,\check{R}^r_N)$ and $\check{W}=\mathrm{diag}(\check{\omega}_1^{\times},\ldots,\check{\omega}_N^{\times})$.
Hence,
the compact form of the synchronization error system is as follows:
\begin{equation}\label{compactobserverless}
\setlength{\arraycolsep}{1pt}
\begin{array}{rcl}
\dot{\check{R}}^r&=&\check{R}^r\check{W},\\
J\dot{\check{\omega}}_0&=&-\check{W}_0J\check{\omega}_0-k_{\check{R}}\check{\mathcal{L}}^r_2\check{\Psi}-k_{\check{\omega}}\check{\omega}_0,
\end{array}
\end{equation}
where $\check{\Psi}=[\psi(\check{A}_1\check{R}^r_1)^{\top}\ \cdots\ \psi(\check{A}_N\check{R}^r_N)^{\top}]^{\top}$, $\check{\omega}_0=[\check{\omega}_{1,0}^{\top}$
$\cdots\ \check{\omega}_{N,0}^{\top}]^{\top}$, $\check{W}_0=\mathrm{diag}(\check{\omega}_{1,0}^{\times},\ldots,\check{\omega}_{N,0}^{\times})$,
$\check{\mathcal{L}}^r_2$ and $\check{\mathcal{L}}^r$ are defined in Section \ref{lemmas}, and as per Lemma \ref{le3} in Appendix \ref{appenC}, we used the following identity
\[
\psi(\check{A}_{i,0}\check{R}_{i,0}^r)+\sum\limits_{j\in\mathcal{N}_{\mathcal{G}}(i)}\psi(\check{A}_{i,j}\check{R}_{i,j}^r)=\sum\limits_{k=1}^{N}(\check{\mathcal{L}}^r_2)_{i,k}\psi(\check{A}_k\check{R}^r_k),
\]
with $\check{A}_k=\check{A}_{i,j}$ for $i,j\in\mathcal{V}_0$ with the edge between $i$ and $j$ being denoted as edge $k$.

To address \textit{Problem 2}, we should demonstrate that the proposed observerless distributed control law \eqref{districontrol} drives the attitude of the rigid body systems to converge to the desired reference.
To this end, we provide the stability properties of the synchronization error system \eqref{compactobserverless} in the following theorem.
We define the sets $\check{\mathcal{C}}^{d}$ and $\check{\mathcal{C}}^{u}$ as $\check{\mathcal{C}}^{d}=\{
(\check{R}^r=I_{3N},\check{\omega}_0=\mathbf{0}_{3N})\}$ and $\check{\mathcal{C}}^{u}=\{
(\check{R}^{r,\ast},\mathbf{0}_{3N})\in\mathrm{SO}(3)^N\times\mathbb{R}^{3N}|
\check{R}^{r,\ast}=\mathrm{diag}(\check{R}_1^{r,\ast},\ldots\check{R}_N^{r,\ast})\}$
with $\check{R}_{p}^{r,\ast}=I_3,\forall p\in\mathcal{M}_{\overline{\mathcal{G}}_0}^{I}$, $\check{R}_{q}^{r,\ast}=\mathcal{R}(\pi,\check{u}_{q}),\forall q\in\mathcal{M}_{\overline{\mathcal{G}}_0}^{\pi}$.
The sets  $\mathcal{M}_{\overline{\mathcal{G}}_0}^{I}$ and $\mathcal{M}_{\overline{\mathcal{G}}_0}^{\pi}$ satisfy $\mathcal{M}_{\overline{\mathcal{G}}_0}^{I}\cup\mathcal{M}_{\overline{\mathcal{G}}_0}^{\pi}=\mathcal{M}_{\overline{\mathcal{G}}_0}$, $\mathcal{M}_{\overline{\mathcal{G}}_0}^{\pi}\neq\emptyset$. The vector $\check{u}_{q}$ is the unit eigenvector associated to the eigenvalues of $\check{A}_{q}$.

\begin{theorem}\label{stabilityobserverless}
Consider a network of $N$ rigid body systems governed by the rotational dynamics \eqref{followeri}.
The desired attitude $R_0$ is constant.
Let the control torque be designed as \eqref{districontrol}.
Under Assumptions \ref{assumpgraph_original}, the following statements hold.
\begin{enumerate}
    \item The trajectories of the error system \eqref{compactobserverless} converge to the set of equilibria $\check{\mathcal{C}}=\check{\mathcal{C}}^{d}\cup\check{\mathcal{C}}^{u}$.
    \item The undesired equilibria in $\check{\mathcal{C}}^{u}$ are unstable, and the desired equilibrium $\check{\mathcal{C}}^{d}$ is almost globally asymptotically stable.
    \item All the rigid body attitudes synchronize (almost globally) to the desired attitude $R_0$, \textit{i.e.}, $\lim_{t\rightarrow\infty}R_i(t)=R_0$ and $\lim_{t\rightarrow\infty}\omega_i(t)=\mathbf{0}_3$ for all $i\in\mathcal{V}$, for almost all initial conditions.
\end{enumerate}
\end{theorem}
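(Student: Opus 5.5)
We plan to use the standard potential-energy-plus-kinetic-energy Lyapunov argument on $\mathrm{SO}(3)^N\times\mathbb{R}^{3N}$, exploiting the fact that the augmented oriented graph $\overline{\mathcal{G}}_0$ has exactly $N$ edges, so the relative attitudes $\check{R}^r_k$, $k\in\mathcal{M}_{\overline{\mathcal{G}}_0}$, form a valid set of coordinates equivalent to $(R_1,\ldots,R_N)$. The candidate function is
\[
V=k_{\check{R}}\sum_{k=1}^{N}\mathbf{tr}\big(\check{A}_k(I_3-\check{R}^r_k)\big)+\tfrac12\check{\omega}_0^{\top}J\check{\omega}_0 .
\]
It is nonnegative and vanishes only on $\check{\mathcal{C}}^d$, since $\check{A}_k>0$.

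\textbf{Step 1: derivative of $V$.} Differentiating along \eqref{compactobserverless}, and using $\frac{d}{dt}\mathbf{tr}(A(I-R))=2\psi(AR)^{\top}\omega$ for $\dot R=R\omega^\times$, we get
\[
\dot V=2k_{\check{R}}\check{\Psi}^{\top}\check{\omega}-k_{\check{R}}\check{\omega}_0^{\top}\check{\mathcal{L}}^r_2\check{\Psi}-k_{\check{\omega}}\|\check{\omega}_0\|^2 ,
\]
where $\check{\omega}=[\check{\omega}_1^\top\cdots\check{\omega}_N^\top]^\top$ and the gyroscopic term drops because $\check{W}_0$ is skew-symmetric. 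Each edge velocity is $\check{\omega}_k=\check{\omega}_{i,0}-(\check{R}^r_k)^\top\check{\omega}_{j,0}$, which gives $\check{\omega}=(\check{\mathcal{L}}^r_2)^{\top}\check{\omega}_0$ (the tail block is $I_3$ and the head block is $-\check{R}^r_k$, transposed; for leader edges $\check{\omega}_{0,0}=0$). We will check the scaling: the definition of $\psi$ should absorb the factor $2$, or else $V$ is taken with weight $k_{\check{R}}/2$. With that normalization the cross terms cancel and $\dot V=-k_{\check{\omega}}\|\check{\omega}_0\|^2\le 0$. Trajectories are therefore bounded, $\mathrm{SO}(3)^N$ being compact.

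\textbf{Step 2: LaSalle and the equilibrium set.} The system is autonomous, so LaSalle's invariance principle applies. On the largest invariant set in $\{\check{\omega}_0=0\}$ we have $\dot{\check{\omega}}_0=0$, hence $\check{\mathcal{L}}^r_2\check{\Psi}=0$. By Lemma \ref{le2}, $\check{\mathcal{L}}^r_2$ is nonsingular, so $\check{\Psi}=0$, that is $\psi(\check{A}_k\check{R}^r_k)=0$ for every $k$. Because each $\check{A}_k$ has distinct eigenvalues, this forces $\check{R}^r_k\in\{I_3\}\cup\{\mathcal{R}(\pi,u):u\in\mathcal{E}_v(\check{A}_k)\}$. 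This yields item 1.

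\textbf{Step 3: instability of the undesired equilibria.} This is the main obstacle. For an equilibrium in $\check{\mathcal{C}}^u$, pick an edge $q$ with $\check{R}^r_q=\mathcal{R}(\pi,\check{u}_q)$. Since $V$ is nonincreasing, it suffices to exhibit, arbitrarily close to the equilibrium, points with zero velocity and strictly smaller $V$; such points cannot converge back to the equilibrium. Because the relative attitudes are independent coordinates (tree structure, $N$ edges), we perturb only $\check{R}^r_q$, replacing it by $\mathcal{R}(\pi,\check{u}_q)\exp(\epsilon v^\times)$ with $v$ chosen along an eigenvector direction of $\check{A}_q$ associated with a larger eigenvalue. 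The term $\mathbf{tr}(\check{A}_q(I-\check{R}^r_q))$ then strictly decreases to second order. We need to verify that such perturbations are realizable through some $R_i$ while the other edges stay fixed: for the perturbed edge $(j,i)$, rotating the whole subtree below $i$ does this.

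To go further, we would use a Chetaev argument or linearization. The Jacobian at the equilibrium has an eigenvalue with positive real part, since the Hessian of the potential is indefinite and the damping is positive, so the equilibrium is a hyperbolic saddle. The stable manifold theorem then says the set of initial conditions converging to $\check{\mathcal{C}}^u$ has zero Lebesgue measure, since the undesired equilibria are finitely many isolated points. Combined with local asymptotic stability of $\check{\mathcal{C}}^d$ (where $V$ is a strict local Lyapunov function), this gives almost global asymptotic stability, i.e., item 2.

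\textbf{Step 4: back to the original variables.} For item 3, we propagate along the tree from the leader. Leader edges give $R_0^\top R_i\to I_3$ for $i\le n$, and each remaining edge gives $R_j^\top R_i\to I_3$; by Assumption \ref{assumpgraph_original} every vertex is reached from the leader by a path, so $R_i\to R_0$ and $\omega_i=\check{\omega}_{i,0}\to 0$.
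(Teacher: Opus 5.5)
Your proposal follows the paper's proof essentially step for step. The shared steps are: the same energy function (your $k_{\check R}/2$ weighting is exactly the paper's $V_3$), LaSalle's principle with the nonsingularity of $\check{\mathcal{L}}^r_2$ from Lemma~\ref{le2} and the characterization of $\psi(\check A_k\check R^r_k)=\mathbf{0}_3$, a linearization with a positive real eigenvalue, the stable manifold theorem, and path composition for item 3. Your damped-mechanical-system view of the linearization is exactly what the paper's $f(\lambda)$ argument from Theorem~\ref{estimationstability} exploits: after $\check{\mathbf r}=(\check{\mathcal L}^{r,\ast}_2)^{\top}\mathbf{x}$, the stiffness matrix $k_{\check R}\check{\mathcal L}^{r,\ast}_2\mathbf{E}_{\check A\check R^{r,\ast}}(\check{\mathcal L}^{r,\ast}_2)^{\top}$ is congruent to $\mathbf{E}_{\check A\check R^{r,\ast}}$, which is nonsingular with at least one negative eigenvalue (not necessarily indefinite).

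One detail in your auxiliary energy argument needs correcting. Rotating about an eigenvector $u_\ell\perp\check u_q$ has second variation proportional to $\lambda^{\check A_q}_{\check u_q}-\lambda^{\check A_q}_{u_{\ell'}}$, which is governed by the \emph{other} perpendicular eigenvalue. Your ``larger eigenvalue'' rule can therefore fail, e.g.\ when $\lambda^{\check A_q}_{\check u_q}$ is the largest eigenvalue. Perturbing the rotation angle about $\check u_q$ itself always lowers $V$, since $\check u_q^{\top}\mathbf{E}(\check A_q\check R^{r,\ast}_q)\check u_q=-\tfrac12(\mathbf{tr}(\check A_q)-\lambda^{\check A_q}_{\check u_q})<0$.
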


\begin{proof}
See Appendix \ref{appenF}.
\end{proof}

\section{Simulations}\label{simulation}
Consider a group of 7 rigid body systems, the interconnected topology of which is shown in Fig. \ref{interconnectiongraph}.
The desired attitude $R_0$ is accessible to rigid bodies $1$ and $2$.
The neighbor sets of agents are given as $\mathcal{N}_{\mathcal{G}}(1)=\{3\},\mathcal{N}_{\mathcal{G}}(2)=\{7\},\mathcal{N}_{\mathcal{G}}(3)=\{1,5,6\},\mathcal{N}_{\mathcal{G}}(4)=\{5\},\mathcal{N}_{\mathcal{G}}(5)=\{3,4\},\mathcal{N}_{\mathcal{G}}(6)=\{3\},$
$\mathcal{N}_{\mathcal{G}}(7)=\{2\}$.
A virtual leader rigid body capable of providing the desired attitude $R_0$ is augmented to the original graph $\mathcal{G}$, yielding an augmented graph $\mathcal{G}_0$.
In the augmented graph $\mathcal{G}_0$, a virtual direction is assigned to each pair of interconnected rigid bodies.
It is clear that the considered graphs satisfy Assumption \ref{assumpgraph_original}.


\begin{figure}[!htbp]
\centering
\footnotesize
\tikzstyle{startstop} = [rectangle, rounded corners, minimum width = 0.3cm, minimum height=0.3cm,text centered, draw = black]
\tikzstyle{test}=[diamond,aspect=2,draw,thin]
\tikzstyle{point}=[coordinate,on grid,]
\tikzstyle{line} = [draw, -latex']
\subfloat[An interconnection graph $\mathcal{G}$]{
\begin{tikzpicture}
\node[startstop,draw=red,fill=red](1){\includegraphics[width=0.02\textwidth]{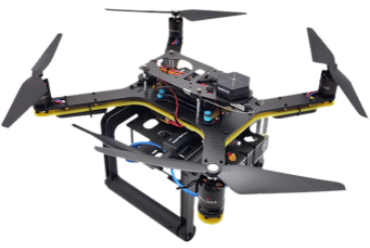}};
\node[startstop,right of=1,node distance=9mm,draw=red,fill=red](2){\includegraphics[width=0.02\textwidth]{uav.png}};
\node[startstop,right of=2,node distance=10mm,draw=white](3){\includegraphics[width=0.02\textwidth]{uav.png}};
\node[point,right of=3,node distance=8mm](point3){};
\node[startstop,below of=point3,node distance=6mm,draw=white](4){\includegraphics[width=0.02\textwidth]{uav.png}};
\node[point,left of=4,node distance=8mm](point4){};
\node[startstop,below of=point4,node distance=7mm,draw=white](5){\includegraphics[width=0.02\textwidth]{uav.png}};
\node[startstop,left of=5,node distance=9mm,draw=white](6){\includegraphics[width=0.02\textwidth]{uav.png}};
\node[startstop,left of=6,node distance=10mm,draw=white](7){\includegraphics[width=0.02\textwidth]{uav.png}};
\draw[-] (2)--(3);
\draw[-] (3)--(5);
\draw[-] (3)--(6);
\draw[-] (1)--(7);
\draw[-] (5)--(4);
\node[startstop,left of=1,node distance=4.5mm,draw=white](11){{\color{red}2}};
\node[startstop,above of=2,node distance=4.5mm,draw=white](22){{\color{red}1}};
\node[startstop,above of=3,node distance=3.5mm,draw=white](33){3};
\node[startstop,above of=4,node distance=3.5mm,draw=white](44){4};
\node[startstop,right of=5,node distance=3.5mm,draw=white](55){5};
\node[startstop,above of=6,node distance=3.5mm,draw=white](66){6};
\node[startstop,left of=7,node distance=3.5mm,draw=white](77){7};
\end{tikzpicture}
\label{interconnectiongraph}
}\quad\quad
\subfloat[An augmented oriented graph $\mathcal{G}_0$]{
\begin{tikzpicture}
\node[startstop,draw=white](0){\includegraphics[width=0.02\textwidth]{uav.png}};
\node[point,left of=0,node distance=6mm](point1){};
\node[startstop,below of=point1,node distance=8mm,draw=red,fill=red](1){\includegraphics[width=0.02\textwidth]{uav.png}};
\node[point,right of=0,node distance=6mm](point2){};
\node[startstop,below of=point2,node distance=8mm,draw=red,fill=red](2){\includegraphics[width=0.02\textwidth]{uav.png}};
\node[startstop,right of=2,node distance=9mm,draw=white](3){\includegraphics[width=0.02\textwidth]{uav.png}};
\node[point,right of=3,node distance=10mm](point3){};
\node[startstop,below of=point3,node distance=6mm,draw=white](4){\includegraphics[width=0.02\textwidth]{uav.png}};
\node[point,left of=4,node distance=9mm](point4){};
\node[startstop,below of=point4,node distance=6mm,draw=white](5){\includegraphics[width=0.02\textwidth]{uav.png}};
\node[startstop,left of=5,node distance=10mm,draw=white](6){\includegraphics[width=0.02\textwidth]{uav.png}};
\node[startstop,left of=6,node distance=12mm,draw=white](7){\includegraphics[width=0.02\textwidth]{uav.png}};
\draw[->] (0)--(1);
\draw[->] (0)--(2);
\draw[->] (2)--(3);
\draw[->] (3)--(5);
\draw[->] (3)--(6);
\draw[->] (5)--(4);
\draw[->] (1)--(7);
\node[startstop,above of=0,node distance=4mm,draw=white](00){0};
\node[startstop,left of=1,node distance=4.5mm,draw=white](11){2};
\node[startstop,above of=2,node distance=4.5mm,draw=white](22){1};
\node[startstop,above of=3,node distance=3.5mm,draw=white](33){3};
\node[startstop,above of=4,node distance=3.5mm,draw=white](44){4};
\node[startstop,right of=5,node distance=3.5mm,draw=white](55){5};
\node[startstop,above of=6,node distance=3.5mm,draw=white](66){6};
\node[startstop,left of=7,node distance=3.5mm,draw=white](77){7};
\end{tikzpicture}
\label{augmentedgraph}
}
\caption{Interconnection graphs of $7$ rigid body systems and $1$ leader }
\end{figure}

The inertia matrix $J_i$ of each rigid body is given by $J_i=\frac{1}{10}\mathrm{diag}(i,i+2,2i),i\in\{1,\ldots,7\}$.
In the subsequent simulations, the initial conditions are set as $\hat{R}_i(0)=\mathcal{R}(-\frac{i}{10}\pi,u_2)$, $\hat{\omega}_i(0)=[i\ i+0.5\ i+1]^{\top}$, $\hat{\sigma}_i(0)=[i+1\ i+0.5\ i]^{\top}$, $R_i(0)=\mathcal{R}(\frac{i}{10}\pi,u_2)$, $\omega_i(0)=[i+0.5\ i\ i+1]^{\top}$,
where $i\in\{1,\ldots,7\}$, $u_1=[0\ 1\ 0]^{\top},u_2=\frac{1}{\sqrt{2}}[1\ 1\ 0]^{\top}$.

\subsection{Observer-based attitude synchronization to a time-varying reference trajectory}

In this subsection, the desired attitude $R_0$ and the desired angular velocity $\omega_0$ are generated by \eqref{leader}, with $P$ being taken as
\[
P=\left[
\begin{array}{ccc}
-1&0&0\\
0&0&\cos(\frac{3\pi}{4})\\
0&\sin(\frac{3\pi}{4})&0\\
\end{array}
\right]_.
\]
The initial conditions of the reference trajectory are set as $\omega_0(0)=[1\ 2\ 0]^{\top},\sigma_0(0)=[1\ 0\ 0]^{\top},R_0(0)=\mathcal{R}(0.8\pi,u_0)$, where $u_0=\frac{1}{\sqrt{21}}[1\ 4\ 2]^{\top}$.
The parameters are chosen as $k_{R^l}=9,k_{\omega^l}=14,k_{\tilde{\sigma}}=12,k_{\bar{R}}=8,k_{\bar{\omega}}=13$ and
$A_{i,0}^l=\mathrm{diag}(14+i,15+i,16+i)$, $A_{i,j}^l=\mathrm{diag}(10+\max\{i,j\},11+\max\{i,j\},12+\max\{i,j\})$, $\bar{A}_i=\mathrm{diag}(10+i,11+i,12+i)$.

Fig. \ref{observer-based-time-varying} displays the simulation results of the proposed observer-based control law \eqref{control}, incorporating the distributed observer \eqref{observer}. As shown in Fig. \ref{observer-based-time-varying-a}, the attitude error between the desired and actual orientations of each rigid-body system asymptotically converges to zero. Meanwhile, Fig. \ref{observer-based-time-varying-b} validates the convergence of the angular velocity error. Collectively, these results confirm that the proposed observer-based control law \eqref{control} incorporating the distributed observer \eqref{observer} allows all rigid-body systems to achieve attitude synchronization with the desired reference trajectory.


\begin{figure}[!htbp]
\subfloat[Attitude error]{\includegraphics[width=0.25\textwidth]{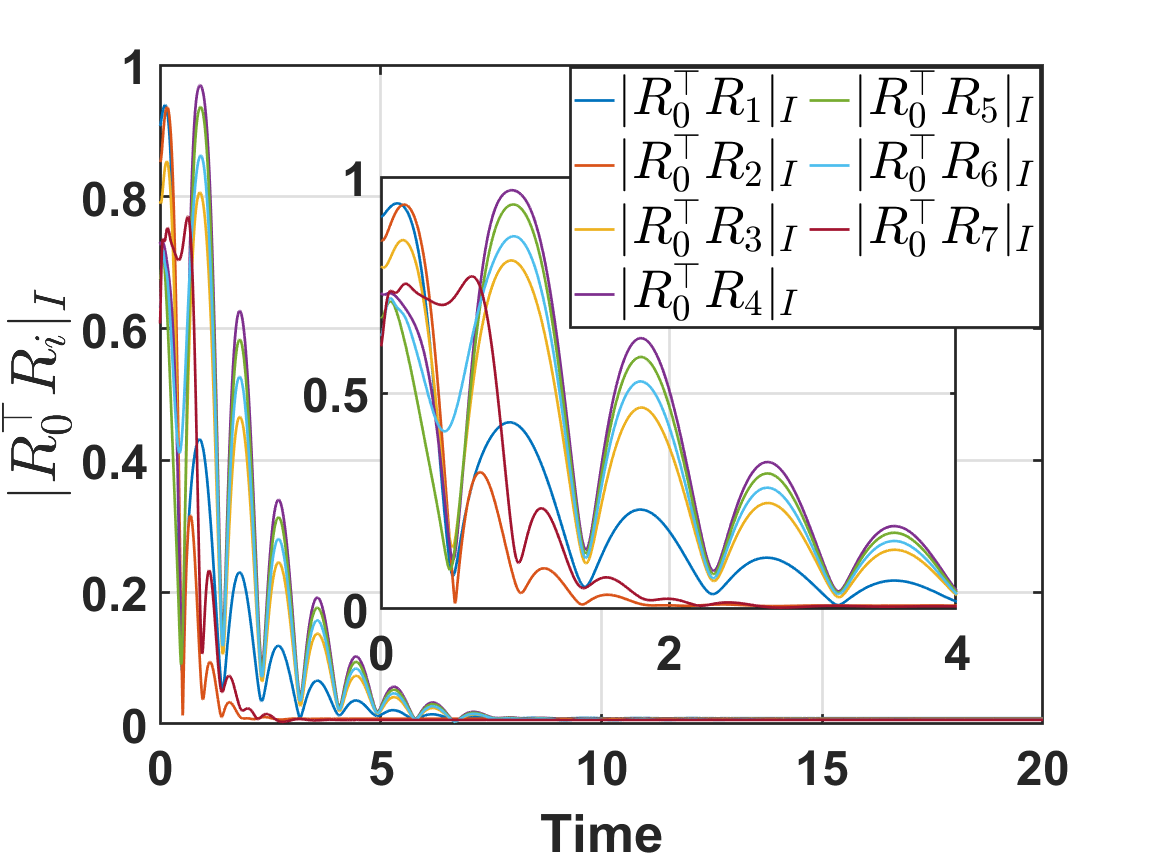}\label{observer-based-time-varying-a}}
\subfloat[Angular velocity error]{\includegraphics[width=0.25\textwidth]{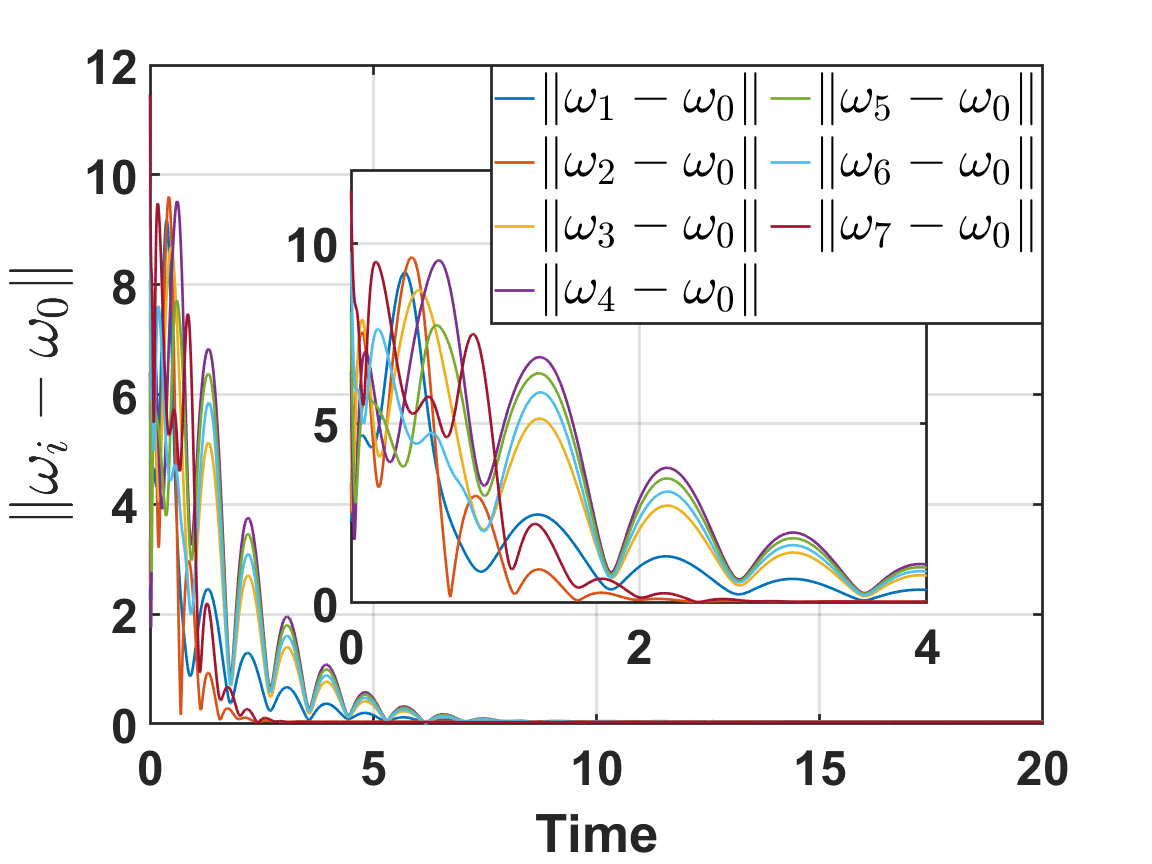}\label{observer-based-time-varying-b}}
\caption{Time evolution of the error between the leader and the follower with a time-varying reference trajectory
}
\label{observer-based-time-varying}
\end{figure}

\subsection{Attitude synchronization to a constant reference attitude}

In this subsection, the desired constant attitude $R_0$ is taken as $R_0=\mathcal{R}(0.8\pi,u_0)$.
Two groups of parameters are adopted for comparative studies, which further verifies the superior performance of the observerless distributed control scheme over the distributed observer-based control scheme.

The first group of parameters are taken as $A^r_{i,j}=A^l_{i,j}$ for $(j,i)\in\mathcal{E}_0$, $k_{R^r}=14,k_{\omega^r}=12,k_{\bar{R}}=13,k_{\bar{\omega}}=11$.
Fig. \ref{observer-based-constant} presents the simulation results of the proposed observer-based distributed control law \eqref{observerconstant} incorporating the distributed observer \eqref{controlconstantestimate}. The convergence of attitude and angular velocity errors is illustrated in Fig. \ref{observer-based-constant-a} and Fig. \ref{observer-based-constant-b}, respectively.
This sufficiently demonstrates that the proposed observer-based distributed control law \eqref{observerconstant} incorporating the distributed observer \eqref{controlconstantestimate} can drive the attitudes of all rigid-body systems to synchronize to the desired constant attitude $R_0$.


\begin{figure}[!htbp]
\subfloat[Attitude error]{\includegraphics[width=0.25\textwidth]{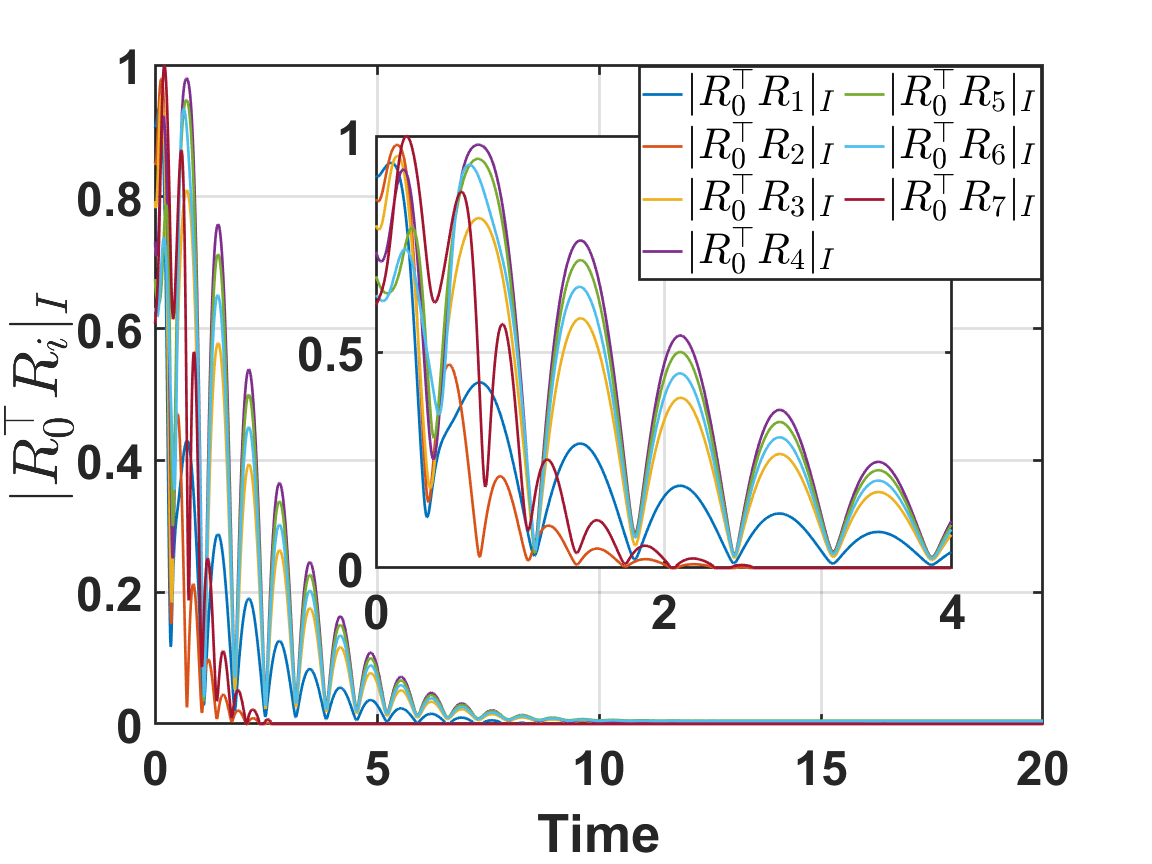}\label{observer-based-constant-a}}
\subfloat[Angular velocity error]{\includegraphics[width=0.25\textwidth]{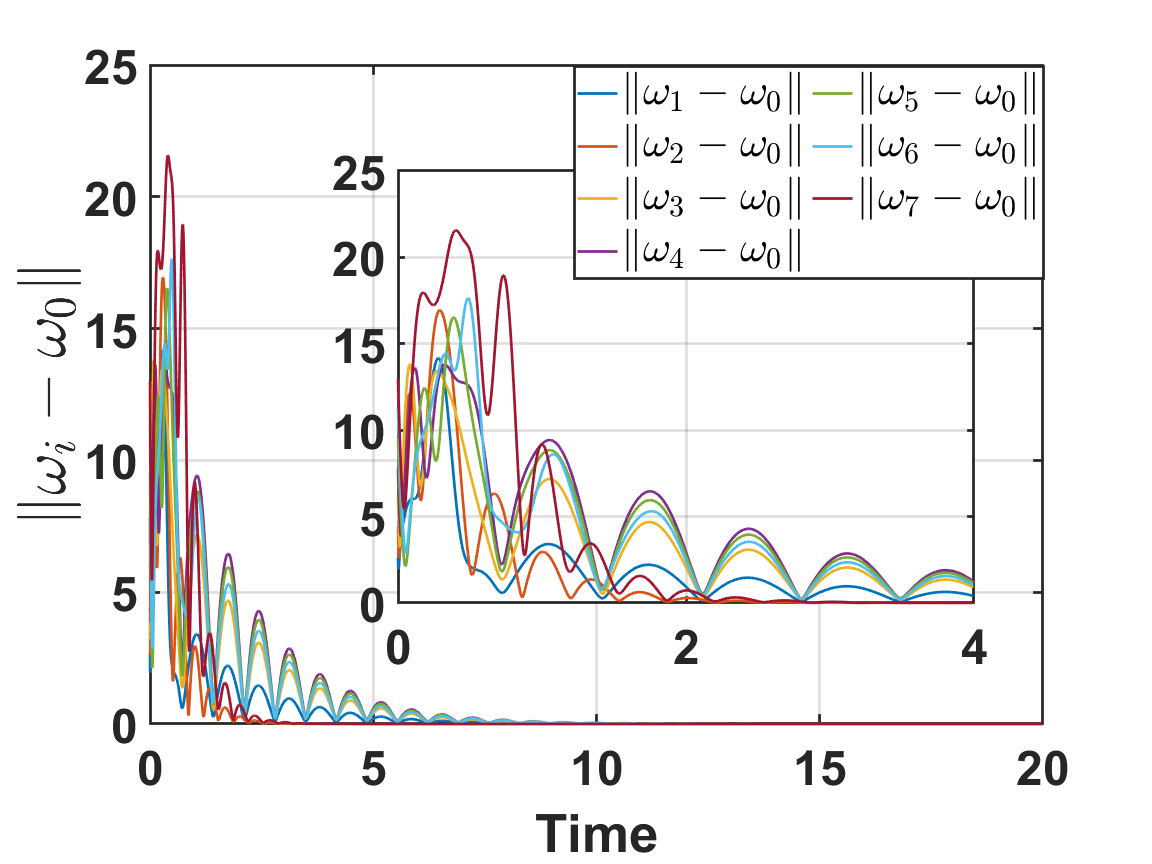}\label{observer-based-constant-b}}
\caption{Time evolution of the error between the leader and the follower with a constant reference attitude, using the proposed observer-based distributed control scheme \eqref{controlconstantestimate}}
\label{observer-based-constant}
\end{figure}

The second group of parameters are taken as $k_{\check{R}}=1.9,k_{\check{\omega}}=1.9$ and $\check{A}_{i,0}=\mathrm{diag}(5+i,6+i,7+i)$, $\check{A}_{i,j}=\mathrm{diag}(10-\min\{i.j\},9-\min\{i,j\},8-\min\{i,j\})$.
Fig. \ref{observerless-constant} illustrates the simulation results for the proposed distributed observerless control law \eqref{districontrol}.
Specifically, Fig. \ref{observerless-constant-a} and Fig. \ref{observerless-constant-b} depict the convergence of the attitude and the angular velocity errors, respectively.
The simulation results verify that the proposed distributed observerless control law \eqref{districontrol} enables all rigid body systems' orientations to synchronize to the desired constant attitude $R_0$.

\begin{figure}[!htbp]
\subfloat[Attitude error]{\includegraphics[width=0.25\textwidth]{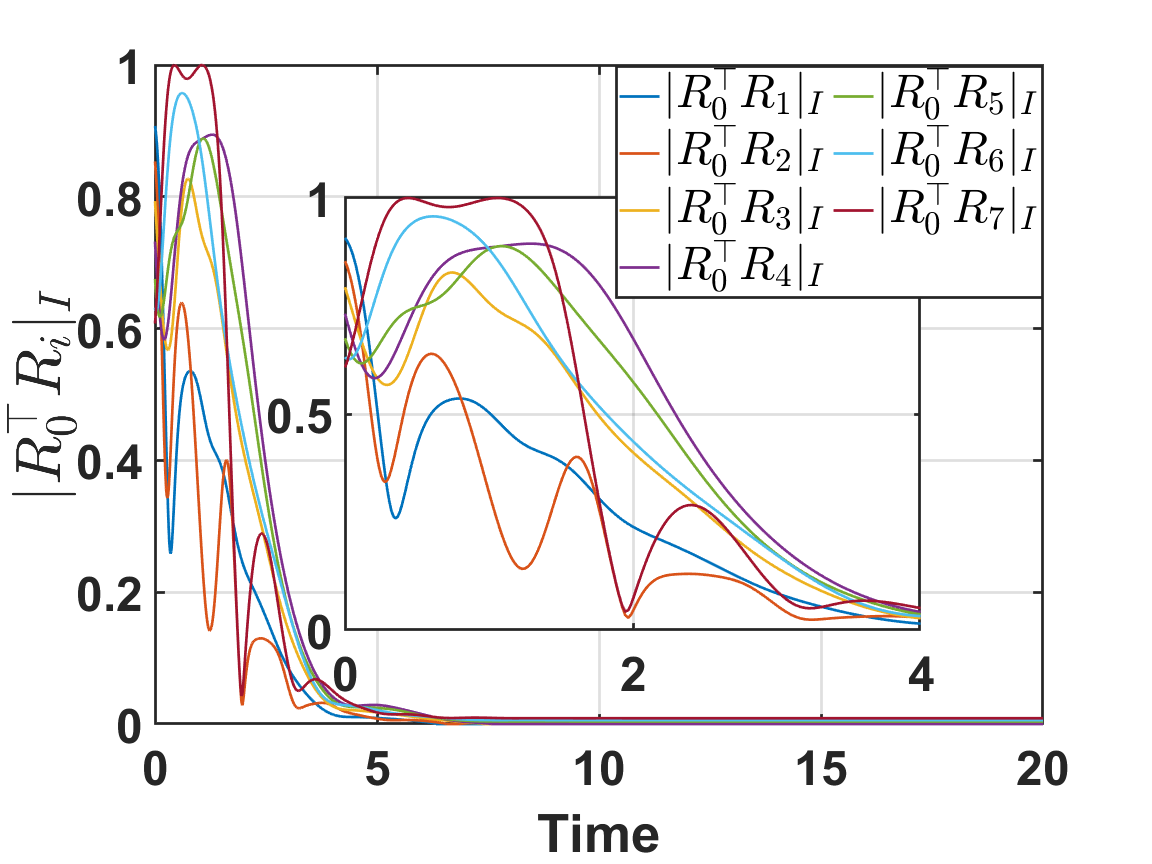}\label{observerless-constant-a}}
\subfloat[Angular velocity error]{\includegraphics[width=0.25\textwidth]{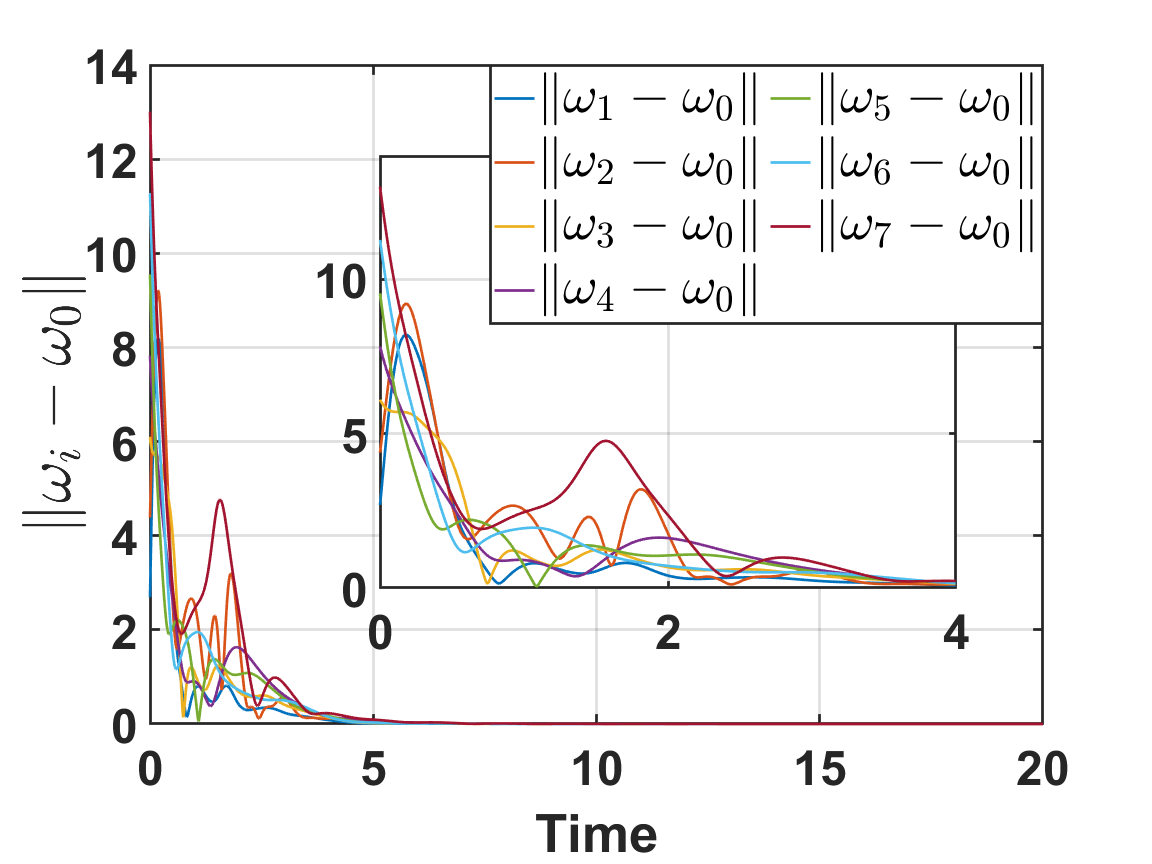}\label{observerless-constant-b}}
\caption{Time evolution of the error between the leader and the follower with a constant reference attitude, using the proposed distributed observerless control scheme \eqref{districontrol}}
\label{observerless-constant}
\end{figure}

Comparative analysis between Fig. \ref{observer-based-constant} and Fig. \ref{observerless-constant} demonstrates the superiority of the observerless control protocol \eqref{districontrol} over the observer-based counterpart \eqref{observerconstant} equipped with the distributed observer \eqref{controlconstantestimate}.
Notably, these two schemes achieve identical convergence performance, even though the observerless control protocol \eqref{districontrol} adopts significantly lower control gains.

\section{Conclusion}\label{conclusion}

We proposed two types of distributed control for the attitude synchronization on $\mathrm{SO}(3)$ for a group of heterogeneous rigid body systems over an undirected and acyclic communication graph topology: an observerless control strategy that allows to align all the attitudes to a constant reference attitude, and an observer-based control strategy that allows to handle both constant and time-varying reference attitudes. The reference attitude, represented by a virtual leader, is only available to a subset of the follower agents and the augmented graph including the virtual leader contains a directed spanning tree rooted at the virtual leader.
Since only a subset of the agents have access to the reference trajectory, all remaining agents need to estimate the desired reference states (attitude, angular velocity and angular acceleration) in the time-varying reference case. In the constant reference case, the estimation of the desired states by the followers is not necessary, and the design of a simplified observerless distributed control scheme is possible.
As mentioned in Remark \ref{re1}, the compensation of the nonlinear term $\Sigma_i\bar{\omega}_i$ in the control torque $\tau_i$ in \eqref{control} is not necessary since the matrix $\Sigma_i$ is skew symmetric, resulting in a passive term that does not affect the Lyapunov time-derivative. This compensation is performed for technical reasons that allow to prove our claimed almost global asymptotic stability. Without this compensations, the resulting control torque in \eqref{controlconstantestimate} would guarantee local exponential stability. Almost global asymptotic stability with the control torque in \eqref{controlconstantestimate} is not excluded and a formal proof will be part of our future investigations.
It is worth noting that almost global asymptotic stability is the strongest stability performance achievable via smooth feedback on $\mathrm{SO}(3)$.
This is fundamentally attributed to the topological properties of $\mathrm{SO}(3)$,  precluding the existence of any smooth feedback control law that stabilizes globally the attitude. Another future research direction consists in the design of a distributed hybrid control strategy for the leader-follower attitude synchronization, endowed with global asymptotic stability guarantees.

\appendices

\section{Proof of Lemma \ref{le1}} \label{appenA}

This result is proven by showing that the eigenvalues of $\Xi+LL^{\top}$ are positive.
To this end, we first prove that $\Xi+LL^{\top}$ is nonsigular using the result proven in \cite{shivakumar1974sufficient}, and then point out that the eigenvalues of $\Xi+LL^{\top}$ are nonnegative using Ger{\v{s}}gorin disc theorem \cite{horn2012matrix}.

It is clear that $(LL^{\top})_{ii}=\sum\limits_{i\neq j}|(LL^{\top})_{ij}|$ for $i\in\mathcal{V}$ since $LL^{\top}$ is the Laplacian matrix of the graph $\mathcal{G}$.
According to \eqref{alpha}, one has $(\Xi+LL^{\top})_{ii}>\sum\limits_{i\neq j}|(\Xi+LL^{\top})_{ij}|$ for $i\in\{1,\ldots,n\}$ and $(\Xi+LL^{\top})_{ii}=\sum\limits_{i\neq j}|(\Xi+LL^{\top})_{ij}|$ for $i\in\{n+1,\ldots,N\}$.
Hence, $\Xi+LL^{\top}$ is diagonally dominant with at least one $i\in\mathcal{V}$ such that $(\Xi+LL^{\top})_{ii}>\sum\limits_{i\neq j}|(\Xi+LL^{\top})_{ij}|$.

For each $j\in\{n+1,\ldots,N\}$, if there exists $k_1\in\{1,\ldots,n\}$ such that $j\in\mathcal{N}_{\mathcal{G}}(k_1)$, then $(\Xi+LL^{\top})_{jk_1}=-1\neq0$ due to the fact that $LL^{\top}$ is the Laplacian matrix and $\Xi$ is a diagonal matrix.
If no $i\in\{1,\ldots,n\}$ satisfying $j\in\mathcal{N}_{\mathcal{G}}(i)$ exists, then according to Assumption \ref{assumpgraph_original}, one can find a path $(0,k_1,k_2,\ldots,k_m,j)$, in the oriented graph $\overline{\mathcal{G}}_0$, from the leader $0$ to the $j$-th rigid body system, where $k_1\in\{1,\ldots,n\}$.
This means that, in the graph $\mathcal{G}$, there exists a path $(k_1,k_2,\ldots,k_m,j)$ from the $k_1$-th rigid body system to the $j$-th rigid body system, leading to
\begin{align*}
&(\Xi+LL^{\top})_{k_1k_2}=\cdots=(\Xi+LL^{\top})_{k_{m-1}k_m}\\
&=(\Xi+LL^{\top})_{k_mj}=-1\neq0.
\end{align*}
This, together with the fact that $\Xi+LL^{\top}$ is diagonally dominant with at least one $i\in\mathcal{V}$ such that $(\Xi+LL^{\top})_{ii}>\sum\limits_{i\neq j}|(\Xi+LL^{\top})_{ij}|$, implies that $\Xi+LL^{\top}$ is non-singular as per the result proven in \cite{shivakumar1974sufficient}.

As per Ger{\v{s}}gorin disc theorem, all eigenvalues of $\Xi+LL^{\top}$ are in the union of Ger{\v{s}}gorin disc
\[\bigcup_{i=1}^{N}\{\lambda\in\mathbb{R}||\lambda-(\Xi+LL^{\top})_{ii}|\leq\sum_{i\neq j}|(\Xi+LL^{\top})_{ij}|\}.\]
Due to the fact that $(\Xi+LL^{\top})_{ii}\geq \sum_{i\neq j}|(\Xi+LL^{\top})_{ij}|$, one can conclude that the eigenvalues of $\Xi+LL^{\top}$ are nonnegative.
Therefore, the eigenvalues of $\Xi+LL^{\top}$ are positive since $\Xi+LL^{\top}$ is non-singular.

\section{Proof of Lemma \ref{le2}}\label{appenB}

To prove that $\check{\mathcal{L}}_2^l$ is non-singular, we show that $\check{\mathcal{L}}_2^lx=\mathbf{0}_{3N}$ has unique solution $x=\mathbf{0}_{3N}$.
Assume that $x=[x_1^{\top}\ \cdots\ x_{N}^{\top}]^{\top}\neq\mathbf{0}_{3N}$ is one of the solutions of $(\check{\mathcal{L}}_2^l)^{\top}x=\mathbf{0}_{3N}$, where $x_i\in\mathbb{R}^3$, $i\in\mathcal{V}$.
Since the desired attitude $R_0$ is available to the rigid body systems $i=1,\ldots,n$, one has $(\check{\mathcal{L}}^l)_{i+1,k_i}\neq\mathbf{0}_{3\times 3}$ and $(\check{\mathcal{L}}^l)_{l+1,k_i}=\mathbf{0}_{3\times 3}$ for $l\in\mathcal{V}\setminus\{i\}$, which in turn implies that $\sum\limits_{l=1}^{N}((\check{\mathcal{L}}^l)_{l+1,k_i})^{\top}x_l=((\check{\mathcal{L}}^l)_{i+1,k_i})^{\top}x_i=\mathbf{0}_{3}$ for $i\in\{1,\ldots,n\}$, where the edge $(0,i)$ is denoted by  $k_i$.
From \eqref{checkL}, the block $(\check{\mathcal{L}}^l)_{jk}$ with $(\check{\mathcal{L}}^l)_{jk}\neq\mathbf{0}_{3\times 3}$ is non-singular.
Hence, one has $x_i=\mathbf{0}_3$ for $i\in\{1,\ldots,n\}$.

According to Assumption \ref{assumpgraph_original}, for any $i\in\{n+1,\ldots,N\}$, there exists a path $(0,j_1,\ldots,j_m,i)$ from the leader $0$ to the $i$-th rigid body system, where $j_1\in\{1,\ldots,n\}$.
Denoting the edge $(j_{d},j_{d+1})$ by $k_{j_{d+1}}$ for $d\in\{1,\ldots,m-1\}$, one has $(\check{\mathcal{L}}^l)_{j_d+1,k_{j_{d+1}}}\neq\mathbf{0}_{3\times3},(\check{\mathcal{L}}^l)_{j_{d+1}+1,k_{j_{d+1}}}\neq\mathbf{0}_{3\times 3}$, and $(\check{\mathcal{L}}^l)_{l+1,k_{j_{d+1}}}=\mathbf{0}_{3\times3}$ for $l\in\mathcal{V}\setminus\{j_d,j_{d+1}\}$, implying that
\[
\begin{array}{ll}
\sum\limits_{l=1}^{N}((\check{\mathcal{L}}^l)_{l+1,k_{j_{d+1}}})^{\top}x_{l}=&((\check{\mathcal{L}}^l)_{j_d+1,k_{j_{d+1}}})^{\top}x_{j_d}\\
\ &+((\check{\mathcal{L}}^l)_{j_{d+1}+1,k_{j_{d+1}}})^{\top}x_{j_{d+1}}=\mathbf{0}_3.
\end{array}
\]
Using the fact again that the block $(\check{\mathcal{L}}^l)_{jk}$ with $(\check{\mathcal{L}}^l)_{jk}\neq\mathbf{0}_{3\times 3}$ is non-singular, one has
\[
x_{j_{d+1}}=-(\check{\mathcal{L}}^l)_{j_{d+1}+1,k_{j_{d+1}}}((\check{\mathcal{L}}^l)_{j_d+1,k_{j_{d+1}}})^{\top}x_{j_d}
\]
for $d\in\{1,\ldots,m-1\}$,
leading to
\begin{align*}
&x_{j_{m}}=(-1)^{m+1}(\check{\mathcal{L}}^l)_{j_{m}+1,k_{j_{m}}}\cdots((\check{\mathcal{L}}^l)_{j_{1}+1,k_{j_{2}}})^{\top}x_{j_1}.
\end{align*}
This, with $x_{j_1}=\mathbf{0}_3$, leads to $x_{j_m}=\mathbf{0}_3$.
Let the edge $(j_m,i)$ be denoted as the $k_i$-th edge.
It is clear that $(\check{\mathcal{L}}^l)_{j_m+1,k_i}\neq\mathbf{0}_{3\times 3},(\check{\mathcal{L}}^l)_{i+1,k_i}\neq\mathbf{0}_{3\times 3}$, and $(\check{\mathcal{L}}^l)_{l+1,k_i}=\mathbf{0}_{3\times 3}$ for $l\in\mathcal{V}\setminus\{j_m,i\}$.
Then one obtains
\[
\begin{array}{rcl}
\sum\limits_{l=1}^N((\check{\mathcal{L}}^l)_{l+1,k_i})^{\top}x_l&=&((\check{\mathcal{L}}^l)_{j_m+1,k_i})^{\top}x_{j_{m}}\\
\ &\ &+((\check{\mathcal{L}}^l)_{i+1,k_i})^{\top}x_i=\mathbf{0}_3,
\end{array}
\]
 resulting in $((\check{\mathcal{L}}^l)_{i+1,k_i})^{\top}x_i=\mathbf{0}_3$. It follows that $x_{i}=\mathbf{0}_3$ due to the fact that the block $(\check{\mathcal{L}}^l)_{jk}$ with $(\check{\mathcal{L}}^l)_{jk}\neq\mathbf{0}_{3\times 3}$ is non-singular.
Consequently, one concludes that $x=\mathbf{0}_{3N}$, which is a contradiction.
Therefore, $x=\mathbf{0}_{3N}$ is the unique solution of $\check{\mathcal{L}}^l_2x=\mathbf{0}_{3N}$, implying that $\check{\mathcal{L}}_2^l$ is non-singular.

Since the same process can be conducted to show that $\check{\mathcal{L}}_2^r$ is non-singular, we omit the details here.

\section{\ \ \ \ \ \ Lemma \ref{le3}}\label{appenC}

\begin{lemma}\label{le3}
Let $\check{\mathcal{L}}^l,\check{\mathcal{L}}^r\in\mathbb{R}^{3(N+1)\times 3N}$ be associated to the augmented graph $\mathcal{G}_0$ with the blocks $\check{\mathcal{L}}_{ik}^l,\check{\mathcal{L}}_{ik}^r$ defined in \eqref{checkL} and \eqref{checkLe} respectively.
Then the following results hold:
\begin{enumerate}
\item Let $\check{A}_k^l=\check{A}_{i,j}^l$ for $i,j\in\mathcal{V}_0$ with the edge between the $i$-th rigid body system and the $j$-th rigid body system being described by edge $k$. One has
\begin{align*}
&R_0^{\top}\psi(\check{A}_{i,0}^lR_iR_0^{\top})+\sum\limits_{j\in\mathcal{N}_{\mathcal{G}}(i)}R_j^{\top}\psi(\check{A}_{i,j}^lR_iR_j^{\top})\\
&=R_i^{\top}(\sum\limits_{k=1}^{N}(\check{\mathcal{L}}^l_2)_{i,k}\psi(\check{A}_k^l\check{R}_{k}^l)), ~i\in\mathcal{V}.
\end{align*}
\item Let $\check{A}_k^r=\check{A}^r_{i,j}$ for $i,j\in\mathcal{V}_0$ with the edge between the $i$-th rigid body system and the $j$-th rigid body system being described by edge $k$. One has
\begin{align*}
&\psi(\check{A}^r_{i,0}R_0^{\top}R_i)+\sum\limits_{j\in\mathcal{N}_{\mathcal{G}}(i)}\psi(\check{A}_{i,j}^rR_j^{\top}R_i)\\
&=\sum\limits_{k=1}^{N}(\check{\mathcal{L}}^r_2)_{i,k}\psi(\check{A}^r_k\check{R}^r_k), ~i\in\mathcal{V}.
\end{align*}
\end{enumerate}
\end{lemma}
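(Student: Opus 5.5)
The identity is checked one row $i\in\mathcal{V}$ at a time, by splitting the right-hand side over the edges of $\overline{\mathcal{G}}_0$ incident to vertex $i$. Row $i$ of $\check{\mathcal{L}}^l_2$ is row $i+1$ of $\check{\mathcal{L}}^l$, and by \eqref{checkL} its only nonzero blocks are in columns $k$ for edges of $\overline{\mathcal{G}}_0$ having $i$ as head or as tail. So the right-hand side is a sum over these incident edges. The left-hand side is likewise a sum over the neighbours of $i$ in $\mathcal{G}_0$: the leader term (present only when $i\le n$; otherwise $\check A^l_{i,0}=0$ and $\psi(0)=0$) plus the terms for $j\in\mathcal{N}_{\mathcal{G}}(i)$. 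Since every undirected edge of $\mathcal{G}_0$ carries exactly one orientation in $\overline{\mathcal{G}}_0$, these two lists of edges coincide. The proof therefore reduces to matching, for each edge $k$ between $i$ and a neighbour $j$, one summand on each side. There are two cases, according to whether $i$ is the head or the tail of edge $k$.

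Two elementary identities do all the work. The first is the adjoint property $\psi(RAR^{\top})=R\,\psi(A)$ for $R\in\mathrm{SO}(3)$, which follows from $R x^{\times}R^{\top}=(Rx)^{\times}$. The second follows from the symmetry of $A$ and the oddness of $\psi$: $\psi(A\tilde R^{\top})=\psi\big((\tilde R A)^{\top}\big)=-\psi(\tilde R A)$.

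\emph{Case 1: $i$ is the head of edge $k$.} Here the edge is $(j,i)$ with $\check R^l_k=R_iR_j^{\top}$, and the block is $-I_3$. The right-hand contribution is $-R_i^{\top}\psi(A\check R_k^l)$. The left-hand summand is $R_j^{\top}\psi(AR_iR_j^{\top})$; to compare it we write $R_j^{\top}=R_i^{\top}(R_iR_j^{\top})=R_i^{\top}\check R^l_k$. This gives $R_j^{\top}\psi(A\check R^l_k)=R_i^{\top}\check R^l_k\,\psi(A\check R^l_k)$. It remains to see how $\check R\,\psi(A\check R)$ compares with $-\psi(A\check R)$, using the adjoint identity $\check R\,\psi(A\check R)=\psi(\check R A\check R\check R^{\top})=\psi(\check R A)=-\psi(A\check R^{\top})$.

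\emph{Case 2: $i$ is the tail of edge $k$.} Here the edge is $(i,j)$ with $\check R^l_k=R_jR_i^{\top}$, and the block is $\check R^l_k$. The left-hand summand is $R_j^{\top}\psi(AR_iR_j^{\top})=R_j^{\top}\psi(A(\check R^l_k)^{\top})$, which by the second identity equals $-R_j^{\top}\psi(\check R^l_k A)$. Applying the adjoint property with $R_j^{\top}$ converts this into an expression in terms of $R_i^{\top}$, and it should be checked against $R_i^{\top}\check R^l_k\psi(A\check R^l_k)$.

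The main obstacle is this sign and orientation bookkeeping in the two cases. Case 1 as written produces $\psi(A\check R^{\top})$ where $\psi(A\check R)$ is needed, which suggests that the head/tail convention in \eqref{checkL} has to be read with the opposite orientation from the one I assumed. I would therefore first fix the convention by checking the leader edge $(0,i)$, where the left side is $R_0^{\top}\psi(AR_iR_0^{\top})$ and $i$ is the head. Writing $R_0^{\top}=R_i^{\top}\check R$ with $\check R=R_iR_0^{\top}$, this becomes $R_i^{\top}\check R\,\psi(A\check R)$. That form matches the block $\check R^l_k$, which pins down the intended pairing. I would then redo both cases consistently with that pairing.

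The right-invariant statement is simpler, because no outer rotation factor appears. For edge $(j,i)$ we have $\check R^r=R_j^{\top}R_i$, and the left summand $\psi(A\check R^r)$ matches the identity block directly. When $i$ is the other endpoint, the summand is $\psi(A(\check R^r)^{\top})=-\psi(\check R^r A)=-\check R^r\psi(A\check R^r)$, by the adjoint identity applied with $\check R^r$. This matches the block $-\check R^r_k$. Summing over all incident edges then gives both identities.
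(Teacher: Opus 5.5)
Your proposal is correct and is essentially the paper's argument. You split the neighbours of $i$ in $\mathcal{G}_0$ into in- and out-neighbours in $\overline{\mathcal{G}}_0$ and apply $\psi(AR)=-\psi(R^{\top}A)$ and $\psi(RA)=R\,\psi(AR)$ edge by edge. The orientation issue you flag is genuine, and your resolution is the paper's own convention: for an edge $(j,i)$ the paper calls $j$ the head, as the definition of $\tilde{W}^l$ shows, and your right-invariant check uses this same reading. Under it, Case 2 closes in one line, $-R_j^{\top}\psi(\check{R}^l_kA)=-R_j^{\top}\check{R}^l_k\psi(A\check{R}^l_k)=-R_i^{\top}\psi(A\check{R}^l_k)$, which matches the block $-I_3$.
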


\begin{proof}
Let us prove the first item.
Denote $\mathcal{I}_{\overline{\mathcal{G}}_0}(i)=\{j\in\mathcal{N}_{\mathcal{G}_0}(i)|(j,i)\in\overline{\mathcal{E}}_0\}$ and $\mathcal{O}_{\overline{\mathcal{G}}_0}(i)=\{j\in\mathcal{N}_{\mathcal{G}_0}(i)|(i,j)\in\overline{\mathcal{E}}_0\}$.
Using the facts $\psi(AR)=-\psi(R^{\top}A)$ and $\psi(AR)=R^{\top}\psi(RA)$, $A=A^{\top}\in\mathbb{R}^{3\times 3}, R\in \mathrm{SO}(3)$, we handle the term
$R_0^{\top}\psi(\check{A}_{i,0}^lR_iR_0^{\top})+\sum\limits_{j\in\mathcal{N}_{\mathcal{G}}(i)}R_j^{\top}\psi(\check{A}_{i,j}^lR_iR_j^{\top})$ as in \cite{bough_Berk_Tay_Arxiv2025},
\[\small
\begin{array}{l}
R_0^{\top}\psi(\check{A}_{i,0}^lR_iR_0^{\top})+\sum\limits_{j\in\mathcal{N}_{\mathcal{G}}(i)}R_j^{\top}\psi(\check{A}_{i,j}^lR_iR_j^{\top})\\
=\sum\limits_{j\in\mathcal{I}_{\overline{\mathcal{G}}_0}(i)}R_j^{\top}\psi(\check{A}_{i,j}^lR_iR_j^{\top})+\sum\limits_{j\in\mathcal{O}_{\overline{\mathcal{G}}_0}(i)}R_j^{\top}\psi(\check{A}_{i,j}^lR_iR_j^{\top})\\
=\sum\limits_{j\in\mathcal{I}_{\overline{\mathcal{G}}_0}(i)}R_j^{\top}\psi(\check{A}_{i,j}^lR_iR_j^{\top})-\sum\limits_{j\in\mathcal{O}_{\overline{\mathcal{G}}_0}(i)}R_j^{\top}\psi(R_jR_i^{\top}\check{A}_{i,j}^l)\\
=\sum\limits_{j\in\mathcal{I}_{\overline{\mathcal{G}}_0}(i)}R_j^{\top}\psi(\check{A}_{i,j}^lR_iR_j^{\top})-\sum\limits_{j\in\mathcal{O}_{\overline{\mathcal{G}}_0}(i)}R_j^{\top}R_jR_i^{\top}\psi(\check{A}_{i,j}^lR_jR_i^{\top})\\
=R_i^{\top}(\sum\limits_{j\in\mathcal{I}_{\overline{\mathcal{G}}_0}(i)}R_iR_j^{\top}\psi(\check{A}_{i,j}^lR_iR_j^{\top})-\sum\limits_{j\in\mathcal{O}_{\overline{\mathcal{G}}_0}(i)}\psi(\check{A}_{i,j}^l\hat{R}_jR_i^{\top}))\\
=R_i^{\top}(\sum\limits_{k=1}^{N}(\check{\mathcal{L}}^l_2)_{i,k}\psi(\check{A}_k^l\check{R}_{k}^l))
\end{array}
\]
for $i\in\mathcal{V}$.

Since the proof of the second item is similar to that of the first item, we omit the detials here.
\end{proof}

\section{Proof of Theorem \ref{estimationstability}}\label{appenD}

The following lemma is required for the stability proof.
\begin{lemma}\label{le2.1}\cite{BerkaneSoulaimanephdtheis}
Consider the trajectory $\dot{R}=R\omega^{\times}$ with $R(0)\in \mathrm{SO}(3)$ and $\omega\in\mathbb{R}^3$.
Let $A\in\mathbb{R}^{3\times 3}$ be positive definite.
Then the following statements hold:
\begin{equation}\label{e2.10}
\frac{d}{dt}\mathrm{tr}(A(I-R))=2\omega^{\top}\psi(AR),
\end{equation}
\begin{equation}\label{e2.1}
4\lambda_{\min}^{\mathbf{E}(A)}|R|_I^2\leq\mathbf{tr}(A(I-R))\leq 4\lambda_{\max}^{\mathbf{E}(A)}|R|_I^2,
\end{equation}
\begin{equation}\label{e2.5}
\|\mathbf{E}(AR)\|_F\leq\|\mathbf{E}(A)\|_F,
\end{equation}
\begin{equation}\label{e2.3}
\|\psi(AR)\|^2=\alpha(A,R)\mathbf{tr}(\mathbf{E}^{-1}(\mathbf{E}(A)^2)(I-R)),
\end{equation}
\begin{equation}\label{e2.4}
1-|R|_I^2\leq\alpha(A,R)\leq(1-\frac{\lambda_{\min}^{\mathbf{E}(A)}}{\lambda_{\max}^{\mathbf{E}(A)}}|R|_I^2),
\end{equation}
where $\alpha(A,R)=(1-|R|_I^2\cos^2(u,E(A)u))$, $u\in\mathbb{S}^2$ is the aixs of the rotation $R$.
\end{lemma}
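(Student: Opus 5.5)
The plan is to parametrize $R$ by angle and axis, $R=\mathcal{R}(\theta,u)=I_3+\sin\theta\,u^{\times}+(1-\cos\theta)(u^{\times})^2$ with $u\in\mathbb{S}^2$, and to reduce every statement to explicit computations in $\theta$ and $u$. Throughout we use $(u^{\times})^2=uu^{\top}-I_3$ and $|R|_I^2=\tfrac14\mathbf{tr}(I_3-R)=\tfrac12(1-\cos\theta)=\sin^2(\theta/2)=:s$. Since the quantities $\mathbf{E}(A)$, $\mathbf{E}^{-1}$ and $\lambda^{\mathbf{E}(A)}$ in the statement are only sensible for symmetric $A$ (as in all uses in the paper), we take $A=A^{\top}>0$. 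Then $\mathbf{E}(A)=\tfrac12(\mathbf{tr}(A)I_3-A)$ is symmetric, and its eigenvalues are half-sums of pairs of eigenvalues of $A$, so $\mathbf{E}(A)>0$. We also record that $\mathbf{E}$ is a linear bijection on symmetric matrices, with $\mathbf{E}^{-1}(B)=\mathbf{tr}(B)I_3-2B$.

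For \eqref{e2.10}, differentiate to get $\frac{d}{dt}\mathbf{tr}(A(I-R))=-\mathbf{tr}(AR\omega^{\times})$. Then apply the elementary identity $\mathbf{tr}(Bx^{\times})=-2x^{\top}\psi(B)$, which holds because only the skew part of $B$ contributes to the trace. For \eqref{e2.1}, substitute Rodrigues' formula. The $\sin\theta$ term drops, since $\mathbf{tr}(Au^{\times})=0$ for symmetric $A$. This leaves
\[
\mathbf{tr}(A(I-R))=(1-\cos\theta)(\mathbf{tr}A-u^{\top}Au)=4s\,u^{\top}\mathbf{E}(A)u,
\]
and the Rayleigh quotient bounds for $\mathbf{E}(A)$ give both inequalities. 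For \eqref{e2.5}, expand to get
\[
\|\mathbf{E}(B)\|_F^2=\tfrac14\big(\mathbf{tr}(B)^2+\|B\|_F^2\big).
\]
Apply this with $B=AR$. We have $\|AR\|_F=\|A\|_F$. Diagonalize $A=Q\Lambda Q^{\top}$ to get $\mathbf{tr}(AR)=\sum_i\lambda_i(Q^{\top}RQ)_{ii}$, and since the diagonal entries of a rotation lie in $[-1,1]$, $|\mathbf{tr}(AR)|\le\mathbf{tr}(A)$.

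The main step is \eqref{e2.3}. Using Rodrigues' formula, $\psi(AR)=\sin\theta\,\psi(Au^{\times})+(1-\cos\theta)\psi(A(uu^{\top}-I_3))$. We use two identities. The first is $\psi(Ax^{\times})=\mathbf{E}(A)x$ for symmetric $A$. The second is $\psi(Auu^{\top})=\tfrac12\,\mathbf{vex}(Auu^{\top}-uu^{\top}A)=\pm\tfrac12\,u\times Au$. Since $Au=\mathbf{tr}(A)u-2\mathbf{E}(A)u$, we have $u\times Au=-2\,u\times\mathbf{E}(A)u$. Hence
\[
\psi(AR)=\sin\theta\,\mathbf{E}(A)u\pm(1-\cos\theta)\,u\times\mathbf{E}(A)u,
\]
and the two terms are orthogonal. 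Writing $\sin^2\theta=4s(1-s)$, $(1-\cos\theta)^2=4s^2$ and $\|u\times \mathbf{E}(A)u\|^2=\|\mathbf{E}(A)u\|^2-(u^{\top}\mathbf{E}(A)u)^2$, we obtain
\[
\|\psi(AR)\|^2=4s\|\mathbf{E}(A)u\|^2\big(1-s\cos^2(u,\mathbf{E}(A)u)\big).
\]
On the other side, applying the trace computation from \eqref{e2.1} to the symmetric matrix $\mathbf{E}^{-1}(\mathbf{E}(A)^2)$ gives
\[
\mathbf{tr}\big(\mathbf{E}^{-1}(\mathbf{E}(A)^2)(I-R)\big)=4s\,u^{\top}\mathbf{E}(A)^2u=4s\|\mathbf{E}(A)u\|^2.
\]
Comparing the two expressions proves \eqref{e2.3}, with $\alpha$ as stated. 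The obstacle here is only bookkeeping: fixing the sign of the cross-product term, which is irrelevant once norms are taken, and verifying the $\mathbf{vex}$ identities.

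Finally, \eqref{e2.4} follows from $0\le\cos^2\le1$ for the lower bound. For the upper bound, note that $\mathbf{E}(A)>0$ gives $\|\mathbf{E}(A)u\|^2\le\lambda_{\max}^{\mathbf{E}(A)}\,u^{\top}\mathbf{E}(A)u$. Therefore
\[
\cos^2(u,\mathbf{E}(A)u)=\frac{(u^{\top}\mathbf{E}(A)u)^2}{\|\mathbf{E}(A)u\|^2}\ge\frac{u^{\top}\mathbf{E}(A)u}{\lambda_{\max}^{\mathbf{E}(A)}}\ge\frac{\lambda_{\min}^{\mathbf{E}(A)}}{\lambda_{\max}^{\mathbf{E}(A)}},
\]
which is the claimed upper bound on $\alpha(A,R)$.
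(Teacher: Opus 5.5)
Your proof is correct. The paper gives no proof of its own (it imports the lemma from Berkane's thesis), and your angle--axis (Rodrigues) computation is essentially the standard argument for these identities; all five items check out, including $\mathbf{E}^{-1}(B)=\mathbf{tr}(B)I_3-2B$ and the orthogonal splitting of $\psi(AR)$. The only loose ends are cosmetic: the sign is in fact $\psi(Auu^{\top})=+\tfrac12\,u\times Au$, so $\psi(AR)=\sin\theta\,\mathbf{E}(A)u-(1-\cos\theta)\,u\times\mathbf{E}(A)u$, and the case $R=I_3$, where the axis $u$ is undefined, is trivial because both sides of the identity for $\|\psi(AR)\|^2$ vanish.
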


Now, we prove Theorem \ref{estimationstability}.
Let us prove the first item. Consider the following Lyapunov function candidate:
\[
\setlength{\arraycolsep}{1pt}
\begin{array}{rcl}
V_1&=&\frac{1}{2}(k_{R^l}\mathbf{tr}(\mathbf{A}^l(I_{3N}-\tilde{R}^l))+(\tilde{\omega}_{0}^l)^{\top}\tilde{\omega}_0^l+\tilde{\sigma}_0^{\top}\tilde{\sigma}_0\\
\ &\ &+k_{\bar{R}}\mathbf{tr}(\bar{\mathbf{A}}(I_{3N}-\bar{R}^r))+\bar{\omega}^{\top}J\bar{\omega}),
\end{array}
\]
where $\mathbf{A}^l=\mathrm{diag}(A_1^l,\ldots,A_{N}^l)$ and $\bar{\mathbf{A}}=\mathrm{diag}(\bar{A}_1,\ldots,\bar{A}_N)$.
Using the fact that $\lambda_{\min}^{\mathbf{E}(A)}=\frac{1}{2}(\mathbf{tr}(A)-\lambda_{\max}^{A^{\top}}),\forall A=A^{\top}>0$, one infers $\lambda_{\min}^{\mathbf{E}(A_k^l)}>0$ and $\lambda_{\min}^{\mathbf{E}(\bar{A}_i)}>0$ since
 $A_k^l$ and $\bar{A}_i$ are positive definite.
This, with \eqref{e2.1}, leads to
$\mathbf{tr}(A_k^l(I_3-\tilde{R}_k^l))\geq4\lambda_{\min}^{\mathbf{E}(A_k^l)}|\tilde{R}_k^l|_I^2$ for $k\in\mathcal{M}_{\overline{\mathcal{G}}_0}$ and $\mathbf{tr}(\bar{A}_i(I_3-\bar{R}_i^r))\geq\lambda_{\min}^{\mathbf{E}(\bar{A}_i)}|\bar{R}_i^r|_I^2$ for $i\in\mathcal{V}$, implying that $V_1$ is positive definite. From \eqref{e2.10}, one has
\begin{align*}
\frac{d}{dt}\mathbf{tr}(A_k^l(I_3-\tilde{R}_k^l))&=2(\hat{R}_j\tilde{\omega}_k^l)^{\top}\psi(A_k^l\tilde{R}_k^l),
~~k\in\mathcal{M}_{\overline{\mathcal{G}}_0},\\
 \frac{d}{dt}\mathbf{tr}(\bar{A}_i(I_3-\bar{R}_i^r))&=2\bar{\omega}_i^{\top}\psi(\bar{A}_i\bar{R}_i^r), ~~i\in\mathcal{V}.
\end{align*}
The time-derivative of $V_1$, along the trajectories of systems \eqref{errorcompact} and \eqref{controlerror1}, is given by
\begin{equation}\label{dotV1}
\begin{array}{rcl}
\dot{V}_1&=&k_{R^l}\sum\limits_{k=1}^{N}(\tilde{\omega}_k^l)^{\top}\hat{R}_j^{\top}\psi(A_k^l\tilde{R}_k^l)+(\tilde{\omega}_0^l)
^{\top}(\tilde{\sigma}_0\\
\ &\ &-k_{R^l}\hat{R}^{\top}\tilde{\mathcal{L}}_2^l\Psi^l-k_{\omega^l}((\Xi+LL^{\top})\otimes I_3)\tilde{\omega}_0^l\\
\ &\ &+\tilde{\sigma}_0^{\top}(I_N\otimes P-k_{\tilde{\sigma}}(\Xi+LL^{\top})\otimes I_3)\tilde{\sigma}_0\\
\ &\ &+k_{\bar{R}}\sum\limits_{i=1}^{N}\bar{\omega}_i^{\top}\psi(\bar{A}_i\bar{R}_i^r)+\bar{\omega}^{\top}(\Sigma\bar{\omega}+\Gamma+\tau),\\
\end{array}
\end{equation}
where $\tau=[\tau_1^{\top}\ \cdots\ \tau_N^{\top}]^{\top}$, 
$\Gamma=[\Gamma_1^{\top}\ \cdots\ \Gamma_N^{\top}]^{\top}$.
Using the facts $\psi(AR)=-\psi(R^{\top}A)$ and $\psi(AR)=R^{\top}\psi(RA)$, $A=A^{\top}\in\mathbb{R}^{3\times 3}, R\in \mathrm{SO}(3)$,
we can express the term $(\hat{R}_j\tilde{\omega}_k^l)^{\top}\psi(A_k^l\tilde{R}_k^l)$ as follows:
\[
\setlength{\arraycolsep}{1pt}
\begin{array}{rcl}
(\hat{R}_j\tilde{\omega}_k^l)^{\top}\psi(A_k^l\tilde{R}_k^l)&=&\frac{1}{2}(\hat{\omega}_i-\hat{\omega}_j)^{\top}\hat{R}_j^{\top}\psi(A_{i,j}^l\tilde{R}_{i,j}^l)\\
\ &\ &+\frac{1}{2}(\hat{\omega}_j-\hat{\omega}_i)^{\top}\hat{R}_i^{\top}\psi(A_{j,i}^l\tilde{R}_{j,i}^l),\\
\ &=&\hat{\omega}_i^{\top}\hat{R}_j^{\top}\psi(A_{i,j}^l\tilde{R}_{i,j}^l)+\hat{\omega}_j^{\top}\hat{R}_i^{\top}\psi(A_{j,i}^l\tilde{R}_{j,i}^l), \\
\end{array}
\]
leading to
\begin{align*}\label{comdottr}
&\sum\limits_{k=1}^{N}(\tilde{\omega}_{k}^l)^{\top}(\hat{R}_j^l)^{\top}\psi(A_k^l\tilde{R}_k^l)
=\sum\limits_{i=1}^n(\tilde{\omega}_{i,0}^l)^{\top}R_0^{\top}\psi(A_{i,0}^l\tilde{R}_{i,0}^l)\\
& +\frac{1}{2}\sum\limits_{i=1}^{N}\sum\limits_{j\in\mathcal{N}_{\mathcal{G}}(i)}(\tilde{\omega}_{i,0}^l-\tilde{\omega}_{j,0}^l)^{\top}\hat{R}_j^{\top}\psi(A_{i,j}^l\tilde{R}_{i,j}^l)\\
&=\sum\limits_{i=1}^n(\tilde{\omega}_{i,0}^l)^{\top}R_0^{\top}\psi(A_{i,0}^l\tilde{R}_{i,0}^l)\\
&+\sum\limits_{i=1}^{N}(\tilde{\omega}_{i,0}^l)^{\top}\sum\limits_{j\in\mathcal{N}_{\mathcal{G}}(i)}\hat{R}_j^{\top}\psi(A_{i,j}^l\tilde{R}_{i,j}^l)\\
&=(\tilde{\omega}_0^l)^{\top}\hat{R}^{\top}\tilde{\mathcal{L}}_2^l\Psi^l.
\end{align*}
Using the previous equality and \eqref{control} in \eqref{dotV1}, one obtains
\[
\setlength{\arraycolsep}{1pt}
\begin{array}{rcl}
\dot{V}_1&=&
-k_{\bar{\omega}}\|\bar{\omega}\|^2-[(\tilde{\omega}_0^l)^{\top}\
\tilde{\sigma}_0^{\top}]
\mathbf{F}
\left[
\begin{array}{c}
\tilde{\omega}_0^l\\
\tilde{\sigma}_0
\end{array}
\right]\\
\end{array}
\]
where
\[
\mathbf{F}=\left[
\begin{array}{cc}
k_{\omega^l}\mathbf{F}_1&-\frac{1}{2}I_{3N}\\
-\frac{1}{2}I_{3N}&\mathbf{F}_2\\
\end{array}
\right]
\] with
$\mathbf{F}_1=(\Xi+LL^{\top})\otimes I_3$ and $\mathbf{F}_2=k_{\tilde{\sigma}}\mathbf{F}_1-\frac{1}{2}(I_N\otimes (P+P^{\top}))
$. It is clear that $\mathbf{F}_1$ is non-singular as per Lemma \ref{le1}.
Then one derives
\[
\overline{\mathbf{F}}=\mathrm{diag}(k_{\omega^l}\mathbf{F}_1,\overline{\mathbf{F}}_2)=\mathbf{H}^{\top}\mathbf{F}\mathbf{H},
\]
where $\overline{\mathbf{F}}_2=-\frac{1}{4k_{\omega^l}}\mathbf{F}_1^{-1}+\mathbf{F}_2$,
\[\mathbf{H}=\left[
\begin{array}{cc}
I_{3N}&\frac{1}{2k_{\omega^l}}\mathbf{F}_1^{-1}\\
\mathbf{0}_{3N\times 3N}&I_{3N}\\
\end{array}
\right]_.\]
In what follows, we will show that $\mathbf{F}$ is positive definite by proving that all eigenvalues of $\overline{\mathbf{F}}$ are positive, leveraging the congruence between $\mathbf{F}$ and $\overline{\mathbf{F}}$.
It follows from Lemma \ref{le1} that
$\lambda_{\min}^{\mathbf{F}_1}>0$ and $\lambda_{\min}^{\mathbf{F}_1^{-1}}>0$.
So we take
\[k_{\tilde{\sigma}}>\max\left\{0,\frac{\lambda_{\min}^{\mathbf{F}_1^{-1}}-4k_{\omega^l}\lambda_{\min}^{-\frac{1}{2}(I_{N}\otimes(P+P^{\top}))}}{4k_{\omega^l}\lambda_{\min}^{\mathbf{F}_1}}\right\}.\]
Using Weyl inequality yields
\[\begin{array}{rcl}
\lambda_{\min}^{\overline{\mathbf{F}}_2}&\geq&k_{\tilde{\sigma}}\lambda_{\min}^{\mathbf{F}_1}-\frac{1}{4k_{\omega^l}}\lambda_{\min}^{\mathbf{F}_1^{-1}}+\lambda_{\min}^{-\frac{1}{2}(I_{N}\otimes(P+P^{\top}))}.\\
\end{array}
\]
If $-\frac{1}{4k_{\omega^l}}\lambda_{\min}^{\mathbf{F}_1^{-1}}+\lambda_{\min}^{-\frac{1}{2}(I_N\otimes (P+P^{\top}))}\geq0$, then $\lambda_{\min}^{\mathbf{F}_2}> 0$.
If $-\frac{1}{4k_{\omega^l}}\lambda_{\min}^{\mathbf{F}_1^{-1}}+\lambda_{\min}^{-\frac{1}{2}(I_N\otimes (P+P^{\top}))}<0$, then $k_{\tilde{\sigma}}\lambda_{\min}^{\mathbf{F}_1}>\frac{1}{4k_{\omega^l}}\lambda_{\min}^{\mathbf{F}_1^{-1}}-\lambda_{\min}^{-\frac{1}{2}(I_N\otimes (P+P^{\top}))}$, leading to $\lambda_{\min}^{\overline{\mathbf{F}}_2}>0$.
Consequently, all eigenvalues of $\overline{\mathbf{F}}_2$ are positive.
Since
the eigenvalues of $\overline{\mathbf{F}}$ consist of the eigenvalues of $k_{\omega^l}\mathbf{F}_1$ and $\overline{\mathbf{F}}_2$, all eigenvalues of $\overline{\mathbf{F}}$ are positive, resulting in $\mathbf{F}$ to be positive definite.
This allows us to conclude that $\dot{V}_1\leq 0$.
Consequently, the desired equilibrium $\mathcal{C}^{l,d}$ is stable.

We invoke Barbalat Lemma to show that the trajectories of the error systems \eqref{errorcompact} and $\eqref{controlerrorcompact}$ converge to $\mathcal{C}^l$.
We first prove that $\dot{V}_1$ converges to $0$ using Barbalat lemma.
It follows from $\dot{V}_1\leq 0$ that $V_1$ is non-increasing. 
This, with the positive definitness of $V_1$, implies that $\lim_{t\rightarrow\infty}V_1$ exists and is finite.
The second time-derivative of $V_1$ is as follows:
\begin{equation}\label{ddot_V1}
\setlength{\arraycolsep}{1pt}
\begin{array}{rcl}
\ddot{V}_1&=&2(k_{\bar{\omega}}k_{\bar{R}}\bar{\omega}^{\top}J^{-1}\bar{\Psi}+k_{\bar{\omega}}^2\bar{\omega}^{\top}J^{-1}\bar{\omega}-k_{\omega^l}(\tilde{\omega}_0^l)^{\top}\mathbf{F}_1\tilde{\sigma}_0\\
\ &\ &
+k_{\omega^l}^2\|\mathbf{F}_1\tilde{\omega}^l\|^2
+k_{\omega^l}k_{R^l}(\tilde{\omega}_0^l)^{\top}\hat{R}^{\top}\tilde{\mathcal{L}}_2^l\Psi^l\\
\ &\ &-(\tilde{\omega}_0^l)^{\top}(I_N\otimes P)\tilde{\sigma}_0+\tilde{\sigma}_0^{\top}\mathbf{F}_2((I_N\otimes P)-k_{\tilde{\sigma}}\mathbf{F}_1)\tilde{\sigma}_0).
\end{array}
\end{equation}
From the fact that $\dot{V}_1$ is negative semi-definite, it is clear that $\tilde{\omega}_0^l,\tilde{\sigma}_0,\bar{\omega}$ are bounded.
Using \eqref{e2.3}-\eqref{e2.4} and the fact that $\mathbf{E}(A_i^l)$ and $\mathbf{E}(\bar{A}_i)$ are positive definite for $i\in\mathcal{V}$, one deduces that $\psi(A_i^l\tilde{R}_i^l)$ and $\psi(\bar{A}_i\bar{R}_i^r)$ are  bounded for $i\in\mathcal{V}$, implying that $\Psi^l$ and $\bar{\Psi}$ are bounded.
It follows from $\|\tilde{\mathcal{L}}_2^l\|_F=\sqrt{\mathbf{tr}((\tilde{\mathcal{L}}_2^l)^{\top}\tilde{\mathcal{L}}_2^l)}=\sqrt{6N-3n}$ that $\tilde{\mathcal{L}}_2^l$ is bounded.
Therefore, one concludes that $\ddot{V}_1$ in \eqref{ddot_V1} is bounded, implying that $\dot{V}_1$ is uniformly continuous.
This, with the fact that $\lim_{t\rightarrow\infty}V_1$ exists and is finite, results in $\lim_{t\rightarrow\infty}\dot{V}_1=0$ as per Barbalat lemma.
As a result, one has $\lim_{t\rightarrow\infty}\bar{\omega}=\mathbf{0}_{3N},\lim_{t\rightarrow\infty}\tilde{\omega}_0^l=\mathbf{0}_{3N},\lim_{t\rightarrow\infty}\tilde{\sigma}_0=\mathbf{0}_{3N}$.

Now, we show that $\dot{\tilde{\omega}}_0^l$ and $\dot{\bar{\omega}}$ converge to $\mathbf{0}_{3N}$ using Barbalat lemma.
The second time-derivatives of $\tilde{\omega}_0^l$ and $\bar{\omega}$ are as follows:
\[
\setlength{\arraycolsep}{1pt}
\begin{array}{rcl}
\ddot{\tilde{\omega}}_0^l
&=&(I_N\otimes P-k_{\tilde{\sigma}}\mathbf{F}_1)\tilde{\sigma}_0+k_{R^l}\hat{W}\hat{R}^{\top}\tilde{\mathcal{L}}_2^l\Psi^l-k_{R^l}\hat{R}^{\top}\dot{\tilde{\mathcal{L}}}_2^l\Psi^l\\
\ &\ &-k_{R^l}\hat{R}^{\top}\tilde{\mathcal{L}}_2^l\dot{\Psi}^l-k_{\omega^l}\mathbf{F}_1(\tilde{\sigma}_0-k_{R^l}\hat{R}^{\top}\tilde{\mathcal{L}}_2^l\Psi^l-k_{\omega^l}\mathbf{F}_1\tilde{\omega}_0^l),\\
\ddot{\bar{\omega}}&=&J^{-1}(k_{\bar{\omega}}(k_{\bar{R}}\bar{\Psi}+k_{\bar{\omega}}\bar{\omega})-k_{\bar{R}}\dot{\bar{\Psi}}),
\end{array}
\]
where $\hat{W}=\mathrm{diag}(\hat{\omega}_1^{\times},\ldots,\hat{\omega}_N^{\times})$, $\dot{\Psi}^l=\mathrm{diag}(\mathbf{E}(A_1^l\tilde{R}_1^l),\ldots,$
$\mathbf{E}(A_N^l\tilde{R}_N^l))[\mathbf{vex}(\tilde{W}_{11}^l)^{\top}\ \cdots\ \mathbf{vex}(\tilde{W}_{NN}^l)^{\top}]^{\top}$,
$\dot{\bar{\Psi}}=\mathrm{diag}(\mathbf{E}(\bar{A}_1\bar{R}_1^r),\ldots,\mathbf{E}(\bar{A}_N\bar{R}_N^r))\bar{\omega}$, $\dot{\tilde{\mathcal{L}}}_2^l$ is obtained by removing the first three rows of $\dot{\tilde{\mathcal{L}}}^l$,
and $\dot{\tilde{\mathcal{L}}}^l\in\mathbb{R}^{3(N+1)\times 3N}$ is obtained by defining each $3$-by-$3$ block $(\dot{\tilde{\mathcal{L}}}^l)_{ik}$ as
\[
(\dot{\tilde{\mathcal{L}}}^l)_{ik}=\left\{
\begin{array}{ll}
\dot{\tilde{R}}_k^l,&k\in(\mathcal{M}_{\overline{\mathcal{G}}_0})_{i-1}^{-},\\
\mathbf{0}_{3\times 3},&\mathrm{otherwise}.
\end{array}
\right.
\]
Since $\tilde{\omega}_{0}^l$ and $\omega_0$ are bounded, one infers that $\hat{\omega}_{i}$ is bounded for $i\in\mathcal{V}$, implying that $\hat{W}$ is bounded.
Using the fact that $\|\dot{\tilde{R}}_{i,j}^l\|_F^2=-\mathbf{tr}((\hat{R}_j\tilde{\omega}_{i,j}^l)^{\times})^2 =2||\tilde{\omega}_{i,j}^l||^2$ and $\|\dot{\bar{R}}_i^r\|_F^2=-\mathbf{tr}(\bar{\omega}_i^{\times})^2= 2||\bar{\omega}_i||^2$, one concludes that $\dot{\tilde{R}}_{i,0}^l$ and $\dot{\bar{R}}_i^r$ are bounded since $\tilde{\omega}_{i,j}^l$ and $\bar{\omega}_i$ are bounded, which implies that $\dot{\tilde{\mathcal{L}}}_2^l$ is bounded.
Using \eqref{e2.5}, it is clear that $\mathbf{E}(A_k^l\tilde{R}_k^l)$ and $\mathbf{E}(\bar{A}_i\bar{R}_i^r)$ are bounded.
This, together with the boundedness of $\tilde{W}^l$ and $\bar{\omega}$, means that $\dot{\Psi}^l$ and $\dot{\bar{\Psi}}$ are bounded.
Therefore, $\ddot{\tilde{\omega}}_0^l$ and $\ddot{\bar{\omega}}$ are bounded, implying that $\dot{\tilde{\omega}}_0^l$ and $\dot{\bar{\omega}}$ are uniformly continuous.
Combining with $\lim_{t\rightarrow\infty}\tilde{\omega}_0^l=\mathbf{0}_{3N},\lim_{t\rightarrow\infty}\bar{\omega}=\mathbf{0}_{3N}$, one can infer that $\lim_{t\rightarrow\infty}\dot{\tilde{\omega}}_0^l=\mathbf{0}_{3N},\lim_{t\rightarrow\infty}\dot{\bar{\omega}}=\mathbf{0}_{3N}$ as per Barbalat lemma.
Consequently, one has $\lim_{t\rightarrow\infty}\hat{R}^{\top}\tilde{\mathcal{L}}_2^l\Psi^l=\mathbf{0}_{3N}$ and $\lim_{t\rightarrow\infty}\bar{\Psi}=\mathbf{0}_{3N}$.
Lemma \ref{le2} shows that
$\tilde{\mathcal{L}}_2^l$ is non-singular, yielding a non-singular matrix $\hat{R}^{\top}\tilde{\mathcal{L}}_2^l$.
This leads to $\lim_{t\rightarrow\infty}\Psi^l=\mathbf{0}_{3N}$, \textit{i.e.}, $\lim_{t\rightarrow\infty}\psi(A_k^l\tilde{R}_k^l)=\mathbf{0}_3$.
From $\lim_{t\rightarrow\infty}\bar{\Psi}=\mathbf{0}_{3N}$, one derives $\lim_{t\rightarrow\infty}\psi(\bar{A}_i\bar{R}_i^r)=\mathbf{0}_3$.
As per Lemma 2 in \cite{Mayhew2011}, one concludes that $\bar{R}_i^r$ and $\tilde{R}_k^l$ converge to $\{I_3\}\cup\{\mathcal{R}(\pi,\bar{u})|\bar{u}\in\mathcal{E}_{v}(\bar{A}_i)\}$ and $\{I_3\}\cup\{\mathcal{R}(\pi,\tilde{u}^l)|\tilde{u}^l\in\mathcal{E}_{v}(A_i^r)\}$ respectively, where $i\in\mathcal{V},k\in\mathcal{M}_{\overline{\mathcal{G}}_0}$.

To sum up, the trajectories of the error systems \eqref{errorcompact} and $\eqref{controlerrorcompact}$ converge to the set of equilibria $\mathcal{C}^l$.

Now we prove the second item by showing that the undesired equilibria in $\mathcal{C}^{l,u}$ are unstable and the stable manifold associated to these undesired equilibria has zero Lebesgue measure.
Since the tracking error subsystem \eqref{controlerrorcompact} is autonomous, we proceed by showing that the equilibria in $\mathcal{C}^{l,u}$
associated with the subsystem \eqref{controlerrorcompact} are unstable and their associated stable manifold has zero Lebesgue measure with respect to the subspace of the tracking error subsystem \eqref{controlerrorcompact}.
To this end, we linearize the tracking error subsystem \eqref{controlerrorcompact}, and show that its
Jacobian matrix has at least one positive eigenvalue.

Consider the undesired equilibrium $(\tilde{R}^{l,\ast},\mathbf{0}_{3N},\mathbf{0}_{3N},\bar{R}^{r,\ast},$
$\mathbf{0}_{3N})$. 
Let $(\bar{R}_i^{r,\ast})^{\top}\bar{R}_i^r=\exp(\bar{\mathbf{r}}_i^{\times})$ for $i\in\mathcal{V}$.
Using the fact that $\exp(\bar{\mathbf{r}}_i^{\times})\approx I_3+\bar{\mathbf{r}}_i^{\times}$ around $\bar{\mathbf{r}}_i\approx 0$, one has  $\bar{R}_i^r\approx\bar{R}_i^{r,\ast}(I_3+\bar{\mathbf{r}}_i^{\times})$ for $i\in\mathcal{V}$.
Thus, the linearization of $\dot{\bar{R}}_i^r=\bar{R}_i^r\bar{\omega}_i^{\times}$ around $\bar{R}_i^{r,\ast}$ is given by $\dot{\bar{\mathbf{r}}}_i=\bar{\omega}_i$, where
$i\in\mathcal{V}$.
It follows that
$
\psi(\bar{A}_i\bar{R}_i^r)\approx\mathbf{E}(\bar{A}_i\bar{R}_i^{r,\ast})\bar{\mathbf{r}}_i$, $i\in\mathcal{V}$.
The linearized angular velocity tracking error dynamics around the undesired equilibrium are given by
$J_i\dot{\bar{\omega}}_i=-k_{\bar{R}}\mathbf{E}(\bar{A}_i\bar{R}_i^{r,\ast})\bar{\mathbf{r}}_i-k_{\bar{\omega}}\bar{\omega}_i$.
Therefore, the linearization of subsystem \eqref{controlerrorcompact} around the undesired equilibrium is given by
\begin{equation}\label{linearizedobserver-1}
\begin{array}{rcl}
\dot{\bar{\mathbf{r}}}&=&\bar{\omega},\\
\dot{\bar{\omega}}&=&-k_{\bar{R}}J^{-1}\mathbf{E}_{\bar{A}\bar{R}^{r,\ast}}\bar{\mathbf{r}}-k_{\bar{\omega}}J^{-1}\bar{\omega},\\
\end{array}
\end{equation}
where
$\mathbf{E}_{\bar{A}\bar{R}^{r,\ast}}=\mathrm{diag}(\mathbf{E}(\bar{A}_1\bar{R}_1^{r,\ast}),\ldots,\mathbf{E}(\bar{A}_N\bar{R}_N^{r,\ast}))$, $\bar{\mathbf{r}}=[\bar{\mathbf{r}}_1^{\top}\ \cdots\ \bar{\mathbf{r}}_N^{\top}]^{\top}$. The Jacobian matrix is then obtained as follows:
\begin{equation}\label{bfM2}
\mathbf{M}_2=\left[
\begin{array}{cc}
\mathbf{0}_{3N\times 3N}&I_{3N}\\
-k_{\bar{R}}J^{-1}\mathbf{E}_{\bar{A}\bar{R}^{r,\ast}}& -k_{\bar{\omega}}J^{-1}\\
\end{array}
\right]_.
\end{equation}

Now, we show that $\mathbf{M}_2$ is non-singular by proving $\mathrm{det}(\mathbf{E}_{\bar{A}\bar{R}^{r,\ast}})\neq 0$, due to the fact that $\mathrm{det}(\mathbf{M}_2)=\mathrm{det}(J^{-1})\mathrm{det}(k_{\bar{R}}\mathbf{E}_{\bar{A}\bar{R}^{r,\ast}})$. Straightforward calculations show that the eigenvalues of $\mathbf{E}(\bar{A}_i\bar{R}_i^{r,\ast})$ are given by $\frac{1}{2}(\mathbf{tr}(\bar{A}_i\bar{R}_i^{r,\ast})-\lambda^{\bar{A}_i\bar{R}_i^{r,\ast}})$. For $i\in\mathcal{M}_{\overline{\mathcal{G}}_0}^{I}$, one has $\bar{R}_i^{r,\ast}=I_3$, which leads us to conclude that the eigenvalues of $\mathbf{E}(\bar{A}_{i}\bar{R}^{r,\ast}_{i})$ are given by $\frac{1}{2}(\mathbf{tr}(\bar{A}_{i})-\lambda^{\bar{A}_{i}})\not=0$.
For $i\in\mathcal{M}_{\overline{\mathcal{G}}_0}^{\pi}$, one has $\bar{A}_{i}\bar{R}_i^{r,\ast}=2\lambda_{\bar{u}_i}^{\bar{A}_i}\bar{u}_i\bar{u}_i^{\top}-\bar{A}_i$, leading to $\bar{A}_{i}\bar{R}_i^{r,\ast}\bar{u}_i^r=2\lambda_{\bar{u}_i}^{\bar{A}_i}\bar{u}_i-\bar{A}_i\bar{u}_i=\lambda_{\bar{u}_i}^{\bar{A}_i}\bar{u}_i$, implying that $\lambda_{\bar{u}_i}^{\bar{A}_i}$ is one of the eigenvalues of $\bar{A}_{i}\bar{R}_i^{r,\ast}$.
In this case, one of the eigenvalues of $\mathbf{E}(\bar{A}_i\bar{R}_i^{r,\ast})$ is given by $\frac{1}{2}(\lambda_{\bar{u}_i}^{\bar{A}_i}-\mathbf{tr}(\bar{A}_i))$.
Let $\lambda_{u_\ell}^{\bar{A}_i}\not=\lambda_{\bar{u}_i}^{\bar{A}_i}$, $\ell \in\{1,2\}$, be the two other distinct eigenvalues of $\bar{A}_i$ with $u_\ell$ being their associated unit eigenvectors. Since $\bar{A}_i$ has distinct eigenvalues,
one has $u_\ell^{\top}\bar{u}_i=0$. Then, one has $\bar{A}_i\bar{R}_i^{r,\ast}u_\ell=(2\lambda_{\bar{u}_i}^{\bar{A}_i}\bar{u}_i\bar{u}_i^{\top}-\bar{A}_i)u_\ell=-\lambda_{u_\ell}^{\bar{A}_i}u_\ell$,
implying that $-\lambda_{u_\ell}^{\bar{A}_i}$, $\ell \in\{1,2\}$, are two eigenvalues of $\bar{A}_i\bar{R}_i^{r,\ast}$.
As a result, one can conclude that $\lambda_{\bar{u}_i}^{\bar{A}_i}+\frac{1}{2}(\lambda_{u_\ell}^{\bar{A}_i}-\mathbf{tr}(\bar{A}_i))$,  $\ell \in\{1,2\}$, are two other eigenvalues of $\mathbf{E}(\bar{A}_i\bar{R}_i^{r,\ast})$. Finally, for $i\in\mathcal{M}_{\overline{\mathcal{G}}_0}^{\pi}$, the eigenvalues of $\mathbf{E}(\bar{A}_i\bar{R}_i^{r,\ast})$ are given by $-\frac{1}{2}(\lambda_{u_1}^{\bar{A}_i}+\lambda_{u_2}^{\bar{A}_i})$ and $\frac{1}{2}(\lambda_{\bar{u}_i}^{\bar{A}_i}-\lambda_{u_\ell}^{\bar{A}_i})$, $\ell \in\{1,2\}$. Consequently, one can conclude that  $\mathrm{det}(\mathbf{E}_{\bar{A}\bar{R}^{r,\ast}})\neq 0$.

Since $\mathbf{M}_2$ is non-singular, one has
$\mathrm{det}(\mathbf{M}_2-\lambda I_{6N})=\mathrm{det}(\lambda^2 I_{3N}+\lambda k_{\bar{\omega}}J^{-1}+k_{\bar{R}}J^{-1}\mathbf{E}_{\bar{A}\bar{R}^{r,\ast}})$.
Let $\tilde{\mathbf{M}}=J(\mathbf{M}_2-\lambda I_{6N})=\lambda^2 J+\lambda k_{\bar{\omega}}I_{3N}+k_{\bar{R}}\mathbf{E}_{\bar{A}\bar{R}^{r,\ast}}$.
Then one can show that there exists $\lambda>0$ such that $\mathrm{det}(\mathbf{M}_2-\lambda I_{6N})=0$ by proving that there exists $\lambda>0$ such that $\mathrm{det}(\tilde{\mathbf{M}})=0$.

Let us consider the following continuous function:
\[f(\lambda)=\min_{\|z\|=1}  z^{\top}\tilde{\mathbf{M}}z.\]
Taking $z=[z_1^{\top}\ \cdots\ z_N^{\top}]^{\top}$ with $z_{p}=\mathbf{0}_3,\forall p\in\mathcal{M}_{\overline{\mathcal{G}}_0}^{I}$ and $z_{q}=\bar{u}_{q},\forall q\in\mathcal{M}_{\overline{\mathcal{G}}_0}^{\pi}$, one has
\[
\begin{array}{l} z^{\top}\tilde{\mathbf{M}} z=k_{\bar{R}}\sum\limits_{q\in\mathcal{M}_{\overline{\mathcal{G}}_0}^{\pi}}\bar{u}_{q}^{\top}\mathbf{E}(\bar{A}_{q}\bar{R}^{r,\ast}_{q})\bar{u}_{q}+\lambda k_{\bar{\omega}}+\lambda^2z^{\top}Jz.
\end{array}
\]
Using the fact that $\bar{A}_{q}\bar{R}^{r,\ast}_{q}=2\lambda^{\bar{A}_{q}}_{\bar{u}_{q}}\bar{u}_{q}\bar{u}_{q}^{\top}-\bar{A}_{q}$, we derive
$\mathbf{E}(\bar{A}_{q}\bar{R}^{r,\ast}_{q})=\lambda^{\bar{A}_{q}}_{\bar{u}_{q}}(I_3-\bar{u}_{q}\bar{u}_{q}^{\top})-\mathbf{E}(\bar{A}_{q})$, yielding $\bar{u}_{q}^{\top}\mathbf{E}(\bar{A}_{q}\bar{R}^{r,\ast}_{q})\bar{u}_{q}=-\frac{1}{2}(\mathbf{tr}(\bar{A}_{q})-\lambda^{\bar{A}_{q}}_{\bar{u}_{q}})<0$.
Consequently, $z^{\top}\tilde{\mathbf{M}}z<0$ for $\lambda=0$, allowing to conclude that $\tilde{\mathbf{M}}$ has at least one negative eigenvalue, and as such, $f(0)=\lambda_{\min}^{\mathbf{E}_{\bar{A}\bar{R}^{r,\ast}}}<0$.
Due to the fact that  $\lambda_{\min}^{\mathbf{E}_{\bar{A}\bar{R}^{r,\ast}}}\leq z^{\top}\mathbf{E}_{\bar{A}\bar{R}^{r,\ast}}z\leq \lambda^{\mathbf{E}_{\bar{A}\bar{R}^{r,\ast}}}_{\max},~\forall z\in\mathbb{S}^{3N-1}$, there must exist a real positive $\lambda_1$ such that $\lambda_1^2z^{\top}Jz+\lambda_1 k_{\bar{\omega}}>-k_{R^r}\lambda_{\min}^{\mathbf{E}_{\bar{A}\bar{R}^{r,\ast}}},~\forall z\in\mathbb{S}^{3N-1}$, guaranteeing that $f(\lambda_1)>0$.
Therefore, there exists $0<\lambda^{\ast}<\lambda_1$ such that $f(\lambda^{\ast})=0$.
Consequently, there exists a real positive eigenvalue $\lambda^{\ast}$ such that $0$ is the minimum eigenvalue of $\tilde{\mathbf{M}}$, implying that $\mathrm{det}(\mathbf{M}_{2}-\lambda^{\ast} I_{6N})=\mathrm{det}(J^{-1})\mathrm{det}(\tilde{\mathbf{M}})=0$.
Therefore, the diagonal block matrix of
the Jacobian matrix $\mathbf{M}_2$ has a real positive eigenvalue.
As a result, the undesired equilibria in $\mathcal{C}^{l,u}$ are unstable. According to the stable manifold theorem \cite{Perko_book}, the stable manifold of the tracking error subsystem \eqref{controlerrorcompact} associated to the undesired equilibria in $\mathcal{C}^{l,u}$ has a zero Lebesgue measure with respect to the subspace of the tracking error system.
Therefore, the stable manifold of the overall error system associated to the undesired equilibria in $\mathcal{C}^{l,u}$ has a zero Lebesgue measure with respect to the entire state space of the overall error system.
Consequently, the desired equilibrium $\mathcal{C}^{l,d}$ is almost globally attractive.
Together with the fact that $\mathcal{C}^{l,d}$ is stable, one concludes that $\mathcal{C}^{l,d}$ is almost globally asymptotically stable.

Finally, we prove the third item.
Due to the fact that the desired equilibrium $\mathcal{C}^{l,d}$ is almost globally asymptotically stable, it is clear that $\lim_{t\rightarrow\infty}\hat{R}_i\hat{R}_j^{\top}=I_3$, $(j,i)\in\overline{\mathcal{E}}_0$,  $\lim_{t\rightarrow\infty}(\hat{\omega}_i-\omega_0)=\mathbf{0}_3$, $\lim_{t\rightarrow\infty}\hat{R}_i^{
\top}R_i=I_3$ and $\lim_{t\rightarrow\infty}(\omega_i-R_i^{\top}\hat{R}_i\hat{\omega}_i)=\mathbf{0}_3$ for $i=1,\ldots,N$.
This yields
\begin{align*}
&\lim_{t\rightarrow\infty}(\omega_i-\omega_0)=
\lim_{t\rightarrow\infty}(\omega_i-\hat{\omega}_i)+\lim_{t\rightarrow\infty}(\hat{\omega}_i-\omega_0)\\
&=\lim_{t\rightarrow\infty}(\omega_i-R_i^{\top}\hat{R}_i\hat{\omega}_i+R_i^{\top}\hat{R}_i\hat{\omega}_i-\hat{\omega}_i)\\
&=\lim_{t\rightarrow\infty}(R_i^{\top}\hat{R}_i-I_3)\hat{\omega}_i=\mathbf{0}_3
\end{align*}
for $i=1,\ldots,N$.
Based on Assumption \ref{assumpgraph_original}, for the observer of the $i$-th rigid body ($i\in\mathcal{V}$), there exists a path, denoted as $(0,j_1,\ldots,j_k,i)$, from the leader $0$ to the $i$-th rigid body,
thereby one has
\begin{align*}
&\lim_{t\rightarrow\infty}\hat{R}_iR_0^{\top}=\lim_{t\rightarrow\infty}\hat{R}_i\hat{R}_{j_k}^{\top}\hat{R}_{j_k}\hat{R}_{j_{k-1}}^{\top}\cdots\hat{R}_{j_1}R_0^{\top}=I_3,
\end{align*}
leading to \[\lim_{t\rightarrow\infty}R_iR_0^{\top}=\lim_{t\rightarrow\infty}R_iR_i^{\top}\hat{R}_iR_0^{\top}=\lim_{t\rightarrow\infty}\hat{R}_iR_0^{\top}=I_3.\]
Consequently, one concludes that all the rigid body attitudes synchronize (almost globally) to the desired attitude $R_0$.

\section{Proof of Theorem \ref{stabilityforestimateconstant}} \label{appenE}

Let us prove the first item.
Consider the following Lyapunov function candidate:
\[
\setlength{\arraycolsep}{1pt}
\begin{array}{rcl}
V_2&=&\frac{1}{2}(k_{R^r}\mathbf{tr}(\mathbf{A}^r(I_{3N}-\tilde{R}^r))+(\tilde{\omega}^r_0)^{\top}\tilde{\omega}^r_0\\
\ &\ &+k_{\bar{R}}\mathbf{tr}(\bar{\mathbf{A}}(I_{3N}-\bar{R}^r))+\bar{\omega}^{\top}J\bar{\omega}),
\end{array}
\]
where $\mathbf{A}^r=\mathrm{diag}(A^r_1,\ldots,A^r_N)$.
Using the fact that $\lambda_{\min}^{\mathbf{E}(A)}=\frac{1}{2}(\mathbf{tr}(A)-\lambda_{\max}^{A^{\top}}),\forall A=A^{\top}>0$, one has $\lambda_{\min}^{\mathbf{E}(A^r_i)}>0$ and $\lambda_{\min}^{\mathbf{E}(\bar{A}_i)}>0$ since $A^r_k$ and $\bar{A}_i$ are positive definite.
Together with \eqref{e2.1}, one has $\mathbf{tr}(A^r_k(I_3-\tilde{R}^r_k))\geq 4\lambda_{\min}^{\mathbf{E}(A^r_k)}|\tilde{R}^r_k|_I^2$ for $k\in\mathcal{M}_{\overline{\mathcal{G}}_0}$ and $\mathbf{tr}(\bar{A}_i(I_3-\bar{R}_i^r))\geq  4\lambda_{\min}^{\mathbf{E}(\bar{A}_i)}|\bar{R}_i^r|_I^2$ for $i\in\mathcal{V}$, leading to $V_2$ positive definite.
Using \eqref{e2.10}, the time-derivative of $V_2$, along the trajectories of systems \eqref{estimatedcompactconstant} and  \eqref{controlerrorcompact1}, is given by
\[\dot{V}_2=-k_{\tilde{\omega}}(\tilde{\omega}^r_0)^{\top}((\Xi+LL^{\top})\otimes I_3)\tilde{\omega}^r_0-k_{\bar{\omega}}\|\bar{\omega}\|^2,\]
As per Lemma \ref{le1}, $\Xi+LL^{\top}$ is a positive definite matrix, resulting in $\dot{V}_2\leq 0$.
Consequently, the desired equilibrium $\tilde{
\mathcal{C}}^{r,d}$ is stable.

Barbalat Lemma is used to show that the trajectories of the error systems \eqref{estimatedcompactconstant} and  \eqref{controlerrorcompact1} converge to $\mathcal{C}^r$.
The proof proceeds by first establishing $\lim_{t\rightarrow\infty}\dot{V}_2=0$.
The function $V_2$ is non-increasing due to the fact that $\dot{V}_2\leq 0$.
Together with the fact that $V_2$ is positive definite, one concludes that $\lim_{t\rightarrow\infty}V_2$ exists and is finite.
The second time-derivative of $V_2$ is given by
\begin{equation}\label{ddot_V2}
\setlength{\arraycolsep}{1pt}
\begin{array}{rcl}
\ddot{V}_2&=&2(k_{\omega^r}k_{R^r}(\tilde{\omega}^r_0)^{\top}\mathbf{F}_1\tilde{\mathcal{L}}_2^r\Psi^r+k_{\omega^r}^2\|\mathbf{F}_1\tilde{\omega}^r_0\|^2\\
\ &\ &-k_{\bar{\omega}}\bar{\omega}^{\top}J^{-1}\Sigma\bar{\omega}+k_{\bar{\omega}}k_{\bar{R}}\bar{\omega}^{\top}J^{-1}\bar{\Psi}+k_{\bar{\omega}}^2\bar{\omega}^{\top}J^{-1}\bar{\omega}).
\end{array}
\end{equation}
Since $\dot{V}_2$ is negative semi-definite, one can conclude that $\tilde{\omega}^r_0$ and $\bar{\omega}$ are bounded. Consequently, through a simple signal chasing, one can show that all the terms in the right-hand side of \eqref {ddot_V2} are bounded. Hence, $\ddot{V}_2$ is bounded and consequently $\dot{V}_2$ is uniformly continuous. Combining this with the fact that $\lim_{t\rightarrow\infty}V_2$ exists and is finite, it follows that $\lim_{t\rightarrow\infty}\dot{V}_2=0$ as per Barbalat lemma.
Therefore, $\lim_{t\rightarrow\infty}\tilde{\omega}^r_0=\mathbf{0}_{3N}$ and $\lim_{t\rightarrow\infty}\bar{\omega}=\mathbf{0}_{3N}$.

Now, we show that $\lim_{t\rightarrow\infty}\dot{\tilde{\omega}}^r_0=\mathbf{0}_{3N}$ and $\lim_{t\rightarrow\infty}\dot{\bar{\omega}}=\mathbf{0}_{3N}$.
The second time-derivatives of $\tilde{\omega}^r_0$ and $\bar{\omega}$ are given by
\[
\begin{array}{l}
\ddot{\tilde{\omega}}^r_0
=-k_{R^r}\dot{\tilde{\mathcal{L}}}^r_2\Psi^r-k_{R^r}\tilde{\mathcal{L}}^r_2\dot{\Psi}^r+k_{\omega^r}k_{R^r}\mathbf{F}_1\tilde{\mathcal{L}}^r_2\Psi^r+k_{\omega^r}^2\mathbf{F}_1^2\tilde{\omega}^r_0,\\
\ddot{\bar{\omega}}=J^{-1}(\dot{\Sigma}\bar{\omega}-k_{\bar{R}}\dot{\bar{\Psi}}+(\Sigma-k_{\bar{\omega}}I_{3N})(\Sigma\bar{\omega}-k_{\bar{R}}\bar{\Psi}-k_{\bar{\omega}}\bar{\omega})),
\end{array}
\]
where $\dot{\Psi}^r=\mathrm{diag}(\mathbf{E}(A^r_1\tilde{R}^r_1),\ldots,\mathbf{E}(A^r_N\tilde{R}^r_N))\tilde{\omega}^r_0$, $\dot{\tilde{\mathcal{L}}}^r_2$ is obtained by removing the first three rows of $\dot{\tilde{\mathcal{L}}}^r$,
and $\dot{\tilde{\mathcal{L}}}^r\in\mathbb{R}^{3(N+1)\times 3N}$ is constructed by defining each $3$ by $3$ block $(\dot{\tilde{\mathcal{L}}}^r)_{ik}$ as
\[
(\dot{\tilde{\mathcal{L}}}^r)_{ik}=\left\{
\begin{array}{ll}
-\dot{\tilde{R}}_k^r,&k\in(\mathcal{M}_{\overline{\mathcal{G}}_0})_{i-1}^{+},\\
\mathbf{0}_{3\times 3},&\mathrm{otherwise},
\end{array}
\right.
\]
$\dot{\Sigma}=\mathrm{diag}(\dot{\Sigma}_1,\ldots,\dot{\Sigma}_N)$ with $\dot{\Sigma}_i$ being definied as
\[\begin{array}{rcl}
\dot{\Sigma}_i&=&(J_i\dot{\bar{\omega}}_i)^{\times}+(J_i\bar{\omega}_i^{\times}(\bar{R}_i^r)^{\top}\hat{\omega}_i)^{\times}-(J_i(\bar{R}_i^r)^{\top}\dot{\hat{\omega}}_i)^{\times}\\
\ &\ &+(\bar{\omega}_i^{\times}(\bar{R}_i^r)^{\top}\hat{\omega}_i)^{\times}J_i-((\bar{R}_i^r)^{\top}\dot{\hat{\omega}}_i)^{\times}J_i\\
\ &\ &+J_i(\bar{\omega}_i^{\times}(\bar{R}_i^r)^{\top}\hat{\omega}_i)^{\times}-J_i((\bar{R}_i^r)^{\top}\dot{\hat{\omega}}_i)^{\times}.
\end{array}\]
Since $\hat{\omega}_i,\bar{\omega}_i,\bar{R}_i^r$ are bounded, $\Sigma_i,\dot{\Sigma}_i$ are bounded, \textit{i.e.}, $\Sigma, \dot{\Sigma}$ are bounded.
As proven in the proof of Theorem \ref{estimationstability}, $\bar{\Psi},\dot{\bar{\Psi}}$ are bounded.
Hence, one can infer that $\ddot{\bar{\omega}}$ is bounded, implying that $\dot{\bar{\omega}}$ is uniformly continuous.
Following the same argument as in the proof of Theorem \ref{estimationstability}, one can conclude that $\dot{\tilde{\mathcal{L}}}_2^r$ and $\dot{\Psi}^r$ are bounded, which in turn implies that $\ddot{\tilde{\omega}}^r_0$ is bounded.
Thus, $\dot{\tilde{\omega}}^r_0$ is uniformly continuous.
This, with $\lim_{t\rightarrow\infty}\tilde{\omega}_0^r=\mathbf{0}_{3N}$ and $\lim_{t\rightarrow\infty}\bar{\omega}=\mathbf{0}_{3N}$, leads to $\lim_{t\rightarrow\infty}\dot{\tilde{\omega}}_0^r=\mathbf{0}_{3N}$ and $\lim_{t\rightarrow\infty}\dot{\bar{\omega}}=\mathbf{0}_{3N}$ according to Barbalat lemma.
It follows that $\lim_{t\rightarrow\infty}\tilde{\mathcal{L}}^r_2\Psi^r=\mathbf{0}_{3N}$ and $\lim_{t\rightarrow\infty}\bar{\Psi}=\mathbf{0}_{3N}$.
As per Lemma \ref{le2}, 
$\tilde{\mathcal{L}}^r_2$ is non-singular, leading to $\lim_{t\rightarrow\infty}\Psi^r=\mathbf{0}_{3N}$, \textit{i.e.}, $\lim_{t\rightarrow\infty}\psi(A^r_k\tilde{R}^r_k)=\mathbf{0}_3$. 
By virtue of \cite[Lemma 2]{Mayhew2011}, one concludes that $\tilde{R}^r_k$ converges to $\{I_3\}\cup\{\mathcal{R}(\pi,\tilde{u}^r)|\tilde{u}^r\in\mathcal{E}_{v}(A_i^r)\}$ for $k\in\mathcal{M}_{\overline{\mathcal{G}}_0}$.
As shown in the proof of Theorem \ref{estimationstability},  $\lim_{t\rightarrow\infty}\bar{\Psi}=\mathbf{0}_{3N}$ implies that $\bar{R}_i$ converges to $\{I_3\}\cup\{\mathcal{R}(\pi,\bar{u})|\bar{u}\in\mathcal{E}_{v}(\bar{A}_i)\}$ for $i\in\mathcal{V}$.
To sum up, the trajectories of the error systems \eqref{estimatedcompactconstant} and \eqref{controlerrorcompact1} converge to the set of equilibria $\mathcal{C}^r$.

Next, we prove the second item by showing that the undesired equilibria in $\mathcal{C}^{r,u}$ are unstable and the stable manifold associated to the undesired equilibria in $\mathcal{C}^{r,u}$ has zero Lebesgue measure.
To this end, we linearize the error systems \eqref{estimatedcompactconstant} and  \eqref{controlerrorcompact1}, and prove that the Jacobian matrix has at least one positive eigenvalue.

Consider the undesired equilibrium $(\tilde{R}^{r,\ast},\mathbf{0}_{3N},\bar{R}^{r,\ast},\mathbf{0}_{3N})$.
The linearization of the tracking error subsystem \eqref{controlerrorcompact1} around the undesired equilibria in $\mathcal{C}^{l,u}$ is given  by \eqref{linearizedobserver-1}, and its Jacobian matrix is $\mathbf{M}_2$ defined in \eqref{bfM2}.
Now, we linearize the estimation error subsystem \eqref{estimatedcompactconstant}.
Let $(\tilde{R}_k^{r,\ast})^{\top}\tilde{R}^r_k=\exp(\tilde{\mathbf{r}}_k^{\times})$, where $k\in\mathcal{M}_{\overline{\mathcal{G}}_0}$.
Using the fact that $\exp(\tilde{\mathbf{r}}_k^{\times})\approx I_3+\tilde{\mathbf{r}}_k^{\times}$
around $\tilde{\mathbf{r}}_k \approx 0$, one has $\tilde{R}^r_k\approx \tilde{R}_k^{r,\ast}(I_3+\tilde{\mathbf{r}}_k^{\times})$, where $k\in\mathcal{M}_{\overline{\mathcal{G}}_0}$.
Thus, the linearization of $\dot{\tilde{R}}^r_k=\tilde{R}^r_k(\tilde{\omega}^r_k)^{\times}$ is given by $\dot{\tilde{\mathbf{r}}}_k=\tilde{\omega}^r_k=\mathrm{Row}_k((\tilde{\mathcal{L}}^{r,\ast}_2)^{\top})\tilde{\omega}_0^r$, where
$\tilde{\mathcal{L}}^{r,\ast}_2$ is obtained by removing the first three rows of $\tilde{\mathcal{L}}^{r,\ast}$ with each $3$ by $3$ block $(\tilde{\mathcal{L}}^{r,\ast})_{ik}$ as
\begin{equation}\label{tildetildeL}
(\tilde{\mathcal{L}}^{r,\ast})_{ik}=\left\{
\begin{array}{ll}
-\tilde{R}_k^{r,\ast}, & k\in(\mathcal{M}_{\overline{\mathcal{G}}_0})^{+}_{i-1},  \\
I_3, & k\in(\mathcal{M}_{\overline{G}_0})^{-}_{i-1},\\
\mathbf{0}_{3\times 3}, &\mathrm{otherwise}.\\
\end{array}
\right.
\end{equation}
It follows from $\tilde{R}^r_k\approx\tilde{R}_k^{r,\ast}(I_3+\tilde{\mathbf{r}}_k^{\times})$
that $\psi(A^r_k\tilde{R}^r_k)\approx\mathbf{E}(A^r_k\tilde{R}_k^{r,\ast})\tilde{\mathbf{r}}_k$,
where $k\in\mathcal{M}_{\overline{\mathcal{G}}_0}$.
This leads to the following linearization of the angular velocity error dynamics around the undesired equilibrium in $\mathcal{C}^{r,u}$
\[
\setlength{\arraycolsep}{1pt}
\begin{array}{rcl}
\dot{\tilde{\omega}}^r_0&=&-k_{R^r}\sum\limits_{k=1}^{N}(\tilde{\mathcal{L}}^{r,\ast}_2)_{ik}\mathbf{E}(A^r_k\tilde{R}_k^{r,\ast})\tilde{\mathbf{r}}_k-k_{\omega^r}\mathrm{Row}_i(\mathbf{F}_1)\tilde{\omega}^r_0,
\end{array}
\]
Denoting
$\mathbf{E}_{A^r\tilde{R}^{r,\ast}}=\mathrm{diag}(\mathbf{E}(A^r_1\tilde{R}_1^{r,\ast}),\ldots,\mathbf{E}(A^r_N\tilde{R}_N^{r,\ast}))$, $\tilde{\mathbf{r}}=[\tilde{\mathbf{r}}_1^{\top}\ \cdots\ \tilde{\mathbf{r}}_N^{\top}]^{\top}$,
the linearization of the estimation error dynamics \eqref{estimatedcompactconstant}
around the undesired equilibrium in $\mathcal{C}^{r,u}$ is given by
\begin{equation}\label{linearizedobserver}
\setlength{\arraycolsep}{1pt}
\begin{array}{rcl}
\dot{\tilde{\mathbf{r}}}&=&(\tilde{\mathcal{L}}^{r,\ast}_2)^{\top}\tilde{\omega}^r_0,\\
\dot{\tilde{\omega}}^r_0&=&-k_{R^r}\tilde{\mathcal{L}}^{r,\ast}_2\mathbf{E}_{A^r\tilde{R}^{r,\ast}}\tilde{\mathbf{r}}-k_{\omega^r}\mathbf{F}_1\tilde{\omega}_0^r,\\
\end{array}
\end{equation}
and the correponding Jacobian matrix is given by
\begin{equation}\label{bfM1}
\mathbf{M}_{1}=\left[
\begin{array}{cc}
\mathbf{0}_{3N\times 3N}&(\tilde{\mathcal{L}}^{r,\ast}_2)^{\top}\\
-k_{R^r}\tilde{\mathcal{L}}^{r,\ast}_2\mathbf{E}_{A^r\tilde{R}^{r,\ast}}&-k_{\omega^r}\mathbf{F}_1\\
\end{array}
\right]_,\end{equation}
Hence, the Jacobian matrix of the linearization of \eqref{estimatedcompactconstant} and  \eqref{controlerrorcompact1} around the undesired equilibria is given by $\mathbf{M}=\mathrm{diag}(\mathbf{M}_1,\mathbf{M}_2)$, where $\mathbf{M}_1$ and $\mathbf{M}_2$ are defined in \eqref{bfM1} and \eqref{bfM2} respectively.

Following the procedure adopted in the proof of Theorem \ref{estimationstability}, one can show that $\mathbf{M}_1$ has at least one positive eigenvalue, implying that $\mathbf{M}$ has at least one positive eigenvalue.
Hence, the undesired equilibria in $\mathcal{C}^{r,u}$ are unstable.
As per the stable manifold theorem \cite{Perko_book}, the stable manifold  associated to the undesired equilibria in $\mathcal{C}^{r,u}$ has a zero Lebesgue measure. Consequently, the desired equilibrium $\mathcal{C}^{r,d}$ is almost globally attractive.
Together with the fact that $\mathcal{C}^{r,d}$ is stable, one concludes that $\mathcal{C}^{r,d}$ is almost globally asymptotically stable.

Finally, we prove the third item.
Since the desired equilibrium $\mathcal{C}^{r,d}$ is almost globally asymptotically stable, one has $\lim_{t\rightarrow\infty}\hat{R}_j^{\top}\hat{R}_i=I_3$ ($(j,i)\in\overline{\mathcal{E}}_0$), $\lim_{t\rightarrow\infty}\hat{\omega}_i=\mathbf{0}_{3}$, $\lim_{t\rightarrow\infty}\hat{R}_i^{\top}R_i=I_3$ and $\lim_{t\rightarrow\infty}(\omega_i-R_i^{\top}\hat{R}_i\hat{\omega}_i)=\mathbf{0}_{3}$ for $i\in\mathcal{V}$.
It follows that
\begin{align*}
&\lim_{t\rightarrow\infty}\omega_i=\lim_{t\rightarrow\infty}(\omega_i-R_i^{\top}\hat{R}_i\hat{\omega}_i)+\lim_{t\rightarrow\infty}R_i^{\top}\hat{R}_i\hat{\omega}_i=\mathbf{0}_{3}.
\end{align*}
As per Assumption \ref{assumpgraph_original}, for the observer of the $i$-th rigid body ($i\in\mathcal{V}$), there exists a path, denoted as $(0,j_1,\ldots,j_k,i)$, from the leader $0$ to the $i$-th rigid body.
This implies that
\begin{align*}
&\lim_{t\rightarrow\infty}R_0^{\top}R_i=\lim_{t\rightarrow\infty}R_0^{\top}\hat{R}_{j_1}\hat{R}_{j_1}^{\top}\hat{R}_{j_2}\cdots\hat{R}_{j_k}^{\top}\hat{R}_i\hat{R}_i^{\top}R_i=I_3,
\end{align*}
\textit{i.e.}, $\lim_{t\rightarrow\infty}R_i=R_0$ for $i\in\mathcal{V}$.
Therefore, all the rigid body attitudes synchronize to the desired attitude $R_0$ for almost all initial conditions.
\section{Proof of Theorem \ref{stabilityobserverless}} \label{appenF}
Let us prove the first item.
Consider the following Lyapunov function candidate:
\[
V_3=\frac{1}{2}(k_{\check{R}}\mathbf{tr}(\check{\mathbf{A}}(I_{3N}-\check{R}^r))+\check{\omega}_0^{\top}J\check{\omega}_0),
\]
where $\check{\mathbf{A}}=\mathrm{diag}(\check{A}_1,\ldots,\check{A}_N)$.
Due to the fact that $\check{A}_k=\check{A}_k^{\top}>0$ and  $\lambda_{\min}^{\mathbf{E}(\check{A}_k)}=\frac{1}{2}(\mathbf{tr}(\check{A}_k)-\lambda_{\max}^{\check{A}_k^{\top}}))$, one sees $\lambda_{\min}^{\mathbf{E}(\check{A}_k)}>0$ for $k\in\mathcal{M}_{\overline{\mathcal{G}}_0}$.
It follows from \eqref{e2.1} that $\mathbf{tr}(\check{A}_k(I_3-\check{R}^r_k))\geq 4\lambda_{\min}^{\mathbf{E}(\check{A}_k)}|\check{R}^r_k|_I^2$ for $k\in\mathcal{M}_{\overline{\mathcal{G}}_0}$, leading to $V_3>0$.
The time-derivative of $V_3$, along the trajectories of the synchronization error system \eqref{compactobserverless}, is given by $\dot{V}_3=-k_{\check{\omega}}\|\check{\omega}_0\|^2\leq 0$.
Hence, the desired equilibrium $\check{\mathcal{C}}^{d}$ is stable.

By virtue of LaSalle's invariance principle, all the trajectories of the error system  \eqref{compactobserverless} must converge to largest invariant set characterized by $\dot{V}_3=0$.
One can infer from $\dot{V}_3=0$ that $\check{\omega}_0=\mathbf{0}_{3N}$, which in turn implies that $\dot{\check{\omega}}_0=\mathbf{0}_{3N}$.
This leads to  $\check{\mathcal{L}}^r_2\check{\Psi}=\mathbf{0}_{3N}$.
Since Lemma \ref{le2} shows that $\check{\mathcal{L}}^r_2$ is non-singular, one has $\check{\Psi}=\mathbf{0}_{3N}$, \textit{i.e.}, $\psi(\check{A}_k\check{R}^r_k)=\mathbf{0}_3$, or equivalently, $\check{A}_k\check{R}^r_k=(\check{R}^r_k)^{\top}\check{A}_k$.
Hence, According to \cite[Lemma 2]{Mayhew2011}, $\check{R}^r_k$ converges to $\{I_3\}\cup\{\mathcal{R}(\pi,\check{u})|\check{u}\in\mathcal{E}_{v}(\check{A}_i)\}$, where $k\in\mathcal{M}_{\overline{\mathcal{G}}_0}$.
Consequently, the trajectories of the error systems \eqref{compactobserverless} converge to the set of equilibria $\check{\mathcal{C}}$.

Let us prove the second item now.
To show that the desired equilibrium $\check{\mathcal{C}}^{d}$ is almost globally asymptotically stable, we prove that the undesired equilibria in $\check{\mathcal{C}}^{u}$ are unstable and their associated stable manifold has zero Lebesgue measure.
To do so, we linearize the error system \eqref{compactobserverless}, and prove that the Jacobian matrix has at least one positive eigenvalue.

Consider the undesired equilibrium  $(\check{R}^{r,\ast},\mathbf{0}_{3N})$. 
Let $(\check{R}_k^{r,\ast})^{\top}\check{R}^r_k=\exp(\check{\mathbf{r}}_k^{\times})$ for $k\in\mathcal{M}_{\overline{\mathcal{G}}_0}$.
Since $\exp(\check{\mathbf{r}}_k^{\times})\approx I_3+\check{\mathbf{r}}_k^{\times}$ for $\check{\mathbf{r}}_k\approx 0$, one has $\check{R}^r_k\approx\check{R}_k^{r,\ast}(I_3+\check{\mathbf{r}}_k^{\times})$ for $k\in\mathcal{M}_{\overline{\mathcal{G}}_0}$.
Then the linearization of $\dot{\check{R}}^r_k=\check{R}^r\check{\omega}_k^{\times}$ around $\check{R}_k^{r,\ast}$ is given by $\dot{\check{\mathbf{r}}}_k=\check{\omega}_k=\mathrm{Row}_k((\check{\mathcal{L}}^{r,\ast}_2)^{\top})\check{\omega}_0$, where $\check{\mathcal{L}}^{r,\ast}_2$ is obtained by removing the first three rows of $\check{\mathcal{L}}^{r,\ast}$ defining by each $3$-by-$3$ block $(\check{\mathcal{L}}^{r,\ast})_{ik}$ to be
\begin{equation}\label{barbarL}
(\check{\mathcal{L}}^{r,\ast})_{ik}=\left\{
\begin{array}{ll}
-\check{R}_k^{r,\ast}, & k\in(\mathcal{M}_{\overline{\mathcal{G}}_0})^{+}_{i-1},  \\
I_3, & k\in(\mathcal{M}_{\overline{\mathcal{G}}_0})^{-}_{i-1},\\
\mathbf{0}_{3\times 3}, &\mathrm{otherwise}.\\
\end{array}
\right.
\end{equation}
One can show that $\psi(\check{A}_k\check{R}^r_k)\approx\mathbf{\mathbf{E}}(\check{A}_k\check{R}_k^{r,\ast})\check{\mathbf{r}}_k$,
where $k\in\mathcal{M}_{\overline{\mathcal{G}}_0}$.
Hence, the linearization \eqref{angularerror} around the undesired equilibrium in $\check{\mathcal{C}}^{u}$ is given by
\[
J_i\dot{\check{\omega}}_{i,0}=-k_{\check{R}}\sum\limits_{k=1}^{N}(\check{\mathcal{L}}^{r,\ast}_2)_{i,k}\mathbf{E}(\check{A}_k\check{R}_k^{r,\ast})\check{\mathbf{r}}_k-k_{\check{\omega}}\check{\omega}_{i,0}.
\]
Let $\check{\mathbf{r}}=[\check{\mathbf{r}}_1^{\top}\ \cdots\ \check{\mathbf{r}}_N^{\top}]^{\top}$, $\mathbf{E}_{\check{A}\check{R}^{r,\ast}}=\mathrm{diag}(\mathbf{E}(\check{A}_1\check{R}_1^{r,\ast}),$
$\ldots,\mathbf{E}(\check{A}_N\check{R}_N^{r,\ast}))$.
The linearization of \eqref{compactobserverless} around the undesired equilibrium in $\check{\mathcal{C}}^{u}$ is given by
\begin{equation}\label{linearizationobserverless}
\setlength{\arraycolsep}{1pt}
\begin{array}{rcl}
\dot{\check{\mathbf{r}}}&=&(\check{\mathcal{L}}^{r,\ast}_2)^{\top}\check{\omega}_0,\\
\dot{\check{\omega}}_0&=&-k_{\check{R}}J^{-1}\check{\mathcal{L}}^{r,\ast}_2\mathbf{E}_{\check{A}\check{R}^{r,\ast}}\check{\mathbf{r}}-k_{\check{\omega}}J^{-1}\check{\omega}_0,
\end{array}
\end{equation}
where the Jacobian matrix is as follows:
\[
\bar{\mathbf{M}}=\left[
\begin{array}{cc}
\mathbf{0}_{3N\times 3N} & (\check{\mathcal{L}}^{r,\ast}_2)^{\top} \\
-k_{{\check{R}^r}}J^{-1}\check{\mathcal{L}}^{r,\ast}_2\mathbf{E}_{\check{A}\check{R}^{r,\ast}}&-k_{\check{\omega}}J^{-1}
\end{array}
\right]_.
\]
Following the procedure adopted in the proof of Theorem \ref{estimationstability}, one can show that $\bar{\mathbf{M}}$ has at least one positive eigenvalue.
This implies that the undesired equilibria in $\check{\mathcal{C}}^{u}$ are unstable.
Consequently, the stable manifold associated to the undesired equilibria in $\check{\mathcal{C}}^{u}$ has zero Lebesgue measure as per the stable manifold theorem. Consequently, the desired equilibrium $\check{\mathcal{C}}^{d}$ is almost globally attractive.
Combining with the fact that $\check{\mathcal{C}}^{d}$ is stable, one concludes that $\check{\mathcal{C}}^{d}$ is almost globally asymptotically stable.

Finally, let us prove the third item.
It is clear that $\lim_{t\rightarrow\infty}R_j^{\top}R_i=I_3$, $\forall (j,i)\in\overline{\mathcal{E}}_0$, $\lim_{t\rightarrow\infty}\omega_i=\lim_{t\rightarrow\infty}\check{\omega}_{i,0}=\mathbf{0}_{3N}$, $\forall i\in\mathcal{V}$ due to the fact that the desired equilibrium $\check{\mathcal{C}}^d$ is almost globally asymptotically stable.
Under Assumption \ref{assumpgraph_original}, for the $i$-th
rigid body ($i\in\mathcal{V}$), there exists a path, denoted as $(0,j_1,\ldots,j_k,i)$ from the leader $0$ to the $i$-th rigid body.
Hence, one has
\begin{align*}
&\lim_{t\rightarrow\infty}R_0^{\top}R_i=\lim_{t\rightarrow\infty}R_0^{\top}R_{j_1}R_{j_1}^{\top}R_{j_2}\cdots R_{j_k}^{\top}R_i=I_3,
\end{align*}
leading to $\lim_{t\rightarrow\infty}R_i=R_0$ for $i\in\mathcal{V}$.
As a result, all the rigid body attitudes synchronize to the desired attitude $R_0$ for almost all initial conditions.


\begin{thebibliography}{10}
\providecommand{\url}[1]{#1}
\csname url@samestyle\endcsname
\providecommand{\newblock}{\relax}
\providecommand{\bibinfo}[2]{#2}
\providecommand{\BIBentrySTDinterwordspacing}{\spaceskip=0pt\relax}
\providecommand{\BIBentryALTinterwordstretchfactor}{4}
\providecommand{\BIBentryALTinterwordspacing}{\spaceskip=\fontdimen2\font plus
\BIBentryALTinterwordstretchfactor\fontdimen3\font minus
  \fontdimen4\font\relax}
\providecommand{\BIBforeignlanguage}[2]{{%
\expandafter\ifx\csname l@#1\endcsname\relax
\typeout{** WARNING: IEEEtran.bst: No hyphenation pattern has been}%
\typeout{** loaded for the language `#1'. Using the pattern for}%
\typeout{** the default language instead.}%
\else
\language=\csname l@#1\endcsname
\fi
#2}}
\providecommand{\BIBdecl}{\relax}
\BIBdecl

\bibitem{Liyiliang2026}
Y.~Li, J.-e. Feng, and A.~Tayebi, ``A leader-follower approach for the attitude
  synchronization of multiple rigid body systems on {S}{O}(3),'' in
  \emph{arXiv:2601.19086}, 2026.

\bibitem{abdessameud2013motion}
A.~Abdessameud and A.~Tayebi, \emph{Motion Coordination for VTOL Unmanned
  Aerial Vehicles: Attitude Synchronisation and Formation Control}.\hskip 1em
  plus 0.5em minus 0.4em\relax Springer Science \& Business Media, 2013.

\bibitem{zhao2021data}
W.~Zhao, H.~Liu, and F.~L. Lewis, ``Data-driven fault-tolerant control for
  attitude synchronization of nonlinear quadrotors,'' \emph{IEEE Transactions
  on Automatic control}, vol.~66, no.~11, pp. 5584--5591, 2021.

\bibitem{guo2022distributed}
Z.~Guo, H.~Li, H.~Ma, and W.~Meng, ``Distributed optimal attitude
  synchronization control of multiple quavs via adaptive dynamic programming,''
  \emph{IEEE Transactions on Neural Networks and Learning Systems}, vol.~35,
  no.~6, pp. 8053--8063, 2022.

\bibitem{tang2022event}
Y.~Tang, X.~Jin, Y.~Shi, and W.~Du, ``Event-triggered attitude synchronization
  of multiple rigid body systems with velocity-free measurements,''
  \emph{Automatica}, vol. 143, p. 110460, 2022.

\bibitem{ren2009distributed}
W.~Ren, ``Distributed cooperative attitude synchronization and tracking for
  multiple rigid bodies,'' \emph{IEEE Transactions on Control Systems
  Technology}, vol.~18, no.~2, pp. 383--392, 2009.

\bibitem{jin2025event}
X.~Jin, Y.~Tang, Y.~Shi, X.~Wu, and W.~Lin, ``Event-triggered attitude
  consensus of multiple rigid body systems with prescribed performance,''
  \emph{IEEE Transactions on Cybernetics}, vol.~55, no.~1, pp. 307--320, 2025.

\bibitem{wei2018finite}
J.~Wei, S.~Zhang, A.~Adaldo, J.~Thunberg, X.~Hu, and K.~H. Johansson,
  ``Finite-time attitude synchronization with distributed discontinuous
  protocols,'' \emph{IEEE Transactions on Automatic Control}, vol.~63, no.~10,
  pp. 3608--3615, 2018.

\bibitem{jin2021event}
X.~Jin, Y.~Shi, Y.~Tang, H.~Werner, and J.~Kurths, ``Event-triggered fixed-time
  attitude consensus with fixed and switching topologies,'' \emph{IEEE
  transactions on automatic control}, vol.~67, no.~8, pp. 4138--4145, 2021.

\bibitem{dimarogonas2009leader}
D.~V. Dimarogonas, P.~Tsiotras, and K.~J. Kyriakopoulos, ``Leader-follower
  cooperative attitude control of multiple rigid bodies,'' \emph{Systems \&
  Control Letters}, vol.~58, no.~6, pp. 429--435, 2009.

\bibitem{meng2010distributed}
Z.~Meng, W.~Ren, and Z.~You, ``Distributed finite-time attitude containment
  control for multiple rigid bodies,'' \emph{Automatica}, vol.~46, no.~12, pp.
  2092--2099, 2010.

\bibitem{zou2016distributed}
A.-M. Zou, A.~H. de~Ruiter, and K.~D. Kumar, ``Distributed finite-time
  velocity-free attitude coordination control for spacecraft formations,''
  \emph{Automatica}, vol.~67, pp. 46--53, 2016.

\bibitem{abdessameud2009attitude}
A.~Abdessameud and A.~Tayebi, ``Attitude synchronization of a group of
  spacecraft without velocity measurements,'' \emph{IEEE Transactions on
  Automatic Control}, vol.~54, no.~11, pp. 2642--2648, 2009.

\bibitem{abdessameud2012attitude}
A.~Abdessameud, A.~Tayebi, and I.~G. Polushin, ``Attitude synchronization of
  multiple rigid bodies with communication delays,'' \emph{IEEE Transactions on
  Automatic Control}, vol.~57, no.~9, pp. 2405--2411, 2012.

\bibitem{huang2021global}
Y.~Huang and Z.~Meng, ``Global finite-time distributed attitude synchronization
  and tracking control of multiple rigid bodies without velocity
  measurements,'' \emph{Automatica}, vol. 132, p. 109796, 2021.

\bibitem{he2025leader}
C.~He and K.~W.~S. Au, ``Leader-following consensus of multiple rigid body
  systems by bounded attitude feedback,'' \emph{IEEE Transactions on Control of
  Network Systems}, vol.~12, no.~2, pp. 1406--1414, 2025.

\bibitem{cai2016leader}
H.~Cai and J.~Huang, ``Leader-following attitude consensus of multiple rigid
  body systems by attitude feedback control,'' \emph{Automatica}, vol.~69, pp.
  87--92, 2016.

\bibitem{koditschek1989application}
D.~Koditschek, ``The application of total energy as a {Lyapunov} function for
  mechanical control systems,'' \emph{Contemporary Mathematics, American
  Mathematical Society, 1989}, vol.~97, 02 1989.

\bibitem{Bhat_SCL2000}
S.~P. Bhat and D.~S. Bernstein, ``A topological obstruction to continuous
  global stabilization of rotational motion and the unwinding phenomenon,''
  \emph{Systems \& Control Letters}, vol.~39, no.~1, pp. 63--70, 2000.

\bibitem{sarlette2009autonomous}
A.~Sarlette, R.~Sepulchre, and N.~E. Leonard, ``Autonomous rigid body attitude
  synchronization,'' \emph{Automatica}, vol.~45, no.~2, pp. 572--577, 2009.

\bibitem{thunberg2016consensus}
J.~Thunberg, J.~Goncalves, and X.~Hu, ``Consensus and formation control on
  ${S}{E} (3)$ for switching topologies,'' \emph{Automatica}, vol.~66, pp.
  109--121, 2016.

\bibitem{deng2021attitude}
J.~Deng, L.~Wang, and Z.~Liu, ``Attitude synchronization and rigid formation of
  multiple rigid bodies over proximity networks,'' \emph{Automatica}, vol. 125,
  p. 109388, 2021.

\bibitem{peng2018specified}
X.~Peng, Z.~Geng, and J.~Sun, ``The specified finite-time distributed
  observers-based velocity-free attitude synchronization for rigid bodies on
  ${S}{O} (3)$,'' \emph{IEEE Transactions on Systems, Man, and Cybernetics:
  Systems}, vol.~50, no.~4, pp. 1610--1621, 2018.

\bibitem{zou2018velocity}
Y.~Zou and Z.~Meng, ``Velocity-free leader--follower cooperative attitude
  tracking of multiple rigid bodies on ${S}{O} (3)$,'' \emph{IEEE Transactions
  on Cybernetics}, vol.~49, no.~12, pp. 4078--4089, 2019.

\bibitem{Tron_TAC2012}
R.~Tron, B.~Afsari, and R.~Vidal, ``Riemannian consensus for manifolds with
  bounded curvature,'' \emph{IEEE Transactions on Automatic Control}, vol.~58,
  no.~4, pp. 921--934, 2013.

\bibitem{markdahl2021synchronization}
J.~Markdahl, ``Synchronization on {Riemannian} manifolds: {Multiply} connected
  implies multistable,'' \emph{IEEE Transactions on Automatic Control},
  vol.~66, no.~9, pp. 4311--4318, 2021.

\bibitem{Bou_Ber_Tayebi_TAC2026}
M.~Boughellaba, S.~Berkane, and A.~Tayebi, ``Global attitude synchronization
  for heterogeneous multiagent systems on $\text{SO}(3)$,'' \emph{IEEE
  Transactions on Automatic Control}, vol.~71, no.~7, pp. 4558--4573, 2026.

\bibitem{Arcak2008}
H.~Bai, M.~Arcak, and J.~T. Wen, ``Rigid body attitude coordination without
  inertial frame information,'' \emph{Automatica}, vol.~44, no.~12, pp.
  3170--3175, 2008.

\bibitem{Bough_Tay_2025}
M.~Boughellaba and A.~Tayebi, ``Distributed attitude estimation for multiagent
  systems on {S}{O}(3),'' \emph{IEEE Transactions on Automatic Control},
  vol.~70, no.~1, pp. 657--664, 2025.

\bibitem{shivakumar1974sufficient}
P.~Shivakumar and K.~H. Chew, ``A sufficient condition for nonvanishing of
  determinants,'' \emph{Proceedings of the American mathematical society}, pp.
  63--66, 1974.

\bibitem{horn2012matrix}
R.~A. Horn and C.~R. Johnson, \emph{Matrix analysis}.\hskip 1em plus 0.5em
  minus 0.4em\relax Cambridge university press, 2012.

\bibitem{bough_Berk_Tay_Arxiv2025}
M.~Boughellaba, S.~Berkane, and A.~Tayebi, ``Global attitude synchronization
  for heterogeneous multi-agent systems on {S}{O}(3),''
  \emph{arXiv:2501.11188}, 2025.

\bibitem{BerkaneSoulaimanephdtheis}
S.~Berkane, ``Hybrid attitude control and estimation on {S}{O}(3),'' Ph.D.
  dissertation, The University of Western Ontario (Canada), 2017.

\bibitem{Mayhew2011}
C.~G. Mayhew and A.~R. Teel, ``Synergistic potential functions for hybrid
  control of rigid-body attitude,'' in \emph{2011 American Control Conference
  (ACC)}, San Francisco, CA, USA, 2011, pp. 875--880.

\bibitem{Perko_book}
L.~Perko, \emph{Differential Equations and Dynamical Systems}, 3rd~ed.\hskip
  1em plus 0.5em minus 0.4em\relax Springer, 2000.

\end{thebibliography}

\end{document}